\documentclass[prl, reprint, superscriptaddress, amsmath,amssymb, aps]{revtex4-2}

\usepackage{amsthm}
\newtheorem{theorem}{Theorem}
\newtheorem{corollary}{Corollary}
\newtheorem{lemma}{Lemma}
\newtheorem{definition}{Definition}
\newtheorem{remark}{Remark}
\usepackage{tcolorbox}
\usepackage{esint}
\usepackage{bbold}
\usepackage{amssymb}
\usepackage{graphicx}
\usepackage{textcomp}
\usepackage{relsize}
\usepackage{url}
\usepackage[colorlinks=true, allcolors=camblue]{hyperref}
\usepackage{blkarray}
\usepackage{cancel}
\usepackage{tikz}
\usepackage{natbib}
\usepackage{stmaryrd}
\usepackage[utf8]{inputenc}
\usepackage[T1]{fontenc}
\usepackage[caption=false]{subfig}
\usetikzlibrary{quantikz2}
\usepackage{multirow}
\usepackage{mathtools}
\mathtoolsset{showonlyrefs}

\definecolor{camblue}{cmyk}{0.2443, 0.0000, 0.1250, 0.3098}

\usepackage{xcolor}
\usepackage{lipsum}

\newcommand{\Hcal}{\mathcal H}
\newcommand{\Kcal}{\mathcal K}
\newcommand{\Mcal}{\mathfrak M}
\newcommand{\Tr}{\operatorname{Tr}}

\newcommand{\supp}{\operatorname{supp}}
\newcommand{\Sym}{\operatorname{Sym}}
\newcommand{\GSym}{\operatorname{GSym}}

\newcommand{\Com}{\operatorname{Com}}

\begin{document}

\title{Exponential de Finetti Theorems for Fermionic Gaussian States}
\author{Jędrzej Burkat}
\email{jbb55@cam.ac.uk}
\affiliation{Cavendish Laboratory, Department of Physics, University of Cambridge, CB3 0HE, UK}
\author{Michał Studziński}
\affiliation{International Centre for Theory of Quantum Technologies, University of Gdańsk, 80-309, Poland}
\author{Sergii Strelchuk}
\affiliation{Department of Computer Science, University of Oxford, OX1 3QD, UK}

\begin{abstract}

We prove an exponential variant of the Gaussian de Finetti theorem: the subsystems of permutation-invariant, free-fermionic Gaussian states are well-approximated by convex combinations of almost-i.i.d. states that are Gaussian on subsets of their parts. Our result provides an error bound between the original state and its approximants that decays exponentially in the number of unconstrained parts, becoming super-exponential when the subsystem under consideration is small. The dimensional penalty of our bound is polylogarithmic in the local Hilbert space dimension, an exponential improvement over the standard de Finetti theorem of [Nat. Phys. 3, 645-649]. In the fully i.i.d. limit, our bound recovers the Gaussian de Finetti theorem of  [arXiv:2603.12392]. Previous works considered Gaussian-symmetric states, which are supported on the trivial irrep of the tensor matchgate representation. We extend these to a broader class of Gaussian-invariant states containing, for example, i.i.d. copies of single-replica mixed Gaussian states. We show that Gaussian-invariant states are precisely the partial traces of Gaussian-symmetric states on locally enlarged replicas, and always admit a purification into a larger Gaussian-symmetric state. This extends de Finetti theorems to the full set of Gaussian-invariant states, with only a polynomial overhead in the dimensional penalty of the error bound.
\end{abstract}

\maketitle

\paragraph{Introduction.} In quantum information theory, de Finetti theorems formalise the principle that when a large composite system obeys a symmetry, then its individual subsystems are largely uncorrelated or independent. In the finite-dimensional setting, both notions carry an elegant operational meaning: the canonical symmetry for a $k$-partite state $\rho$ is invariance under permutations of its parts, whereas independence is captured by tensor products of independent and identically distributed (i.i.d.) states $\sigma^{\otimes k}$. Indeed, the quantum de Finetti theorem \cite{Konig_2005, Christandl_2007,Konig_2007} guarantees that small, $k$-sized subsystems $\Tr_m(\rho^{k+m})$ of an overall $(k+m)$-partite permutation-invariant state $\rho^{k+m}$ can be expressed as a statistical mixture of i.i.d. states $\sigma^{\otimes k}$ up to a favourable error bound. This has led to a wide range of successful applications \cite{Doherty_2002, Brandao_2011, Brandao_2013, Christandl_2007}. However, areas such as quantum cryptography \cite{Renner_2007, Leverrier_2017} impose a much more stringent requirement on the precision of the estimate. Furthermore, in many physical scenarios, one may also wish to approximate a large subsystem, so as to make inferences about its global properties \cite{Lewin_2013, Rougerie_2020, Gluza_2016, Gogolin_2016}. Both of these requirements are satisfied by approximating the subsystem with almost-i.i.d. states \cite{Renner_2007}, which are only guaranteed to be i.i.d. on subsets of their parts. 

Given the success of de Finetti theorems for permutational invariance, it is natural to ask about scenarios where additional information about the symmetry is available. For example, if the overall state is invariant under a larger group $G\supseteq S_{k+m}$, then one may expect that this additional structure may lead to better approximations. This is particularly important in situations where the local subsystems become large and standard approaches struggle to provide good convergence. This has led to successes in continuous-variable systems \cite{Leverrier_2009, Krumnow_2017, Leverrier_2018}, as well as for stabilizer states \cite{Gross_2021} (see also \cite{wang_2026}). Recent work has considered scenarios where in addition to satisfying permutation invariance, the overall state is known to be free-fermionic, or Gaussian. Much of this has been made possible thanks to the newly developed theory of the matchgate commutant \cite{Sierant_2026, braccia_2026, lastres_2026}, which completely characterises the form of unitaries that leave the state invariant. Notably, it contains the symmetric group algebra $\mathbb{C}(S_k)$ as a subalgebra, and so assuming Gaussian symmetries can be seen as a generalisation of permutation invariance for free-fermionic systems.

In this article, we prove two generalisations. First, we extend the free-fermionic de Finetti theorem to approximation by almost-i.i.d. states, which are Gaussian on subsets of their parts. This yields an exponential de Finetti theorem for Gaussian-symmetric states, exponentially improving the error bound of \cite{Sierant_2026} and avoiding issues caused by large local dimensions in \cite{Renner_2007}. Second, we extend the result to a larger symmetry of Gaussian-invariance, which roughly corresponds to globally Gaussian, permutation-invariant states whose constituents are not necessarily pure. Our purification theorem shows that any such state can be purified into a larger Gaussian-symmetric state, extending the applicability of Gaussian de Finetti theorems to a wider family of states.

\paragraph{Definitions.} We first introduce Gaussian-symmetric and Gaussian-invariant states, which form the analogue of the symmetric and permutation-invariant states for the replicated matchgate setting:

\begin{definition} \label{def:gsym}
    Let $\Hcal$ be a $2^{n}$-dimensional Hilbert space of $n$ qubits. The Gaussian-symmetric subspace of $\Hcal^{\otimes k}$, written $\GSym^k(\Hcal)$, is the space spanned by all $k$-fold tensor products (replicas) of $n$-qubit Gaussian states. Equivalently, by \cite{Sierant_2026} it is the subspace of $\Hcal^{\otimes k}$ annihilated by all the bridge operators $\Lambda_{ab}$:
    \begin{align}
    \GSym^k(\Hcal) &= \operatorname{span} \left \{ |\psi\rangle \in \Hcal^{\otimes k} : \Lambda_{ab} |\psi\rangle = 0 \ \, \forall a<b \right \} \\
    &= \operatorname{span} \left \{ (U|\mathbf{0} \rangle)^{\otimes k} \in \Hcal^{\otimes k} :  U \in \mathfrak{M}_n \right \},
    \end{align}
    where $|\mathbf{0}\rangle = |0^n \rangle$ is an $n$-qubit vacuum state, $\mathfrak{M}_n$ is the $n$-qubit matchgate group, and the bridge operators $\Lambda_{ab}$ are defined as:
    \begin{equation}
    \Lambda_{ab} = \sum_{\mu = 1}^{2n} \gamma^{(a)}_\mu \gamma^{(b)}_\mu, \quad 1 \leq a < b \leq k.
    \end{equation}
    In our chosen (ungraded) convention, $\gamma^{(a)}_\mu$ are Majorana operators acting as $\gamma_\mu$ on the $a$-th replica and as identity elsewhere. They satisfy the commutation relations $[\gamma^{(a)}_\mu, \gamma^{(b)}_\nu] = 0$ for $a \neq b$ and $\{ \gamma^{(a)}_\mu, \gamma^{(a)}_\nu \} = 2 \delta_{\mu \nu} \mathbb{1}$ for $a = b$. By Remark \ref{rem:graded-ungraded-bridges} the kernels of graded and ungraded bridge operators coincide, so the definition of $\GSym^k(\Hcal)$ is independent of this choice.
\end{definition}

\begin{definition} \label{def:gaussian-invariant}
Let $\rho$ be a density operator on $\Hcal^{\otimes k}$. We say that $\rho$ is Gaussian-invariant if and only if it satisfies: 
\begin{equation}
[\rho, \Lambda_{ab}] = 0, \qquad 1 \leq a < b \leq k,
\end{equation} where $\Lambda_{ab}$ are as given in Definition \ref{def:gsym}. If $\supp(\rho) \subseteq \GSym^k(\Hcal)$ then $\rho$ is Gaussian-invariant, though the converse is not true in general.
\end{definition}

Our results concerning partial traces and purifications on enlarged spaces $\Hcal \otimes \Kcal$ make use of pair-parity operators $Q_{ab}$. These objects enable us to write each enlarged bridge operator $\bar{\Lambda}_{ab}$ on $(\Hcal \otimes \Kcal)^{\otimes k}$ as a sum of two bridge operators, acting independently on $\Hcal^{\otimes k}$ and its copy $\Kcal^{\otimes k}$. The pair-parity operators are defined as follows:

\begin{definition}
A pair-parity operator $Q_{ab}$ is a unitary operator acting on replicas $a$ and $b$ of $\Hcal^{\otimes k}$, defined as:
\begin{equation}
    Q_{ab} = \Gamma^{(a)} \Gamma^{(b)},
\end{equation}
where $\Gamma^{(a)} = (-i)^n \prod_{\mu = 1}^{2n} \gamma^{(a)}_\mu = \prod_{j=1}^n Z^{(a)}_j$ is the parity operator acting on replica $a$ and as identity elsewhere. 
\end{definition}
Since each $\Gamma^{(a)}$ is a product of all Majorana operators, it follows that $\{ \Gamma^{(a)}, \Lambda_{ab} \} = 0$ and $[Q_{ab}, \Lambda_{ab}] = 0$. For our exponential de Finetti theorem, we make use of the notion of ${k \choose m}$-i.i.d., or almost-i.i.d. states, which were introduced in \cite{Renner_2007}:

\begin{definition}
    Let $\nu = |\nu \rangle \langle \nu |$ be a rank-one projector on $\Hcal$. A vector $|\Psi \rangle \in \Hcal^{\otimes k}$ is called ${k \choose m}$-i.i.d. in $\nu$ if there exists a permutation $\Pi \in S_k$ such that:
    \begin{equation}
    (\nu^{\otimes m} \otimes \mathbb{1}^{\otimes k - m}) \Pi |\Psi \rangle = \Pi |\Psi \rangle
    \end{equation}
    Intuitively, this means that the state $| \Psi \rangle$ is of the form $|\nu\rangle $ on at least $m$ copies out of $k$ total replicas of $\Hcal$. 
\end{definition}

To construct approximants for the reduced state, we require projectors onto the spaces of ${k \choose k-r}$-i.i.d. states, where $0 \leq r \leq k - 1$. The parameter $r$ roughly gives the maximum number of replicas which are not identical in the approximant, i.e. if $r=0$ then the approximant is fully i.i.d., and if $r=k-1$ then it is not guaranteed to contain any two identical copies. The definition of the projectors in the replicated matchgate setting cleanly carries over from \cite{Renner_2007}:

\begin{definition} \label{def:iid}
    Let $\nu_0 = |\mathbf{0} \rangle \langle \mathbf{0} |$ be a fixed, rank-one projector on $\Hcal$. For $k, r\in \mathbb{N}$,  $0 \leq r \leq k - 1$, and any matchgate unitary $U \in \mathfrak{M}_n$, an operator $P_U^{k, r}$ is defined to be a projector onto the subspace of $\Hcal^{\otimes k}$ spanned by all ${k \choose k - r}$-i.i.d. vectors in $\nu = |\nu \rangle \langle \nu | = U \nu_0 U^\dag = U |\mathbf{0} \rangle \langle \mathbf{0} | U^\dag$. In particular,
    \begin{equation}
    P_U^{k, r} = U^{\otimes k} P_{\mathbb{1}}^{k, r} (U^\dag)^{\otimes k}.
    \end{equation}
    \noindent When $r=0$, the projector $P^{k, 0}_{\mathbb{1}} = |\mathbf{0} \rangle \langle \mathbf{0} |^{\otimes k}$.
\end{definition}

\

\paragraph{An Exponential Gaussian de Finetti Theorem.} We now present our first main result, which adapts the exponential de Finetti theorem of \cite{Renner_2007} to the Gaussian-symmetric setting. 

\begin{theorem}
\label{thm:fermionic-gaussian-definetti}
Let \(k,m\geq1\), \(0\leq r\leq k-1\) and $\mathcal{H} \cong (\mathbb{C}^2)^{\otimes n}$ for $n \geq 2$. Let
\(\rho^{k+m}\) be a density operator on \(\Hcal^{\otimes(k+m)}\)
supported on the Gaussian-symmetric subspace:
\begin{equation}
    \supp (\rho^{k+m}) \subseteq \GSym^{k+m}(\Hcal).
\end{equation}
Then, there exists a measurable family of subnormalized nonnegative
operators $\{\tilde\rho_U^k\}_{U\in\Mcal_n}$ on $\Hcal^{\otimes k}$, such that:
\begin{equation}
    \operatorname{supp} (\tilde\rho_U^k)
    \subseteq
    \operatorname{Ran} (P_U^{k,r})
\end{equation}
for every \(U\in\Mcal_n\), $\tilde{\rho}_U^k$ is at most rank one whenever $\rho^{k+m}$ is rank one, and:
\begin{equation}
    \left\|
        \operatorname{Tr}_m\left( \rho^{k+m} \right)
        -
        \int_{\Mcal_n}\tilde\rho_U^k\,dU
    \right\|_1
    \leq
    \delta.
\end{equation}
An upper bound for $\delta$ is given by:
\begin{equation}
\delta \leq 3 \sum_{t=0}^{k-r-1} \sum_{a=0}^{r+1} \binom{r+t}{r} \binom{r+1}{a}  (-1)^a \frac{D_m}{D_{m+a+t}}, \label{eq:delta-full}
\end{equation}
where $D_m := \dim (\GSym^m(\Hcal))$. Alternatively, $\delta$ admits the looser upper bounds:
\begin{equation}
\delta \leq 3 \binom{k}{r+1} \frac{(L_n)_{r+1}}{(m+1)^{r+1}} \leq \frac{3}{(r+1)!} \lambda_r^{r+1}, \label{eq:delta-simple}
\end{equation}
where $L_n := n(n-1)/2$, the subscript $(L_n)_{r+1}$ denotes the rising factorial, and:
\begin{equation}
    \lambda_r := \frac{\left( k - \frac{r}{2} \right)\left(L_n + \frac{r}{2} \right)}{\left(m+1\right)}.
\end{equation}
\end{theorem}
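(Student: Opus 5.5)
We follow Renner's exponential de Finetti argument from \cite{Renner_2007}, with the symmetric subspace replaced by $\GSym^{k+m}(\Hcal)$ and the Haar measure on $U(\Hcal)$ replaced by the Haar measure on $\Mcal_n$.

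\textbf{Step 1: reduce to pure states.} By convexity of the trace norm, it suffices to treat a pure state $\rho^{k+m}=|\Psi\rangle\langle\Psi|$ with $|\Psi\rangle\in\GSym^{k+m}(\Hcal)$. For mixed states, take a spectral decomposition: each eigenvector lies in $\GSym^{k+m}(\Hcal)$ because the support does, and we average the resulting families. This also gives the rank-one clause.

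\textbf{Step 2: resolution of the identity.} The representation $U\mapsto U^{\otimes m}$ restricted to $\GSym^m(\Hcal)$ is irreducible, since it is the trivial irrep of the tensor matchgate commutant. Schur's lemma then gives
\begin{equation}
D_m\int_{\Mcal_n}(U|\mathbf 0\rangle\langle\mathbf 0|U^\dag)^{\otimes m}\,dU=P_{\GSym^m}.
\end{equation}
Because $\Tr_m|\Psi\rangle\langle\Psi|$ only sees the last $m$ factors through $\GSym^m$, we can write
\begin{equation}
\Tr_m|\Psi\rangle\langle\Psi|=\int dU\,|\Psi_U\rangle\langle\Psi_U|,\qquad |\Psi_U\rangle:=\sqrt{D_m}\,(\mathbb 1^{\otimes k}\otimes\langle\mathbf 0|^{\otimes m}U^{\dag\otimes m})|\Psi\rangle .
\end{equation}
We then set $\tilde\rho^k_U:=P^{k,r}_U|\Psi_U\rangle\langle\Psi_U|P^{k,r}_U$. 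This operator is rank at most one and supported in $\operatorname{Ran}(P_U^{k,r})$.

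\textbf{Step 3: bound the error.} By the gentle-measurement (or operator-Cauchy–Schwarz) estimate, the trace distance is at most $3\sqrt{\epsilon}$, or at most $3\epsilon$ in Renner's form, where
\begin{equation}
\epsilon=\int dU\,\langle\Psi_U|(\mathbb 1-P_U^{k,r})|\Psi_U\rangle .
\end{equation}
Rotating by $U$, this reduces to bounding
\begin{equation}
D_m\,\langle\Psi|\textstyle\int dU\,U^{\otimes(k+m)}\big[(\mathbb 1-P^{k,r}_{\mathbb 1})\otimes\nu_0^{\otimes m}\big]U^{\dag\otimes(k+m)}|\Psi\rangle .
\end{equation}
Next we expand $\mathbb 1-P^{k,r}_{\mathbb 1}$ combinatorially, following Renner's Lemma. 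The complement of the ${k\choose k-r}$-i.i.d. space is covered by configurations in which at least $r+1$ of the first $k$ slots are orthogonal to $|\mathbf 0\rangle$. Inclusion–exclusion over $a$ slots, together with the ordering index $t$ (the position of the $(r+1)$-th non-vacuum slot), produces the factors $\binom{r+t}{r}\binom{r+1}{a}(-1)^a$. Each term then contains $\nu_0^{\otimes(m+a+t)}$ on a block. Twirling that block and applying Schur's lemma on $\GSym^{m+a+t}$ turns it into $P_{\GSym}/D_{m+a+t}$, which contributes the ratio $D_m/D_{m+a+t}$ and yields \eqref{eq:delta-full}.

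\textbf{Step 4: the looser bounds.} These require the dimension formula for the Gaussian-symmetric subspace: $\GSym^m$ is a highest-weight irrep of $SO(2n)$, in effect a Veronese-type embedding of the Gaussian manifold. Its dimension $D_m$ is polynomial in $m$ of degree $L_n=\dim SO(2n)/U(n)=n(n-1)/2$. The finite differences in $a$ then bound the alternating sum by $(L_n)_{r+1}/(m+1)^{r+1}$, in the style of a ratio $\binom{m+L}{L}$. Summing over $t$ gives $\binom{k}{r+1}$, and the final bound follows from $\binom{k}{r+1}\le (k-r/2)^{r+1}/(r+1)!$ and $(L_n)_{r+1}\le (L_n+r/2)^{r+1}$, both by AM–GM applied to pairs of factors.

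\textbf{Main obstacle.} I expect the hardest part to be the combinatorial operator inequality of Step 3 together with the exact control of $D_m$. In particular, one must show that $D_m/D_{m+s}$ behaves like a ratio of rising factorials of order $L_n$, which requires the Weyl dimension formula for the relevant spinor-type $SO(2n)$ highest-weight irrep.
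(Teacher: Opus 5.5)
The gap is in Step 3, where you pass from Renner's lemma to your scalar $\epsilon$. Lemma 3 of \cite{Renner_2007} bounds the error by $3\|X\|_1$, where $X:=\int_{\Mcal_n}(\mathbb{1}-P_U^{k,r})|\Psi_U\rangle\langle\Psi_U|\,dU$. This operator is not even Hermitian in general, whereas your $\epsilon$ equals $\Tr X$. Pointwise estimates (gentle measurement, Cauchy--Schwarz) only give $\|X\|_1\leq\sqrt{\epsilon}$. That produces a bound of order $\sqrt{D_m\gamma}$, not the linear bound \eqref{eq:delta-full}. To get ``$3\epsilon$ in Renner's form'' you must show that $X$ is a nonnegative multiple of $\Tr_m|\Psi\rangle\langle\Psi|$, so that $\|X\|_1=\Tr X$. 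Your proposal never does this.

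The missing ingredient is the multiplicity-one property of the Gaussian-symmetric irrep (Lemma \ref{lem:multiplicity-one-gsym} in the paper). Write $X=D_m\Tr_m[T|\Psi\rangle\langle\Psi|]$, with $T$ the matchgate twirl of $(P^{k,r}_{\mathbb{1}})^\perp\otimes\nu_0^{\otimes m}$. Then $T\in\Com_{k+m}(\Mcal_n)$. Since $\GSym^{k+m}(\Hcal)$ is the entire $\nu=0$ isotypic block with $V_0\cong\mathbb{C}$, one gets $TP_{\GSym^{k+m}(\Hcal)}=\gamma P_{\GSym^{k+m}(\Hcal)}$. Hence $X=D_m\gamma\Tr_m|\Psi\rangle\langle\Psi|\geq0$ and $\|X\|_1=D_m\gamma=\epsilon$.

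Irreducibility of $\GSym^\ell(\Hcal)$, the only representation-theoretic fact you invoke, is not enough. Your block-twirl calculation gives $\langle\Psi|T|\Psi\rangle=\gamma$ for every unit $|\Psi\rangle\in\GSym^{k+m}(\Hcal)$, so by polarization $PTP=\gamma P$. You also need $(\mathbb{1}-P)TP=0$, meaning no second copy of that irrep appears in $\Hcal^{\otimes(k+m)}$, and that is exactly multiplicity one.

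Once this step is supplied, your route works. Your inclusion--exclusion/negative-binomial expansion of $\mathbb{1}-P^{k,r}_{\mathbb{1}}$ followed by block twirls reproduces the paper's expression. This is a valid alternative to the paper's computation, which uses $P_{\GSym^{k+m}(\Hcal)}/D_{k+m}=\int(U\nu_0U^\dagger)^{\otimes(k+m)}dU$, the overlap moments $\int q_U^\ell\,dU=1/D_\ell$, and the binomial-tail identity.

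For Step 4, knowing that $D_x$ is a polynomial of degree $L_n$ is not sufficient. You need the explicit factorisation $D_x=C_n\prod_{\alpha}(x+c_\alpha)$ with all $c_\alpha\geq1$, which makes $1/D_x$ completely monotone with $(-1)^h(1/D_x)^{(h)}\leq(L_n)_h(x+1)^{-h}/D_x$. Combined with the integral representation of finite differences, this gives the rising-factorial bound.
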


\begin{corollary} \label{cor:fermionic-gaussian-renner-bound}
Under the assumptions of Theorem \ref{thm:fermionic-gaussian-definetti}, the upper bound $\delta$ also satisfies:
\begin{equation} \label{eq:renner-exact-power}
\delta \leq \frac{3 D_m}{2} \left(\frac{k}{k+m} \right)^{r+1}.
\end{equation}
This leads to a Renner-style bound:
\begin{equation} \label{eq:delta-renner-style}
\delta \leq 3 e^{-\frac{m}{m+k}(r+1) + L_n \log (m+1)}.
\end{equation}
\end{corollary}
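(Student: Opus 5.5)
The plan is to prove the two inequalities of Corollary~\ref{cor:fermionic-gaussian-renner-bound} separately. First I would bound the exact expression \eqref{eq:delta-full} by $\tfrac{D_m}{2}\bigl(\tfrac{k}{k+m}\bigr)^{r+1}$. Then I would convert that into the exponential form \eqref{eq:delta-renner-style} using an explicit bound on $D_m$.

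For the first inequality I would not manipulate the alternating sum in \eqref{eq:delta-full} directly. Instead I would go back to the quantity it came from in the proof of Theorem~\ref{thm:fermionic-gaussian-definetti}. That quantity is the weight a Gaussian-symmetric state places outside $P_U^{k,r}$, averaged against the Gaussian coherent-state resolution
\begin{equation}
D_m\int_{\Mcal_n}(U\nu_0U^\dag)^{\otimes m}\,dU=\text{projector onto }\GSym^m(\Hcal).
\end{equation}
The factor $D_m$ in front comes from this resolution of identity. What remains is a probability: after the $m$ traced-out replicas are "measured" to be $\nu_U^{\otimes m}$, at least $r+1$ of the $k$ remaining replicas fail to be $\nu_U$. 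I would bound this in Renner's way, by a sequential argument, which I expect to be the main obstacle. Condition on the $m$ replicas projecting onto $\nu^{\otimes m}$. Ask for the $s$-th failure among the remaining $k-s$ replicas, having already used $m+s$ slots. By symmetry of the (still Gaussian-symmetric) conditional state, the failure probability at each step is at most the fraction of non-conditioned slots, $(k-s)/(k+m-s)\le k/(k+m)$. Multiplying over $r+1$ steps gives $\bigl(k/(k+m)\bigr)^{r+1}$. The same product should also be recoverable algebraically from \eqref{eq:delta-full}: the inner sum over $a$ is an $(r+1)$-th finite difference of $j\mapsto 1/D_{m+j}$, and summing the binomial weights over $t$ telescopes to a ratio of the form $\binom{k}{r+1}/\binom{k+m}{r+1}\le (k/(k+m))^{r+1}$. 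I would use this algebraic check to pin down the residual factor $\tfrac12$, which I expect to come from $n\geq 2$ (for instance via $D_1=2^{n-1}\ge 2$, or from the symmetric normalisation of the parity-restricted measurement). If the factor $\tfrac12$ resists, the weaker constant $3D_m$ still suffices for \eqref{eq:delta-renner-style} after adjusting the constant.

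For the second inequality I would use $1-x\le e^{-x}$ to get
\begin{equation}
\left(\tfrac{k}{k+m}\right)^{r+1}=\left(1-\tfrac{m}{k+m}\right)^{r+1}\le e^{-\frac{m}{m+k}(r+1)}.
\end{equation}
I would then bound $D_m$ by Weyl's dimension formula. $\GSym^m(\Hcal)$ is the irreducible $\mathrm{Spin}(2n)$-module with highest weight $m\omega$, where $\omega=\tfrac12(1,\dots,1)$ is the (even) spinor weight. The positive roots $e_i-e_j$ are orthogonal to $\omega$ and contribute factor $1$. Each of the $L_n=n(n-1)/2$ roots $e_i+e_j$ has $\langle\omega,\alpha\rangle=1$ and $h_{ij}:=\langle\rho,\alpha\rangle=2n-i-j\ge1$. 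This gives
\begin{equation}
D_m=\prod_{i<j}\frac{m+h_{ij}}{h_{ij}}\le (m+1)^{L_n}=e^{L_n\log(m+1)}.
\end{equation}
Combining with $\tfrac{3}{2}D_m\le 3D_m$ yields \eqref{eq:delta-renner-style}.
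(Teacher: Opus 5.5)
There are genuine gaps, in both halves of the argument.

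\textbf{The sequential step conditions on the wrong event.} Suppose the $m$ traced-out replicas are found in $\nu_0^{\otimes m}$, which happens with probability $\int q_U^m\,dU=1/D_m$. Given this, the probability of at least $r+1$ failures among the $k$ retained replicas is exactly $D_m\gamma$. This is not bounded by $(k/(k+m))^{r+1}$. For $r=0$ it equals $1-D_m/D_{m+k}$, which is already $3/5>1/2$ for $n=3$, $k=m=1$, and tends to $1$ as $n$ grows with $k,m$ fixed.

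The slot-fraction product $\prod_{s<j}\frac{k-s}{k+m-s}$ is legitimate only for the joint probability $\gamma$. Even then, you must condition instead on the total number $J$ of replicas, among all $k+m$, that lie in $\mathbb{1}-\nu_0$. Expand $(P^{k,r}_{\mathbb{1}})^\perp\otimes\nu_0^{\otimes m}$ into tensor products of $\nu_0$ and $\mathbb{1}-\nu_0$. The induced pattern distribution under $P_{\GSym^{k+m}(\Hcal)}/D_{k+m}$ is exchangeable, because this projector commutes with replica permutations. Hence $\gamma=\sum_{j=r+1}^{k}\Pr[J=j]\binom{k}{j}/\binom{k+m}{j}$.

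Your algebraic check is also unfounded. The expression \eqref{eq:delta-full} depends on $n$ through $D_\ell$, so it cannot telescope to the $n$-independent ratio $\binom{k}{r+1}/\binom{k+m}{r+1}$. That ratio appears only as an upper bound in the exchangeability argument.

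\textbf{The factor $\tfrac12$ is not optional.} Your Weyl computation drops a factor $2$. For the full $O(2n)$ matchgate group, $\GSym^m(\Hcal)$ is the sum of two $\mathrm{Spin}(2n)$ irreps with highest weights $m\omega_\pm$, one per parity sector. So $D_m=2\prod_{i<j}\frac{m+2n-i-j}{2n-i-j}$, and $D_1=2^n$, not $2^{n-1}$.

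Consequently $D_m\leq(m+1)^{L_n}$ is false; for $n=2$ one has $D_m=2(m+1)$. Your fallback $3D_m$ only yields $6e^{-\frac{m}{m+k}(r+1)+L_n\log(m+1)}$. The constant $3$ needs exactly $\tfrac{3D_m}{2}=3\prod_{i<j}\bigl(1+\tfrac{m}{2n-i-j}\bigr)\leq 3(m+1)^{L_n}$.

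Your parity guess is the right one, but it must be made precise:
\begin{itemize}
\item The odd-parity half of $\GSym^{k+m}(\Hcal)$ has dimension $D_{k+m}/2$, since it is the image of the even half under $\gamma_1^{\otimes(k+m)}$.
\item It lies in the $(k+m)$-fold tensor power of the odd-parity subspace of $\Hcal$, which $\nu_0$ annihilates.
\item Hence $\Pr[J=k+m]\geq\tfrac12$, and these patterns contribute nothing to $\gamma$ because $m\geq1$.
\end{itemize}
This gives $\gamma\leq\tfrac12(k/(k+m))^{r+1}$, which is \eqref{eq:renner-exact-power}.

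\textbf{Comparison with the paper.} Repaired this way, your exchangeability route is a valid and more elementary alternative. The paper instead tilts the Haar measure by $D_m q_U^m$. It represents the overlap as a product of independent $\operatorname{Beta}(m+c_j,1)$ variables to obtain the negative moments $D_m/D_{m-a}$. It then applies the Chernoff bound $(1+a/k)^{-h}q^{-a}$ and sets $a=m$, so the $\tfrac12$ appears as $1/D_0$. The paper's route additionally yields a bound that can be optimised over $a$. Your final step $(1-\tfrac{m}{k+m})^{r+1}\leq e^{-\frac{m}{m+k}(r+1)}$ is fine.
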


\begin{figure*}[t]
    \centering
    \includegraphics[width=1\textwidth]{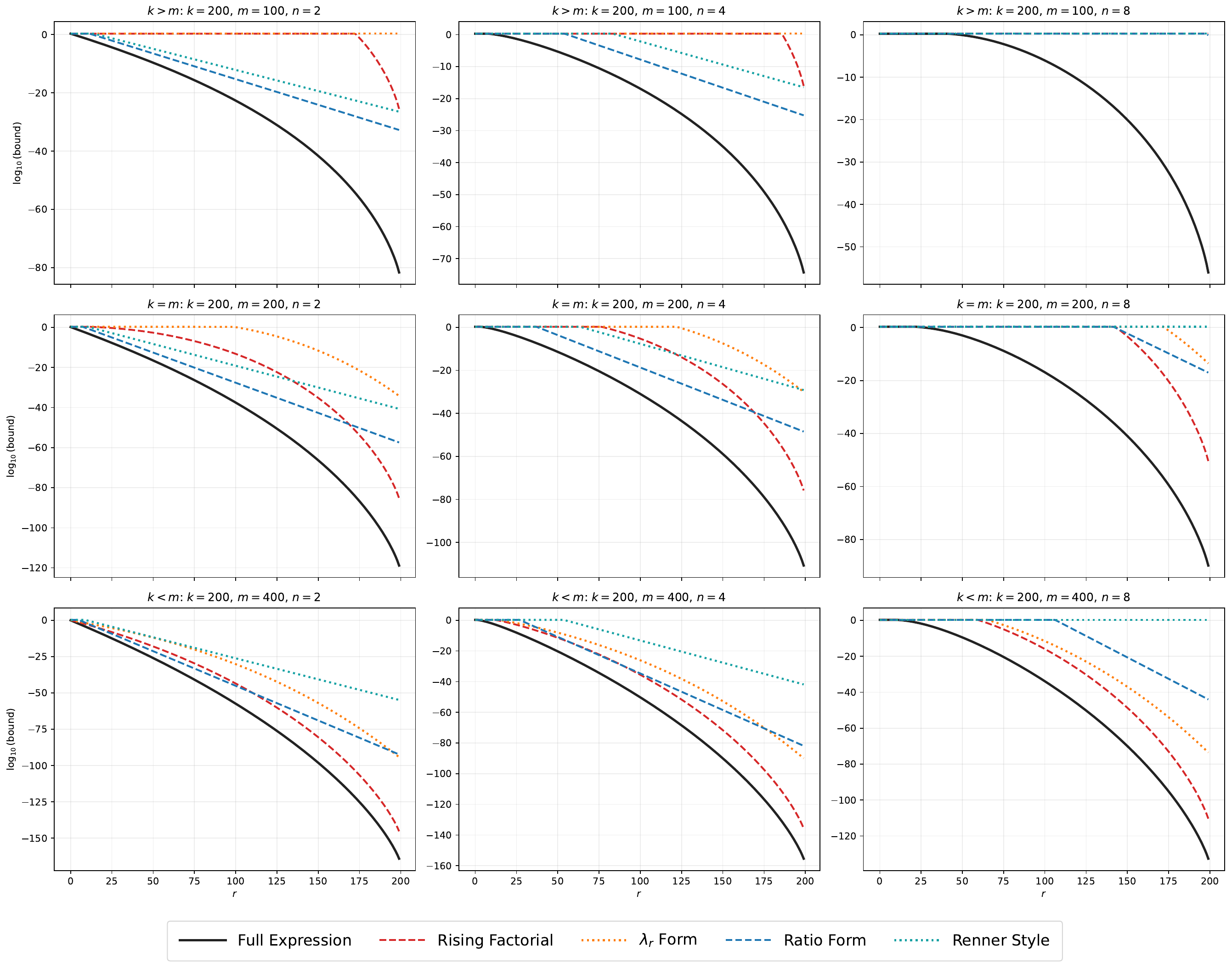}
    \caption{\textit{Numerical Comparison of Error Bounds.} Each subfigure plots upper bounds for $\delta$ in Theorem \ref{thm:fermionic-gaussian-definetti} against $r$. `Full Expression' (black line) denotes Equation \eqref{eq:delta-full}, which is the most accurate form. `Rising Factorial' and `$\lambda_r$ Form' (red and orange lines) denote the two simplified bounds in Equation \eqref{eq:delta-simple}. `Ratio Form' and `Renner Style' (blue and teal lines) denote Equations \eqref{eq:renner-exact-power} and \eqref{eq:delta-renner-style}. For both pairs, dotted lines show simplified expressions for the bounds plotted as dashed lines. Columns display results for different values of $n$  per subsystem $\mathcal{H}$, doubled between each column as prescribed by the purification in Theorem \ref{thm:gaussian-invariant}. Rows display different regimes: $k > m$ where the partial trace retains $k / (k+m) = 2/3$ of the original system, $k = m$ where $1/2$ of the system is retained, and $k < m$ where $1/3$ of the system is retained. In all cases, we observe super-exponential decay of the error $\delta$ with $r$, with the error worsening as $n$ increases.
    }
    \label{fig:definetti-illustration}
\end{figure*}

Since the operators $\tilde{\rho}^k_U$ in Theorem \ref{thm:fermionic-gaussian-definetti} are subnormalised, in practice one can replace them with density operators $\sigma^k_U = \tilde{\rho}^k_U / \Tr(\tilde{\rho}^k_U)$ with the corresponding (non-uniform) probability measure $d \mu (U) \propto \Tr(\tilde{\rho}^k_U) dU$. The approximants $\tilde{\rho}_U^k$ are almost-i.i.d. but not, in general, Gaussian-symmetric. Indeed, comparing Definitions \ref{def:gsym} and \ref{def:iid}, one sees that $\operatorname{Ran} (P_U^{k,r} ) \nsubseteq \GSym^k(\Hcal)$ for $r \geq 1$: a vector $| \Psi \rangle$ carrying $(U|\mathbf{0}\rangle)^{\otimes(k-r)}$ on $k-r$ copies is annihilated by the bridge operators $\Lambda_{ab}$ acting within those copies, but the remaining $r$ copies are unconstrained, so the bridge operators involving them are not guaranteed to annihilate $| \Psi \rangle$. If $\rho^{k+m}$ is pure, then $\tilde{\rho}_U^k$ is rank one, and the corresponding $\sigma_U^k$ is a superposition of such $| \Psi \rangle$ vectors. Each contribution is i.i.d. Gaussian on its own $k-r$ copies, but those copies may be chosen in $\binom{k}{k-r}$ ways, varying between each term in the superposition. The physical interpretation of our result is that any $k$-replica part $\Tr_m(\rho^{k + m})$ of a larger, globally Gaussian-symmetric state $\rho^{k+m}$ can be approximated by a probabilistic mixture of states $\sigma_U^k$, each of which \textit{need not be} globally Gaussian-symmetric, yet is composed of at least $k-r$ i.i.d. Gaussian states. In other words, subsystems of a globally Gaussian-symmetric system are well-approximated by statistical mixtures of $\sigma_U^k$, which are superpositions of Gaussian-symmetric states on smaller systems (when restricted to each choice of $k-r$ copies where the i.i.d. Gaussian states are supported). 

Our bounds (in particular Equations \eqref{eq:delta-simple} and \eqref{eq:delta-renner-style}) give insight into the validity of the approximation in different regimes. By Equation \eqref{eq:delta-simple}, in the region where $m \gg k$, our upper bound is manifestly super-exponentially decaying with $r$. This corresponds to the retained subsystem being a very small part of the overall system. Furthermore, for fixed $(n, k, r)$ and $m \rightarrow \infty$, we obtain a power-law decay $\delta \sim m^{-r - 1}$. On the other hand, the form in Equation \eqref{eq:delta-renner-style} is more useful in the regime of $k \gtrsim m$, where $\delta$ can be seen to also decay exponentially with $r$. This corresponds to situations where the retained system is only minimally smaller than the overall system. Numerically, using the form of Equation \eqref{eq:delta-full} we find $\delta$ to also decay super-exponentially in $r$ in situations where the two simplified expressions fail. When $r = 0$, the approximant states become fully i.i.d. and Equation \eqref{eq:delta-full} reproduces the result of \cite{Sierant_2026} up to a constant factor. 

In relation to previous work, a free-fermionic de Finetti theorem was first shown in \cite{Vershynina_2014}, in the context of approximating reduced states $\Tr_{m}(\rho^{m+1})$, where $\supp (\rho^{m+1}) \subseteq \GSym^{m+1}(\Hcal)$. This was achieved with an error bound $\delta \leq \mathcal{O}(n 2^{4n} m^{-1/3})$. For partial traces $\Tr_{m}(\rho^{k+m})$ of Gaussian-symmetric states $\supp (\rho^{k+m}) \subseteq \GSym^{k+m}(\Hcal)$, the authors of \cite{Sierant_2026} gave an exponentially improved error bound $\delta \leq \mathcal{O}(k n^2 / (k+m+1))$. Our work extends the two scenarios to approximation by $\binom{k}{k-r}$-i.i.d. states on $\Hcal^{\otimes k}$ of the form $(U | \mathbf{0} \rangle)^{\otimes k-r}$ on at least $k-r$ replicas of $\Hcal$. This offers a further exponential improvement in $\delta$ for $r > 0$. 

Another direct point of comparison is \cite{Renner_2007}, where using almost-i.i.d. states to approximate symmetric states was shown to result in an error bound of:
\begin{equation}
\delta \leq 3e^{-\frac{m}{m+k}(r+1) + d \log m}. \label{eq:renner-bound}
\end{equation}
Gaussian symmetry exponentially improves this scaling by replacing the $d \log m$ term with $L_n \log (m+1)$, which is polylogarithmic in $d$. This is a consequence of the much tighter dimensionality of the Gaussian-symmetric subspace, which forms the trivial replica $SO(k)$ sector. As a result, one can continue to obtain good convergence in the approximations for large $n$, so long as $k, m$ grow polynomially with the local qubit number. The other contrast with \cite{Renner_2007} is the regime of validity; Equation \eqref{eq:renner-bound} justifies a global representation theorem, wherein a large subsystem of an overall symmetric system is well-approximated (with exponentially decaying error in $r$) by a combination of almost-i.i.d. states. In our case, combining Equations \eqref{eq:delta-simple} and \eqref{eq:delta-renner-style} implies that this remains valid for approximating both large and small subsystems of a globally Gaussian-symmetric state. Therefore, our results show that the representation theorem of \cite{Renner_2007} not only applies to free-fermionic systems, but does so with a wider regime of validity and more favourable convergence between the approximants and the partially-traced state. For problems which are natively free-fermionic, where the original state $\rho^{k+m}$ resides in $\GSym^{k+m}(\Hcal)$ (or can be made so by a matchgate-covariant symmetrisation), we expect the applications of our result to largely mirror those of \cite{Renner_2007}, with the added benefit of tighter convergence. 

For continuous-variable systems, bosonic and fermionic Gaussian de Finetti theorems were proven in \cite{Leverrier_2009, Krumnow_2017, Leverrier_2017, Leverrier_2018}. Our theorem is complementary to these results, restricting to systems with a finite number of modes described by matchgate unitaries. In particular, \cite{Leverrier_2018} notes the absence of exponential variants in the continuous-variable setting and remarks on the potential utility of such results for quantum cryptography. It would therefore be interesting to see if Theorem \ref{thm:fermionic-gaussian-definetti} can be applied to obtain discrete-variable, free-fermionic analogues of such protocols. Finally, \cite{Gross_2021} considered de Finetti theorems for the stabilizer states of the Clifford group. An interesting direction for future work would be to combine the two, and verify whether restrictions to the Clifford-matchgate group are admissible and provide any further improvements in the error bounds. Extending standard techniques to this case would require a closed-form expression for the dimension of its trivial irrep, as well as verification of whether it occurs with multiplicity one. 

\

\paragraph{Characterising Gaussian-Invariant States.} Our second main result characterises the family of states obtained by partially tracing Gaussian-symmetric states. It is well-known that in the original setting, de Finetti theorems apply to any partial trace of a permutation-symmetric state, giving a wider family of permutation-invariant states which admit an i.i.d. approximation. However, in the replicated matchgate case, the question of partial traces has previously gone unanswered. One of the major contributions of recent work on matchgate commutants \cite{Sierant_2026, braccia_2026, lastres_2026} is the extension of the result that any Gaussian state $\rho$ on $\Hcal$ satisfies $\Lambda (\rho \otimes \rho) = 0$ on $\Hcal \otimes \Hcal$ \cite{bravyi_2004} to the multi-replica setting, where the bridge operators $\Lambda_{ab}$ annihilate any Gaussian-symmetric state on $\Hcal^{\otimes k}$. Similarly to how impure Gaussian states on $\Hcal$ are known to satisfy $[\Lambda, \rho \otimes \rho]=0$, we find that any Gaussian-invariant state $\rho$ on $\Hcal^{\otimes k}$ commutes with all bridge operators. Furthermore, these states come from partial traces of Gaussian-symmetric states of a larger Hilbert space, and always admit a Gaussian-symmetric purification. Most importantly, our purification theorem enables the application of Theorem \ref{thm:fermionic-gaussian-definetti}, as well as its fully-i.i.d. variant \cite{Sierant_2026}, to any Gaussian-invariant state.

\begin{theorem} \label{thm:gaussian-invariant}
For a density operator $\rho$ on $\Hcal^{\otimes k}$, the following are equivalent:
\begin{enumerate}
    \item $\rho$ is Gaussian-invariant.
    \item $\rho = \Tr_{\Kcal^{\otimes k}}(\bar{\rho})$ for some density operator $\bar{\rho}$ supported on $\GSym^k(\Hcal \otimes \Kcal)$, where $\Kcal \cong \Hcal$.
    \item $\rho$ admits a purification $|\bar{\rho} \rangle \in \GSym^k(\Hcal \otimes \Kcal)$.
\end{enumerate} 
\end{theorem}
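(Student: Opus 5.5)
The plan is to prove $3\Rightarrow 2\Rightarrow 1\Rightarrow 3$. Everything rests on the decomposition of the enlarged bridge operators. Order the $4n$ Majorana operators of one enlarged replica $\Hcal\otimes\Kcal$ (with $2n$ qubits) by the Jordan--Wigner convention. The first $2n$ act as $\gamma_\mu\otimes\mathbb{1}$ and the last $2n$ act as $\Gamma\otimes\gamma_\mu$. Then
\begin{equation}
\bar\Lambda_{ab}=\Lambda^{\Hcal}_{ab}\otimes\mathbb{1}+Q^{\Hcal}_{ab}\otimes\Lambda^{\Kcal}_{ab},
\end{equation}
because $(\Gamma^{(a)})^2=\mathbb{1}$ and $\Gamma^{(a)},\Gamma^{(b)}$ commute across replicas. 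I will also use that every pure Gaussian state has definite parity. Hence every vector $(V|\mathbf 0\rangle)^{\otimes k}$ satisfies $Q^{\Hcal\otimes\Kcal}_{ab}=Q^{\Hcal}_{ab}Q^{\Kcal}_{ab}=+1$, and so does every vector of $\GSym^k(\Hcal\otimes\Kcal)$. The implication $3\Rightarrow 2$ is immediate.

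For $2\Rightarrow 1$, write $\bar\rho=\sum_i p_i|\psi_i\rangle\langle\psi_i|$ with each $|\psi_i\rangle\in\GSym^k(\Hcal\otimes\Kcal)$. By linearity it suffices to treat one such $|\psi\rangle$ with $\rho=\Tr_{\Kcal}|\psi\rangle\langle\psi|$. From $\bar\Lambda_{ab}|\psi\rangle=0$ we get $\Lambda^{\Hcal}|\psi\rangle=-Q^{\Hcal}\Lambda^{\Kcal}|\psi\rangle$, and hence
\begin{equation}
\Lambda^{\Hcal}\rho=-Q^{\Hcal}X,\qquad \rho\Lambda^{\Hcal}=-XQ^{\Hcal},\qquad X:=\Tr_{\Kcal}\bigl(\Lambda^{\Kcal}|\psi\rangle\langle\psi|\bigr).
\end{equation}
Here we used that $\Lambda$ is Hermitian, and that cyclicity of the partial trace over $\Kcal$ lets $\Lambda^{\Kcal}$ act on either side. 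It remains to show $[Q^{\Hcal},X]=0$. Using $Q^{\Hcal}|\psi\rangle=Q^{\Kcal}|\psi\rangle$, we have $Q^{\Hcal}XQ^{\Hcal}=\Tr_{\Kcal}(Q^{\Kcal}\Lambda^{\Kcal}|\psi\rangle\langle\psi|Q^{\Kcal})$. Since $[Q^{\Kcal},\Lambda^{\Kcal}]=0$, cyclicity in $\Kcal$ gives back $X$. Thus $[\Lambda_{ab},\rho]=0$ for all $a<b$.

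For $1\Rightarrow 3$, I will use a canonical purification $|\bar\rho\rangle=(\sqrt{\rho}\otimes\mathbb{1})|\Omega\rangle$ with $|\Omega\rangle=|\phi\rangle^{\otimes k}$. Here $|\phi\rangle$ is a pure Gaussian state on $\Hcal\otimes\Kcal$ that is maximally entangled between $\Hcal$ and $\Kcal$: for instance, each mode $j$ of $\Hcal$ is paired with mode $j$ of $\Kcal$ by a two-mode Gaussian (BCS-type) rotation of the vacuum, which is a matchgate. Then $|\Omega\rangle\in\GSym^k(\Hcal\otimes\Kcal)$ and $\Tr_{\Kcal}|\bar\rho\rangle\langle\bar\rho|=\rho$ by maximal entanglement. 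One must check $\bar\Lambda_{ab}|\bar\rho\rangle=0$. The first step is to compute how each $\Kcal$-Majorana acts on $|\phi\rangle$, namely $(\Gamma\otimes\gamma_\mu)|\phi\rangle=c_\mu(\tilde\gamma_\mu\otimes\mathbb{1})|\phi\rangle$ for some operators $\tilde\gamma_\mu$ on $\Hcal$ (a transpose-type transfer). This yields $Q^{\Hcal}\Lambda^{\Kcal}|\Omega\rangle=-\Lambda^{\Hcal}|\Omega\rangle$. The phases should be tuned, by the choice of $\phi$, so that this identity is literally the annihilation $\bar\Lambda|\Omega\rangle=0$.

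With that in hand, the condition becomes
\begin{equation}
\sqrt\rho\,\Lambda^{\Hcal}|\Omega\rangle+Q^{\Hcal}\sqrt\rho\,\Lambda^{\Kcal}|\Omega\rangle
=\bigl(\sqrt\rho-Q^{\Hcal}\sqrt\rho\,Q^{\Hcal}\bigr)\Lambda^{\Hcal}|\Omega\rangle ,
\end{equation}
using $[\sqrt\rho,\Lambda^{\Hcal}]=0$ (functional calculus) and $(Q^{\Hcal})^2=\mathbb{1}$. The main obstacle is that this vanishes only if $\rho$ also commutes with the pair parities $Q_{ab}$, which is not evidently implied by commuting with the $\Lambda_{ab}$.

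I will try three routes, in order. First, check whether $Q_{ab}$ lies in the algebra generated by the bridge operators, for example as a function of $\Lambda_{ab}$, since the spectrum of $\Lambda_{ab}$ on the two-replica space is organised by the $SO(2)$ replica charge, whose sectors may carry fixed pair parity. Second, use Remark~\ref{rem:graded-ungraded-bridges} to work in the graded convention, where the $\Kcal$-Majoranas need not carry the string $\Gamma$, so the $Q^{\Hcal}$ factor is absent. Third, decompose $\rho$ into joint parity sectors and purify each with a reference state $|\phi\rangle$ of matching parity, then recombine.
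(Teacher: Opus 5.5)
Your $3\Rightarrow2$ and $2\Rightarrow1$ steps are correct and use the paper's ingredients: the splitting $\bar\Lambda_{ab}=\Lambda^{\Hcal}_{ab}\otimes\mathbb{1}+Q^{\Hcal}_{ab}\otimes\Lambda^{\Kcal}_{ab}$, the shuffle identity $Q^{\Hcal}_{ab}|\psi\rangle=Q^{\Kcal}_{ab}|\psi\rangle$ from definite parity, $[Q^{\Kcal}_{ab},\Lambda^{\Kcal}_{ab}]=0$, and cyclicity of $\Tr_{\Kcal}$. Your reduction of $1\Rightarrow3$ to the vanishing of $(\sqrt\rho-Q^{\Hcal}_{ab}\sqrt\rho\,Q^{\Hcal}_{ab})\Lambda^{\Hcal}_{ab}|\Omega\rangle$ is also right. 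Since you take $|\phi\rangle$ Gaussian, $\bar\Lambda_{ab}|\phi\rangle^{\otimes k}=0$ follows automatically from the span characterisation of $\GSym^k(\Hcal\otimes\Kcal)$, so no phase tuning is needed. The paper instead verifies this by hand for the reversed state $|\tilde\Omega\rangle$; the plain computational-basis $|\Omega\rangle$ is not Gaussian for $n\geq2$. You also need the normalisation factor $2^{nk/2}$ in $|\bar\rho\rangle$.

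\textbf{The gap.} You never prove $[\sqrt\rho,Q_{ab}]=0$. You name it as the main obstacle and list three routes without carrying any out, so $1\Rightarrow3$ is not established.

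\textbf{How to close it.} Your first route works, and it is exactly what the paper proves in Lemma~\ref{lem:parity-permutation-commutation}: $Q_{ab}=e^{i\pi\Lambda_{ab}/2}$. In the ungraded convention the $2n$ summands $\gamma^{(a)}_\mu\gamma^{(b)}_\mu$ are Hermitian, commute pairwise, and each squares to $\mathbb{1}$. Hence $e^{i\frac{\pi}{2}\gamma^{(a)}_\mu\gamma^{(b)}_\mu}=i\gamma^{(a)}_\mu\gamma^{(b)}_\mu$, and moving all $b$-Majoranas to the right gives
\begin{equation}
e^{\frac{i\pi}{2}\Lambda_{ab}}=\prod_{\mu=1}^{2n} i\gamma^{(a)}_\mu\gamma^{(b)}_\mu
= i^{2n}\Bigl(\prod_{\mu=1}^{2n}\gamma^{(a)}_\mu\Bigr)\Bigl(\prod_{\mu=1}^{2n}\gamma^{(b)}_\mu\Bigr)
= i^{2n}\,\bigl(i^{n}\Gamma^{(a)}\bigr)\bigl(i^{n}\Gamma^{(b)}\bigr)=Q_{ab}.
\end{equation}
So $Q_{ab}$ is a function of $\Lambda_{ab}$; on its even-integer spectrum, $Q_{ab}=(-1)^{\Lambda_{ab}/2}$. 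Therefore $[\rho,\Lambda_{ab}]=0$ gives $[\sqrt\rho,Q_{ab}]=0$, which kills your residual term.

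\textbf{Why the other routes fail.} Your second route does not remove the obstacle. The string $\Gamma_{\Hcal}$ on the $\Kcal$-Majoranas is the within-replica Jordan--Wigner string, so it survives in the graded convention; the paper's graded example still contains the term $Q^{\Hcal}_{12}\otimes\Lambda^{\Kcal}_{12}$. Your third route is circular. Purifying the pair-parity sectors separately presupposes that $\rho$ is block-diagonal in them, which is the very commutation you are trying to establish.
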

Lemma \ref{lem:parity-permutation-commutation} additionally shows that Gaussian-invariant states commute with all pair-parity operators $Q_{ab}$ and permutations $\Pi \in S_k$. The latter implies they are a subset of the permutation-invariant states. As the graded and ungraded Gaussian-symmetric spaces are equivalent, one may also express Theorem \ref{thm:gaussian-invariant} in the graded convention by substituting the following relation (established in Remark \ref{rem:graded-ungraded-bridges} in the Appendix) into Definition \ref{def:gaussian-invariant}:
\begin{align} \label{eq:bridge-graded-ungraded}
\Lambda_{ab} &= -i \widehat{\Lambda}_{ab} R_{ab},  & R_{ab} = \prod_{c=a}^{b-1} \Gamma^{(c)}.
\end{align}

Finally, a related result may be found in \cite{Melo_2013}. For single-copy convex combinations of Gaussian states $\rho$, the authors considered the problem of finding Gaussian-symmetric extensions, i.e. states $\rho_\mathrm{ext}$ with support on $\GSym^{k+1}(\Hcal)$ such that $\Tr_k (\rho_\mathrm{ext}) = \rho$. In this work we do not consider Gaussian-symmetric extensions, but rather purifications of $k$-replica states $\rho$ on the locally-enlarged $\GSym^k(\Hcal \otimes \Kcal)$, which always exist.

\

\paragraph{Examples of Gaussian-invariant States.} Theorem \ref{thm:gaussian-invariant} shows that the partial trace over $\Kcal^{\otimes k}$ of any Gaussian-symmetric state $\bar{\rho}$ supported on $\GSym^k(\Hcal \otimes \Kcal)$ is Gaussian-invariant. Similarly to partial traces over $\Sym^k(\Hcal \otimes \Kcal)$, if the state $\bar{\rho}$ is a simple tensor product then $\Tr_{\Kcal^{\otimes k}}(\bar{\rho})$ yields another Gaussian-symmetric state. To see this, take a tensor product $\psi^{\otimes k}$, such that $\psi$ is a pure Gaussian state on $\Hcal$. Then $\Lambda_{ab} \psi^{\otimes k} = 0$, so $\psi^{\otimes k} \in \GSym^k(\mathcal{H})$. On the larger space $(\Hcal \otimes \Kcal)^{\otimes k} \cong \Hcal^{\otimes k} \otimes \Kcal^{\otimes k}$, the bridge operators $\bar{\Lambda}_{ab}$ can be shown to take on the form: 
\begin{equation}
\bar{\Lambda}_{ab} = \Lambda^\Hcal_{ab} \otimes \mathbb{1}^\Kcal_{ab} + Q^\Hcal_{ab} \otimes \Lambda^\Kcal_{ab},
\end{equation}
where $Q^\Hcal_{ab}$ is a pair-parity operator acting on the $a$-th and $b$-th replicas of $\Hcal^{\otimes k}$. The corresponding state $\bar{\psi} = \psi^{\otimes 2k}$ clearly satisfies $\Tr_{\Kcal^{\otimes k}}(\bar{\psi}) = \psi^{\otimes k}$. Furthermore, it is Gaussian-symmetric, since:
\begin{equation}
    \bar{\Lambda}_{ab} \bar{\psi} = \Lambda^\Hcal_{ab} \psi^{\otimes k} \otimes \psi^{\otimes k} + Q^\Hcal_{ab} \psi^{\otimes k} \otimes \Lambda^\Kcal_{ab} \psi^{\otimes k} = 0.
\end{equation}
It also follows that a purification of $\psi^{\otimes k}$ on $\GSym^k(\Hcal \otimes \Kcal)$ is of the form $\psi^{\otimes 2k}$. This suggests that to yield non-Gaussian-symmetric purifications, one must consider states $\bar{\rho}$ on $\GSym^k(\Hcal \otimes \Kcal)$ that contain genuine entanglement between the subsystems $\Hcal$ and $\Kcal$. As a minimal example with $n=1$ and $k=2$, we may take $\bar{\rho} = | \Omega \rangle \langle \Omega |^{\otimes 2}_{\Hcal \Kcal}$ where $| \Omega \rangle_{\Hcal \Kcal} = \frac{1}{\sqrt{2}}(|00\rangle + |11\rangle)$ is a maximally entangled state on $\Hcal \otimes \Kcal$. In this case, the single bridge operator (ordering the subsystems it acts on as $\Hcal \otimes \Kcal \otimes \Hcal \otimes \Kcal$) is given by:
\begin{equation}
\bar{\Lambda}_{12} = X_1 X_3 + Y_1 Y_3 + Z_1 X_2 Z_3 X_4 + Z_1 Y_2 Z_3 Y_4,
\end{equation}
and it is straightforward to verify that $\bar{\Lambda}_{12} |\Omega\rangle^{\otimes 2} = 0$. On the other hand, the partial trace over $\Kcal^{\otimes 2}$ yields a maximally mixed state: 
\begin{equation}
    \Tr_{\Kcal^{\otimes 2}}(\bar{\rho}) = \frac{1}{4} \mathbb{1}_{\Hcal^{\otimes 2}},
\end{equation}
which trivially commutes with $\Lambda_{12}$, $Q_{12}$ and any permutation $\Pi \in S_2$, but is not Gaussian-symmetric.

The above example demonstrates an instance of a Gaussian-invariant (but not Gaussian-symmetric) state which is mixed. In fact, one can easily show that for $k \geq 3$ replicas any pure Gaussian-invariant state must be Gaussian-symmetric. To see this, assume that $|\Psi \rangle$ is a pure Gaussian-invariant state. It follows that $\Lambda_{ab} |\Psi \rangle \langle \Psi | = |\Psi \rangle \langle \Psi | \Lambda_{ab}$ for all $a<b$. Hence, $|\Psi \rangle$ is an eigenvector of all bridge operators $\Lambda_{ab}$:
\begin{equation}
     \Lambda_{ab} |\Psi \rangle = \Lambda_{ab} |\Psi \rangle \langle \Psi | \Psi \rangle = \langle \Psi | \Lambda_{ab} | \Psi \rangle |\Psi \rangle = \lambda_{ab} |\Psi \rangle. 
\end{equation}
On the other hand, by virtue of the anticommutation relations $\{\gamma^{(b)}_\mu, \gamma^{(b)}_\nu\} = 2 \delta_{\mu\nu}$, the bridge operators $\Lambda_{ab}$ satisfy $\{\Lambda_{ab}, \Lambda_{bc}\} = 2 \Lambda_{ac}$. Applying $|\Psi \rangle$ to both sides gives:
\begin{equation}
    (\lambda_{ab} \lambda_{bc} + \lambda_{bc} \lambda_{ab}) |\Psi \rangle = 2 \lambda_{ac} |\Psi \rangle.
\end{equation}
Repeating this cyclically then gives:
\begin{align}
\lambda_{ac} &= \lambda_{ab} \lambda_{bc}, & \lambda_{ab} &= \lambda_{ac} \lambda_{bc}, & \lambda_{bc} &= \lambda_{ab} \lambda_{ac}.
\end{align}
If one of the $\lambda_{ab}$ is zero, then all three must vanish. Otherwise, this implies that $\lambda^2_{ab} = \lambda^2_{ac} = \lambda^2_{bc} = 1$. However, this is impossible, since $\operatorname{spec}(\Lambda_{ab}) = \{2r - 2n : r = 0, \dots, 2n \}$, so all three must vanish. For $k \geq 3$ every pair $(a, c)$ admits a third index, which implies that $\lambda_{ab} = 0$ for all $a<b$. Hence, $|\Psi \rangle$ is annihilated by all bridge operators, and is Gaussian-symmetric. On the other hand, for $k=2$ the argument does not hold. As a simple example, consider $n=1$ with the singlet state $|S\rangle = (|01\rangle - |10\rangle) / \sqrt{2}$ and $\Lambda = X \otimes X + Y \otimes Y$. This state is pure, its outer product commutes with $\Lambda$, yet $\Lambda |S \rangle = -2|S\rangle \neq 0$. 

Mixed, Gaussian-invariant states outside of $\GSym^k(\Hcal)$ provide a rich family of interesting and physically relevant objects. For example, consider states of the form $\sigma^{\otimes k}$ where $\sigma$ is a mixed Gaussian state on $\Hcal$ (for instance, $\sigma$ could be a fermionic Gibbs state \cite{ramkumar_2026}). It is well-known that on $\Hcal^{\otimes 2}$ such states satisfy $[\Lambda, \sigma \otimes \sigma] = 0$ with $\Lambda(\sigma \otimes \sigma) \neq 0$ \cite{bravyi_2004}. For many-replica systems, it follows straightforwardly that $[\Lambda_{ab}, \sigma^{\otimes k}] = 0$ and $\Lambda_{ab} \sigma^{\otimes k} \neq 0$. Finally, non-i.i.d. Gaussian-invariant states also exist. By the dual pair decomposition $\mathcal{H}^{\otimes k}$ under the tensor action of $\mathfrak{M}_n$ and its commutant:
\begin{equation}
\Hcal^{\otimes k} \cong \bigoplus_{\nu} W_\nu^{\Mcal_n} \otimes V_\nu^{SO(k)},
\end{equation}
it is clear that states $\rho$ of the form:
\begin{equation}
    \rho=\bigoplus_\nu p_\nu\,
\rho_\nu^{W} \otimes
\frac{\mathbb 1_{V_\nu}}{\dim V_\nu}
\end{equation}
can instantiate non-i.i.d. Gaussian-invariant states. Our purification theorem enables approximation of such states by convex combinations of i.i.d., or almost-i.i.d. states, with the purification process leading to a tractable overhead in the error bound.

\

\paragraph{De Finetti Theorems for Gaussian-invariant States.} Theorem \ref{thm:gaussian-invariant} allows us to apply Theorem \ref{thm:fermionic-gaussian-definetti} to any Gaussian-invariant state $\rho$ on $\Hcal^{\otimes k}$ via purification into a Gaussian-symmetric state $\bar{\rho}$ on $(\Hcal \otimes \Kcal)^{\otimes k}$. This process essentially replaces each subsystem $\Hcal$ with its two-fold enlargement $\Hcal \otimes \Kcal$, squaring the local dimension and doubling the local qubit number. Remarkably, both the bound in Theorem \ref{thm:fermionic-gaussian-definetti} and the result of \cite{Sierant_2026} scale with respect to the latter, meaning that the purification process minimally affects the error bound. In contrast, the finite quantum de Finetti theorem for symmetric states given in \cite{Christandl_2007} gives a bound $\delta = \mathcal{O}(d)$, with the purification process squaring the local dimension. For many-qubit systems, this would suggest a significant multiplicative increase ($d^2 / d = 2^n$) in the error bound when applied to Gaussian-invariant states -- something that is avoided entirely, owing to the higher degree of structure and lower dimension of the Gaussian-symmetric subspace.

\begin{corollary}
Let $\rho^{k+m} \in \mathcal{H}^{\otimes k+m}$ be Gaussian-invariant, i.e. $[\rho^{k+m}, \Lambda_{ab}] = 0$ for all $a < b$. By the results of Section IV G in
\cite{Sierant_2026} and Theorem \ref{thm:gaussian-invariant}, there exists a pure
Gaussian-symmetric state $\psi \in (\mathcal{H} \otimes \mathcal{K})^{\otimes {k+m}}$
with $\mathcal{K} \cong \mathcal{H}$ such that
$\rho^{k+m} = \Tr_{\Kcal^{\otimes {k+m}}}(\psi)$. Let $\mu$ be the probability measure on pure
Gaussian states $\phi$ on $\mathcal{H} \otimes \mathcal{K}$ given by
$d\mu(\phi) = D_{k+m} \Tr(\phi^{\otimes {k+m}}\psi)\,d\phi$, and write
$\sigma_\phi = \Tr_{\Kcal}(\phi)$ for the induced Gaussian-invariant state on $\mathcal{H}$.
Then, the reduced state of $\rho^{k+m}$ on any $k$ copies satisfies:
\begin{equation}
    \left\| \Tr_{m}(\rho^{k+m}) - \int d\mu(\phi)\, \sigma_\phi^{\otimes k} \right\|_1
    \;\leq\; \frac{4 k n(2n - 1)}{k + m + 1}.
\end{equation}
\end{corollary}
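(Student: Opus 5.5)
The plan is to reduce to the fully i.i.d. Gaussian de Finetti theorem of \cite{Sierant_2026} (Section IV G) on the enlarged space, and then take a partial trace, which can only decrease the trace norm. First, apply Theorem \ref{thm:gaussian-invariant} (implication $1 \Rightarrow 3$) to the $(k+m)$-replica state $\rho^{k+m}$. This gives a pure state $\psi = |\bar\rho\rangle\langle\bar\rho|$ with $|\bar\rho\rangle \in \GSym^{k+m}(\Hcal\otimes\Kcal)$ and $\Tr_{\Kcal^{\otimes(k+m)}}(\psi) = \rho^{k+m}$. The enlarged local space $\Hcal\otimes\Kcal$ consists of $2n$ qubits. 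Under the standard identification of $\Hcal\otimes\Kcal$ with a $2n$-mode fermionic system, the bridge operators $\bar\Lambda_{ab}$ are exactly the bridge operators for $2n$ modes, via the decomposition $\bar\Lambda_{ab} = \Lambda^\Hcal_{ab}\otimes\mathbb 1 + Q^\Hcal_{ab}\otimes\Lambda^\Kcal_{ab}$. So $\psi$ is Gaussian-symmetric in the sense of Definition \ref{def:gsym} with $n$ replaced by $2n$.

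Second, invoke the fully i.i.d. result of \cite{Sierant_2026} on $\psi$. For a pure state supported on $\GSym^{k+m}$ of $N=2n$ qubits per replica, with measure $d\mu(\phi) = D_{k+m}\Tr(\phi^{\otimes(k+m)}\psi)\,d\phi$ over pure Gaussian states (a probability measure, since $D_{k+m}\int\phi^{\otimes(k+m)}d\phi$ is the projector onto $\GSym^{k+m}$ by Schur's lemma and multiplicity one of the trivial sector), the bound reads
\begin{equation}
\Bigl\|\Tr_m(\psi) - \int d\mu(\phi)\,\phi^{\otimes k}\Bigr\|_1 \le \frac{2kN(N-1)}{k+m+1}.
\end{equation}
The main check is that the constant in \cite{Sierant_2026} takes this form: the error comes from $1 - D_{m}/D_{k+m}$-type ratios, which are bounded by $k\,L_N/(k+m+1)$ with $L_N = N(N-1)/2$, times a factor $4$ from the trace-distance conversion. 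Substituting $N=2n$ gives $2k\cdot 2n(2n-1)/(k+m+1) = 4kn(2n-1)/(k+m+1)$. Alternatively, this is the $r=0$ case of Theorem \ref{thm:fermionic-gaussian-definetti}, but the stated measure and constant match \cite{Sierant_2026}, so I would cite that directly.

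Third, take $\Tr_{\Kcal^{\otimes k}}$ of both operators. The partial trace over the $\Kcal$ factors commutes with the trace over the $m$ replicas, so the first term becomes $\Tr_m(\rho^{k+m})$. For the second term, $\Tr_{\Kcal^{\otimes k}}(\phi^{\otimes k}) = (\Tr_\Kcal\phi)^{\otimes k} = \sigma_\phi^{\otimes k}$, and the integral commutes with the partial trace by linearity and measurability. Monotonicity of the trace norm under the CPTP map $\Tr_{\Kcal^{\otimes k}}$ then yields the claimed bound.

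Finally, I would remark that $\sigma_\phi$ is Gaussian-invariant, since it is a reduction of a Gaussian-symmetric state (Theorem \ref{thm:gaussian-invariant} with $k=1$ being trivial; it is in fact a mixed Gaussian state). The only real obstacle is making sure the bridge operators of $\Hcal\otimes\Kcal$ coincide with those of the $2n$-qubit matchgate group. This needs an ordering of the Majoranas of $\Hcal\otimes\Kcal$ in which the $\Kcal$ Majoranas carry the parity string $\Gamma^{(a)}$ of $\Hcal$, which is exactly what produces $Q^\Hcal_{ab}$. The fully i.i.d. theorem therefore applies verbatim with $n\mapsto 2n$.
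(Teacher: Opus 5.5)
Your proposal is correct and follows the paper's proof essentially step for step. You purify via Theorem~\ref{thm:gaussian-invariant} into a Gaussian-symmetric state on $2n$-qubit replicas, apply the fully i.i.d.\ Gaussian de Finetti bound of \cite{Sierant_2026} with $n \mapsto 2n$, and transfer the estimate through the CPTP map $\Tr_{\Kcal^{\otimes k}}$ by monotonicity of the trace norm. You are also right to cite \cite{Sierant_2026} directly rather than Theorem~\ref{thm:fermionic-gaussian-definetti} at $r=0$, which the paper treats as interchangeable: the latter yields the subnormalized measure $D_m \Tr(\phi^{\otimes(k+m)}\psi)\,d\phi$, not the probability measure with prefactor $D_{k+m}$ used in the statement.
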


\begin{proof}
Applying the Gaussian de Finetti theorem (Equation (188) of \cite{Sierant_2026} or Theorem \ref{thm:fermionic-gaussian-definetti} for $r=0$) to $\psi$, whose subsystems $\mathcal{H} \otimes \mathcal{K}$ have $2n$ qubits, gives (in trace norm):
\begin{align}
    \left\| \Tr_{m}(\psi) - \int d\mu(\phi)\, \phi^{\otimes k} \right\|_1
    \leq \frac{2 k (2n)(2n-1)}{k + m +1}.
\end{align}
Now, apply the partial trace $\Tr_{\Kcal^{\otimes k}}$ to both arguments. It maps
$\Tr_{m}(\psi) \rightarrow \Tr_{m}(\rho^{k+m})$ and
$\phi^{\otimes k} \rightarrow \sigma_\phi^{\otimes k}$. By monotonicity of the trace norm under CPTP maps, the bound is preserved.
\end{proof}

Although the induced state $\sigma_\phi$ is Gaussian-invariant, it is not necessarily Gaussian-symmetric or pure. However, since $\sigma_\phi$ is a partial trace of a pure Gaussian state $\phi$, in practice it is always sufficient to prepare a pure Gaussian state $\phi$ on each $\Hcal \otimes \Kcal$ subsystem and discard its $\Kcal$ part. In practice, one can approximate any Gaussian-invariant state $\rho$ via its purification $\psi$ as follows:
\begin{enumerate}
    \item Draw $\phi \sim d \mu(\phi) = D_{k+m} \Tr(\phi^{\otimes (k + m)} \psi) d \phi$.
    \item Form $\phi^{\otimes k}$ on $(\Hcal \otimes \Kcal)^{\otimes k}$ and discard the $\Kcal^{\otimes k}$ part to obtain $\sigma_\phi^{\otimes k}$ on $\Hcal^{\otimes k}$. 
    \item Repeat the above steps many times to obtain a convex combination of $\sigma_\phi^{\otimes k}$, which approximates $\Tr_{m}(\rho^{k+m})$ with an error $\delta \leq \mathcal{O}(kn^2/(k+m+1))$.
\end{enumerate}

Similarly, Theorem \ref{thm:fermionic-gaussian-definetti} can be applied to any Gaussian-invariant state $\rho^{k + m}$ on $\Hcal^{\otimes (k+m)}$, via purification into a Gaussian-symmetric state $\bar{\rho}^{k + m}$ on $(\Hcal \otimes \Kcal)^{\otimes (k+m)}$, with the approximants $\sigma_U^k$ in the theorem replaced by $\Tr_{\Kcal^{\otimes k}}(\bar{\sigma}_U^k)$. By the monotonicity of the trace norm under CPTP maps, the error bound is preserved, with the coefficient $L_n$ replaced by $L_{2n} \sim 4 L_n$ to account for the doubling of the local qubit number. This produces a multiplicative increase in the error bound of order:
\begin{equation}
\mathcal O\!\left((m+1)^{L_{2n}-L_n}\right) =(m+1)^{(3n^2-n)/2} =m^{\Theta(n^2)},
\end{equation} where for comparison, the bound in Equation \ref{eq:renner-bound} given in \cite{Renner_2007} would produce a multiplicative increase of order:
\begin{equation}
\mathcal{O}((m)^{d^2 - d}) = m^{\Theta(2^{2n})}.
\end{equation} 

\

\acknowledgements
J.B. gratefully acknowledges the hospitality of the ICTQT and the University of Gdańsk, where the majority of this work was carried out. S.S. acknowledges support from the Royal Society University Research
Fellowship. M.S. acknowledges support by the IRA Programme, project no. FENG.02.01-IP.05-0006/23, financed by the
FENG program 2021-2027, Priority FENG.02, Measure FENG.02.01, with the support of the FNP. 

\appendix

\section{Note on Graded and Ungraded Conventions}

In our discussion we have used the ungraded Majorana convention, where the operators $\gamma_\mu^{(a)}$ anticommute within the same replica, but commute across distinct replicas. This is in contrast to \cite{Sierant_2026, braccia_2026} where an explicitly graded convention is used, meaning that the Majorana operators $\gamma_\mu^{(a)}$ anticommute across distinct replicas. Here we provide a short remark which makes our results applicable to both conventions. In particular, the Gaussian-symmetric subspace $\GSym^k(\Hcal)$ is the same regardless of convention, meaning that Theorem \ref{thm:fermionic-gaussian-definetti} applies to both. Furthermore, the ungraded bridge operators $\Lambda_{ab}$ and their graded versions $\widehat{\Lambda}_{ab}$ are related by a simple parity string. This allows us to apply Theorem \ref{thm:gaussian-invariant} (which we prove in the ungraded convention for simplicity) within the graded convention.

\begin{remark}
\label{rem:graded-ungraded-bridges}
Let $\Hcal \cong (\mathbb{C}^{2})^{\otimes n}$, and let
$\{\gamma_\mu\}_{\mu=1}^{2n}$ be Hermitian Majorana operators on
$\Hcal$ satisfying
\begin{equation}
    \{\gamma_\mu,\gamma_\nu\}  = 2\delta_{\mu\nu}\mathbb{1}.
\end{equation}
Furthermore, define the local parity operator:
\begin{equation}
    \Gamma := (-i)^n\prod_{\mu=1}^{2n}\gamma_\mu,
\end{equation}
satisfying $\Gamma^\dagger=\Gamma$, $\Gamma^2=\mathbb{1}$ and
$\{\Gamma,\gamma_\mu\}=0$ for every $\mu$.

For a tensor product space $\Hcal^{\otimes k}$, let $\gamma_\mu^{(a)}$ denote the \textbf{ungraded} Majorana operators acting as $\gamma_\mu$ on the $a$-th replica, and as identity on all other replicas. For two replicas with $a \neq b$, these operators satisfy:
\begin{align}
[\gamma_\mu^{(a)},\gamma_\nu^{(b)}] &= 0, & [\Gamma^{(a)},\Gamma^{(b)}] &= 0, & [\Gamma^{(a)},\gamma_\mu^{(b)}] &= 0.
\end{align}
Furthermore, the ungraded bridge operators $\Lambda_{ab}$ are defined as:
\begin{equation}
    \Lambda_{ab} := \sum_{\mu=1}^{2n} \gamma_\mu^{(a)}\gamma_\mu^{(b)}, \quad 1\leq a<b\leq k.
\end{equation}
For the same tensor product space $\Hcal^{\otimes k}$, the \textbf{graded} Majorana operators $\widehat{\gamma}_\mu^{(a)}$ are defined as:
\begin{equation} \label{eq:graded-majorana-def}
    \widehat{\gamma}_\mu^{(a)} := \Bigl( \prod_{c<a}\Gamma^{(c)} \Bigr)\gamma_\mu^{(a)}.
\end{equation}
They satisfy the replicated canonical anticommutation relations:
\begin{equation} \label{eq:graded-replicated-anticommutation}
    \left\{ \widehat{\gamma}_\mu^{(a)},  \widehat{\gamma}_\nu^{(b)} \right\} =  2\delta_{ab}\delta_{\mu\nu}\mathbb{1},
\end{equation}
where the Hermitian, graded bridge operators are defined as:
\begin{equation}
    \widehat{\Lambda}_{ab} := i\sum_{\mu=1}^{2n} \widehat{\gamma}_\mu^{(a)} \widehat{\gamma}_\mu^{(b)}, \quad 1\leq a<b\leq k.
\end{equation}
For each $a<b$, introduce the intervening parity string:
\begin{equation}
    R_{ab} := \prod_{c=a}^{b-1}\Gamma^{(c)},
\end{equation}
which satisfies $R_{ab}^{\dagger}=R_{ab}$ and $R_{ab}^{2}=\mathbb{1}$. The graded and ungraded bridge operators are related as:
\begin{align}
\label{eq:graded-ungraded-bridge-conversion}
    \widehat{\Lambda}_{ab}  &=  i\Lambda_{ab}R_{ab}, &
    \Lambda_{ab} &=  -i\widehat{\Lambda}_{ab}R_{ab}.
\end{align}
Moreover,
\begin{align} \label{eq:graded-ungraded-anticommutation}
    \{\Lambda_{ab},R_{ab}\} &=0, &
    \{\widehat{\Lambda}_{ab},R_{ab}\} &= 0.
\end{align}
In particular, the ordinary and graded bridges have the same kernel:
\begin{equation}
\label{eq:graded-ungraded-kernel}
    \ker\widehat{\Lambda}_{ab}  =  \ker\Lambda_{ab}.
\end{equation}
As a consequence, the Gaussian-symmetric subspace is the same in both conventions:
\begin{equation}
\label{eq:graded-ungraded-gsym}
    \GSym^k(\Hcal) = \widehat{\GSym}^k(\Hcal).
\end{equation}
Finally, for any operator $X$ on $\Hcal^{\otimes k}$,
\begin{equation}
\label{eq:graded-ungraded-commutation}
    [X,\Lambda_{ab}]=0 \quad \Longleftrightarrow \quad [X,\widehat{\Lambda}_{ab}R_{ab}]=0.
\end{equation}
It also follows that if $[X, R_{ab}] = 0$, then $[X,\Lambda_{ab}]=0$ if and only if $[X,\widehat{\Lambda}_{ab}]=0$.
\end{remark}

\begin{proof}
The intervening parity string construction is the usual Jordan--Wigner representation across replicas \cite{braccia_2026}. Within a single replica, the local canonical anticommutation relations are unchanged and coincide for both conventions. On the other hand, if $a<b$, then the string $R_{ab}$ contains $\Gamma^{(a)}$, which anticommutes with $\gamma_\mu^{(a)}$. This gives the required minus sign for anticommutation across replicas, leading to Equation \eqref{eq:graded-replicated-anticommutation}. To show Equation \eqref{eq:graded-ungraded-bridge-conversion}, assume $a < b$ and write out each term using Equation \eqref{eq:graded-majorana-def}:
\begin{align}
    \widehat{\gamma}_\mu^{(a)}  \widehat{\gamma}_\mu^{(b)}
    &=
    \bigl( \prod_{c < a} \Gamma^{(c)} \bigr) \gamma_\mu^{(a)} \bigl(\prod_{d < b} \Gamma^{(d)} \bigr) \gamma_\mu^{(b)}
    \\
    &=
    \gamma_\mu^{(a)} \bigl( \prod_{c < a} \Gamma^{(c)} \bigr) \bigl( \prod_{d < b} \Gamma^{(d)} \bigr) \gamma_\mu^{(b)}
    \\
    &=
    \gamma_\mu^{(a)} \bigl(\prod_{c = a}^{b-1} \Gamma^{(c)} \bigr) \gamma_\mu^{(b)} \\
    &= \gamma_\mu^{(a)}  \gamma_\mu^{(b)} \bigl( \prod_{c = a}^{b-1} \Gamma^{(c)} \bigr) = \gamma_\mu^{(a)}  \gamma_\mu^{(b)} R_{ab},
\end{align}
where we have used the fact that in the second line, all factors of $\Gamma^{(c <a)}$ square to identity. Finally, no term in the product on the fourth line acts on the $b$-th replica, meaning we can commute it to the right hand side. Summing over $\mu$ then gives the left hand side of Equation \eqref{eq:graded-ungraded-bridge-conversion}, whereas the right hand side follows from $R_{ab}^2 = \mathbb{1}$. Equation \eqref{eq:graded-ungraded-anticommutation} then follows from taking Hermitian conjugates.

\noindent Next, suppose that $\Lambda_{ab}|\psi\rangle=0$. Then:
\begin{equation}
    \widehat{\Lambda}_{ab}|\psi\rangle  =  i\Lambda_{ab}R_{ab}|\psi\rangle   =  -iR_{ab}\Lambda_{ab}|\psi\rangle    =  0.
\end{equation}
Conversely, if $\widehat{\Lambda}_{ab}|\psi\rangle=0$, then:
\begin{equation}
\Lambda_{ab}|\psi\rangle = -i\widehat{\Lambda}_{ab}R_{ab}|\psi\rangle = iR_{ab}\widehat{\Lambda}_{ab}|\psi\rangle = 0,
\end{equation} 
proving Equation \eqref{eq:graded-ungraded-kernel}. Taking the intersection of the kernels over all pairs $(a,b)$ then proves Equation \eqref{eq:graded-ungraded-gsym}. Finally, Equation \eqref{eq:graded-ungraded-commutation} follows from Equation \eqref{eq:graded-ungraded-bridge-conversion} and bilinearity of the commutator. Writing out the right hand side as $X \widehat{\Lambda}_{ab} = \widehat{\Lambda}_{ab} R_{ab} X R_{ab}$ establishes the final claim.
\end{proof}

\begin{remark} Theorem \ref{thm:fermionic-gaussian-definetti} and Corollary \ref{cor:fermionic-gaussian-renner-bound} apply to the graded convention verbatim, where $\Lambda_{ab}$ is replaced by $\widehat{\Lambda}_{ab}$ and the Gaussian-symmetric subspace $\GSym^k(\Hcal)$ is replaced by its graded counterpart $\widehat{\GSym}^k(\Hcal)$. In Theorem \ref{thm:gaussian-invariant}, one may obtain the statement for the graded convention by setting $\Lambda_{ab} = -i \widehat{\Lambda}_{ab} R_{ab}$. 
\end{remark}

We note that compared to the ungraded convention, partial traces $\rho = \Tr_{\Kcal^{\otimes k}}(\bar{\rho})$ of Gaussian-symmetric states $\bar{\rho} \in \widehat{\GSym}^k(\Hcal \otimes \Kcal)$ do not in general commute with $\widehat{\Lambda}_{ab}$. To see this, take $n=1, k=2$. For the triplet state $|T \rangle = (|01\rangle + |10 \rangle)/\sqrt{2}$, define the state:
\begin{equation}
|\bar{\psi} \rangle = |T \rangle_{\Hcal_1 \Hcal_2} \otimes |T \rangle_{\Kcal_1 \Kcal_2}.
\end{equation}
The graded, enlarged bridge operator $\bar{\widehat{\Lambda}}_{12}$ takes the form:
\begin{align}
-i \bar{\widehat{\Lambda}}_{12} 
&= (X_{\Hcal_1})(Z_{\Hcal_1} Z_{\Kcal_1} X_{\Hcal_2}) \\ 
&+ (Y_{\Hcal_1})(Z_{\Hcal_1} Z_{\Kcal_1} Y_{\Hcal_2}) \\
& + (Z_{{\Hcal_1}} X_{\Kcal_1})(Z_{\Hcal_1} Z_{\Kcal_1} Z_{\Hcal_2} X_{\Kcal_2}) \\
&+ (Z_{\Hcal_1} Y_{\Kcal_1})(Z_{\Hcal_1} Z_{\Kcal_1} Z_{\Hcal_2} Y_{\Kcal_2}), 
\end{align}
which simplifies into:
\begin{align}
-i \bar{\widehat{\Lambda}}_{12} &= \Big[ (X_{\Hcal_1} X_{\Hcal_2} + Y_{\Hcal_1} Y_{\Hcal_2}) \\
&+ (Z_{\Hcal_1} Z_{\Hcal_2}) (X_{\Kcal_1} X_{\Kcal_2} + Y_{\Kcal_1} Y_{\Kcal_2}) \Big] Z_{\Hcal_1} Z_{\Kcal_1} \\
&= \Big[ \Lambda_{12}^{\Hcal_1 \Hcal_2} \otimes \mathbb{1} + Q_{12}^{\Hcal_1 \Hcal_2} \otimes \Lambda_{12}^{\Kcal_1 \Kcal_2} \Big] Z_{\Hcal_1} Z_{\Kcal_1}.
\end{align}
Now write $|S\rangle = (|01 \rangle - |10 \rangle)/\sqrt{2}$. We have that:
\begin{align}
(Z \otimes \mathbb{1})|T \rangle &= |S \rangle \\
(X \otimes X + Y \otimes Y)|S \rangle &= -2|S \rangle \\
(X \otimes X + Y \otimes Y)|T \rangle &= 2|T \rangle \\
(Z \otimes Z) |S \rangle &= -|S \rangle.
\end{align}
It follows that:
\begin{equation}
\bar{\widehat{\Lambda}}_{12} |\bar{\psi} \rangle = i(-2 + 2) |S \rangle_{\Hcal_1 \Hcal_2} \otimes |S \rangle_{\Kcal_1 \Kcal_2} =  0,
\end{equation}
hence $|\bar{\psi} \rangle$ is Gaussian-symmetric. Its partial trace is given by:
\begin{equation}
\Tr_{\Kcal_1 \Kcal_2}(| \bar{\psi} \rangle \langle \bar{\psi}  |)  = |T \rangle \langle T| = | \psi \rangle \langle \psi |,
\end{equation}
and on $\Hcal \otimes \Hcal$ the graded operator $\widehat{\Lambda}_{12}$ becomes:
\begin{align}
-i \widehat{\Lambda}_{12} &= (X_{\Hcal_1}) (Z_{\Hcal_1} X_{\Hcal_2}) + (Y_{\Hcal_1}) (Z_{\Hcal_1} Y_{\Hcal_2}) \\
&= (X_{\Hcal_1} X_{\Hcal_2} + Y_{\Hcal_1} Y_{\Hcal_2}) Z_{\Hcal_1}.
\end{align}
We then have that: 
\begin{equation}
\widehat{\Lambda}_{12} |\psi \rangle = i (X \otimes X + Y \otimes Y)|S \rangle = -2i |S \rangle,
\end{equation}hence $\big[ |\psi\rangle \langle \psi |,\widehat{\Lambda}_{12} \big] \neq 0$. On the other hand, $|\psi\rangle \langle \psi |$ commutes with the ungraded bridge:
\begin{equation}
\Lambda_{12} = X_{\Hcal_1} X_{\Hcal_2} + Y_{\Hcal_1} Y_{\Hcal_2},
\end{equation}
since we have $\Lambda_{12} |\psi \rangle = 2|\psi\rangle$. Finally, $R_{12} = Z_{\Hcal_1}$, giving $\widehat{\Lambda}_{12} = i \Lambda_{12} R_{12}$ in agreement with Remark \ref{rem:graded-ungraded-bridges}.

\section{Proof of Theorem \ref{thm:fermionic-gaussian-definetti}} 
In this section, we prove Theorem \ref{thm:fermionic-gaussian-definetti} and Corollary \ref{cor:fermionic-gaussian-renner-bound}, which provide an exponential bound on the approximation of the partial trace of a Gaussian-symmetric state by a convex combination of almost-i.i.d. states. We follow the general structure of the proof of the exponential de Finetti theorem for symmetric states, as presented in \cite{Renner_2007}, which gives a bound $\delta \leq 3 D_m \gamma$. The resulting expressions for $\delta$ in Theorem \ref{thm:fermionic-gaussian-definetti} and Corollary \ref{cor:fermionic-gaussian-renner-bound} then follow from bounds on $\gamma$ specific to the FLO setting. As the latter requires more complicated analysis, we prove it separately. Before proceeding to the proof of Theorem \ref{thm:fermionic-gaussian-definetti}, we first present our required technical lemmas.

\begin{lemma}[Lemma 3 of \cite{Renner_2007}] \label{lemma:projectors}
    Let $\{ \rho_\tau \}_{\tau \in \mathcal{T} }$ be a family of nonnegative operators on a Hilbert space $\mathcal{H}$ and let $\{ P_\tau \}_{\tau \in \mathcal{T}}$ be a family of projectors on $\mathcal{H}$. Then, for any measure $d \tau$ on $\mathcal{T}$:
    
    \[
    \left \| \int (\rho_\tau - P_\tau \rho_\tau P_\tau ) d \tau \right \|_1 \leq 3 \left \| \int (\mathbb{1} - P_\tau)\rho_\tau d\tau \right \|_1
    \]
\end{lemma}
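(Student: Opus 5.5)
The plan is to reduce the operator inequality to a triangle-inequality estimate plus one application of Hölder's inequality, with the nonnegativity of $\rho_\tau$ used through the identity $\Tr(P\rho P) = \Tr(\rho P)$. Write $Q_\tau = \mathbb{1} - P_\tau$ and decompose each term as
\begin{equation}
\rho_\tau - P_\tau\rho_\tau P_\tau = Q_\tau\rho_\tau + P_\tau\rho_\tau Q_\tau .
\end{equation}
Integrating and applying the triangle inequality gives
\begin{equation}
\Bigl\|\int(\rho_\tau - P_\tau\rho_\tau P_\tau)\,d\tau\Bigr\|_1 \le \Bigl\|\int Q_\tau\rho_\tau\,d\tau\Bigr\|_1 + \Bigl\|\int P_\tau\rho_\tau Q_\tau\,d\tau\Bigr\|_1 .
\end{equation}
The first term is already the right-hand side of the lemma. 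So it suffices to show that the second term is at most twice $\|\int Q_\tau\rho_\tau\,d\tau\|_1$.

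For the second term, rewrite $P_\tau\rho_\tau Q_\tau = \rho_\tau Q_\tau - Q_\tau\rho_\tau Q_\tau$. The operator $\int\rho_\tau Q_\tau\,d\tau$ is the adjoint of $\int Q_\tau\rho_\tau\,d\tau$, because each $\rho_\tau$ and $Q_\tau$ is Hermitian. The trace norm is invariant under taking adjoints, so this piece contributes exactly one copy of the target quantity.

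The remaining piece is $\int Q_\tau\rho_\tau Q_\tau\,d\tau$, and this is the main obstacle: it involves a sandwich by a $\tau$-dependent projector, so it cannot be factored out of the integral. Its saving feature is that it is nonnegative. Its trace norm therefore equals its trace, which is $\int\Tr(Q_\tau\rho_\tau)\,d\tau = \Tr\int Q_\tau\rho_\tau\,d\tau$. Since $|\Tr X| \le \|X\|_1$, this is at most $\|\int Q_\tau\rho_\tau\,d\tau\|_1$. Collecting the three contributions gives the constant $3$.

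For measurability and convergence of the integrals, I will assume the families are measurable and that $\int\rho_\tau\,d\tau$ is trace class, as in the intended application. Every operator appearing is then a Bochner integral of trace-class operators. Linearity of the trace and of the adjoint under the integral is then justified.
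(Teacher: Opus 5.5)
Your proposal is correct and is essentially the paper's argument. Rewriting $P_\tau\rho_\tau Q_\tau = \rho_\tau Q_\tau - Q_\tau\rho_\tau Q_\tau$ gives the paper's decomposition $X + X^\dagger - Y$, with $X=\int Q_\tau\rho_\tau\,d\tau$ and $Y=\int Q_\tau\rho_\tau Q_\tau\,d\tau\ge 0$. You then bound $\|Y\|_1=\Tr Y=\Tr X\le\|X\|_1$ exactly as the paper does. Your measurability and trace-class assumptions are a reasonable technical addition that the paper leaves implicit.
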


\begin{proof}
Set $Q_\tau:=\mathbb{1} -P_\tau$. Then,
\begin{equation}
\rho_\tau-P_\tau\rho_\tau P_\tau
=
Q_\tau\rho_\tau+\rho_\tau Q_\tau-Q_\tau\rho_\tau Q_\tau.
\end{equation}
Hence, with:
\begin{equation}
X:=\int_{\mathcal T}Q_\tau\rho_\tau\,d\tau,
\qquad
Y:=\int_{\mathcal T}Q_\tau\rho_\tau Q_\tau\,d\tau,
\end{equation}
we obtain:
\begin{equation}
\int_{\mathcal T}
\left(
\rho_\tau-P_\tau\rho_\tau P_\tau
\right)d\tau
=
X+X^\dagger-Y.
\end{equation}
Therefore, it holds that:
\begin{equation}
\left\|
\int_{\mathcal T}
\left(
\rho_\tau-P_\tau\rho_\tau P_\tau
\right)d\tau
\right\|_1
\leq
2\|X\|_1+\|Y\|_1.
\end{equation}
Since $Y\geq0$, the trace norm of $Y$ satisfies $\|Y\|_1=\Tr Y$. Moreover,
\begin{align}
\Tr Y &= \int_{\mathcal T}\Tr(Q_\tau\rho_\tau Q_\tau)\,d\tau
= \int_{\mathcal T}\Tr(Q_\tau\rho_\tau)\,d\tau \\
& = \Tr X \leq \|X\|_1.
\end{align}
Thus, we obtain the desired bound:
\begin{equation}
\left\|
\int_{\mathcal T}
\left(
\rho_\tau-P_\tau\rho_\tau P_\tau
\right)d\tau
\right\|_1
\leq
3\|X\|_1.
\end{equation}
\end{proof}

\begin{lemma}[Gaussian-symmetric Multiplicity]
\label{lem:multiplicity-one-gsym}
Let \(\Mcal_n\) denote the full matchgate group, following the
\(O(2n)\) convention of \cite{Sierant_2026}. Let
\(P_{\GSym^k(\Hcal)}\) be the projector onto the Gaussian-symmetric subspace:
\[
    \GSym^k(\Hcal)
    =
    \operatorname{span}
    \left\{
        (U|\mathbf{0}\rangle)^{\otimes k}:U\in\Mcal_n
    \right\}.
\]
Then, the Gaussian-symmetric irrep occurs with multiplicity one in the diagonal matchgate representation. Equivalently,
\begin{equation}
\label{eq:multiplicity-one-block}
    P_{\GSym^k(\Hcal)}
    \Com_k(\mathfrak{M}_n)
    P_{\GSym^k(\Hcal)}
    =
    \mathbb C P_{\GSym^k(\Hcal)}.
\end{equation}
Consequently, if $X\in \Com_k(\mathfrak{M}_n)$, then:
\begin{equation}
\label{eq:commutant-scalar-on-P0}
    XP_{\GSym^k(\Hcal)}
    =
    P_{\GSym^k(\Hcal)}XP_{\GSym^k(\Hcal)}
    =
    \lambda_X P_{\GSym^k(\Hcal)}
\end{equation}
for some scalar \(\lambda_X\).
\end{lemma}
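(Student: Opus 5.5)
Our approach is to show that $\GSym^k(\Hcal)$ is an irreducible $\Mcal_n$-module, and that it is the only copy of its isomorphism type inside $\Hcal^{\otimes k}$. The block identity \eqref{eq:multiplicity-one-block} then follows from Schur's lemma.

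First, $\GSym^k(\Hcal)$ is invariant under $U^{\otimes k}$ for $U \in \Mcal_n$, since $U^{\otimes k}(V|\mathbf 0\rangle)^{\otimes k} = (UV|\mathbf 0\rangle)^{\otimes k}$. Hence $P := P_{\GSym^k(\Hcal)}$ commutes with every $U^{\otimes k}$, so $P \in \Com_k(\Mcal_n)$. Consequently $P\,\Com_k(\Mcal_n)\,P$ is exactly the commutant of the restricted representation $U \mapsto U^{\otimes k}|_{\GSym^k}$. The claim then reduces to two facts:
\begin{itemize}
\item[(i)] $\GSym^k(\Hcal)$ is irreducible;
\item[(ii)] $\Hcal^{\otimes k}$ contains no other copy of this irrep.
\end{itemize}
In fact (i) together with $P\in\Com_k$ already gives $P\,\Com_k\,P = \mathbb{C}P$ by Schur. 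Statement (ii) is the "multiplicity one" phrasing; it is equivalent to (i) once we use that $X \in \Com_k$ maps the isotypic component to itself.

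To prove (i), I would use a highest-weight argument for the spin representation of $\mathfrak{so}(2n)$. Here $\Hcal$ is the sum of the two half-spin modules $\Hcal_+ \oplus \Hcal_-$, distinguished by parity. The vacuum $|\mathbf 0\rangle$ is a highest-weight vector of weight $\omega = (\tfrac12,\dots,\tfrac12)$ in $\Hcal_+$, annihilated by all raising operators $a_i^\dagger a_j^\dagger$ and $a_i^\dagger a_j$ ($i<j$), or by their lowering counterparts depending on convention. Then $|\mathbf 0\rangle^{\otimes k}$ is a highest-weight vector of weight $k\omega$ in $\Hcal^{\otimes k}$. Moreover the weight $k\omega$ occurs with multiplicity one in $\Hcal^{\otimes k}$: any weight vector of weight $k\omega$ must be a tensor product of weight vectors whose weights sum to $k\omega$, and since every weight of $\Hcal$ lies in $\{\pm\tfrac12\}^n$, each factor must have weight $\omega$, i.e.\ must be $|\mathbf 0\rangle$. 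Hence $V(k\omega)$ occurs exactly once in $\Hcal^{\otimes k}$ under $\mathrm{Spin}(2n)$, and this copy is the cyclic module generated by $|\mathbf 0\rangle^{\otimes k}$.

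Next, I would identify this cyclic module with $\GSym^k$. The inclusion $\subseteq$ holds because the orbit $\{(U|\mathbf 0\rangle)^{\otimes k}\}$ lies in $\GSym^k$. The reverse inclusion holds because the span of an orbit of a vector under a group equals the cyclic module it generates. So $\GSym^k \cong V(k\omega)$ for the connected (even) matchgate group.

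The full $O(2n)$ group, which includes odd matchgates such as $\gamma_1$ (mapping $|\mathbf 0\rangle$ into $\Hcal_-$), only enlarges the symmetry. The span of the full orbit is $V(k\omega)\oplus V(k\omega')$, where $\omega' = (\tfrac12,\dots,\tfrac12,-\tfrac12)$. I would check that $k\omega \neq k\omega'$, and that $V(k\omega')$ also has multiplicity one by the same argument applied to the highest-weight vector of $\Hcal_-$. The reflection then swaps the two summands, so $V(k\omega)\oplus V(k\omega')$ is irreducible for $O(2n)$, and it is the unique copy of that $O(2n)$-irrep. Schur's lemma for the $O(2n)$-action then gives \eqref{eq:multiplicity-one-block}. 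As a consistency check, this agrees with the dimension count $D_k$ of \cite{Sierant_2026}.

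Finally, \eqref{eq:commutant-scalar-on-P0} follows. For $X \in \Com_k$, the image $X\,\GSym^k$ is either $0$ or isomorphic to $\GSym^k$. By multiplicity one it must lie in $\GSym^k$, so $XP = PXP = \lambda_X P$.

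The main obstacle is the treatment of the $O(2n)$ versus $SO(2n)$ parity components, i.e.\ making sure the odd-parity orbit does not create a second, independent copy of the irrep. This is where care with the convention of \cite{Sierant_2026} is needed.
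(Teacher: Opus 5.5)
Your proposal is correct, but it takes a genuinely different route from the paper.

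\textbf{The paper's route.} The paper does not analyse the matchgate action on $\GSym^k(\Hcal)$ directly. It invokes the dual-pair decomposition $\Hcal^{\otimes k}\cong\bigoplus_\nu W_\nu^{\Mcal_n}\otimes V_\nu^{SO(k)}$, with the replica $SO(k)$ generated by the bridge operators. It then notes that $\GSym^k(\Hcal)$ is the trivial ($\nu=0$) sector because the bridges annihilate it, and reads off $P\,\Com_k(\Mcal_n)\,P\cong\operatorname{End}(V_0)=\mathbb{C}$.

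\textbf{Your route.} You work entirely on the matchgate side. All weights of $\Hcal$ are simple and lie in $\{\pm\tfrac12\}^n$, so $k\omega$ has multiplicity one in $\Hcal^{\otimes k}$. Hence the cyclic module generated by $|\mathbf 0\rangle^{\otimes k}$, i.e.\ the span of the even orbit, is the unique copy of $V(k\omega)$. The odd matchgates then glue $V(k\omega)$ and $V(k\omega')$ into a single $O(2n)$-irrep.

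\textbf{Comparison.} The paper's argument is shorter and ties the lemma to the kernel-of-bridges description of $\GSym^k$ used elsewhere in the paper. However, it rests entirely on the cited Howe-duality structure. It is also schematic about why the matchgate factor $W_0$ is irreducible under the full group rather than splitting by parity. Your argument is self-contained and uses only the orbit-span definition given in the statement. It also makes explicit that the $O(2n)$ convention is essential: for the even subgroup alone, the block would be $\mathbb{C}P_+\oplus\mathbb{C}P_-$, which is two-dimensional. Finally, through Weyl's dimension formula for highest weight $k\omega$, it explains the factor $2$ in $D_k=2\prod_{i<j}\frac{k+2n-i-j}{2n-i-j}$.

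\textbf{A small correction.} Statements (i) and (ii) are not equivalent in general. Irreducibility alone gives $P\,\Com_k\,P=\mathbb{C}P$, but $XP=PXP$ needs multiplicity one. This is harmless here only because your weight-multiplicity argument proves (ii) independently.
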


\begin{proof}
The replicated matchgate representation has a commuting replica
\(SO(k)\) symmetry generated by the bridge operators $\widehat{\Lambda}_{ab}$, where $1\leq a<b\leq k$. The corresponding Gelfand--Tsetlin decomposition labels the blocks of the
matchgate commutant by irreducible \(SO(k)\) highest weights $\nu$.
Schematically, the dual-pair decomposition has the form
\begin{equation}
\label{eq:dual-pair-decomp}
    \Hcal^{\otimes k}
    \cong
    \bigoplus_{\nu}
    W_\nu^{\Mcal_n}
    \otimes
    V_\nu^{SO(k)},
\end{equation}
where \(W_\nu^{\Mcal_n}\) carries the diagonal matchgate action and
\(V_\nu^{SO(k)}\) is the corresponding replica \(SO(k)\) multiplicity
space. Hence:
\begin{equation}
\label{eq:commutant-dual-pair}
    \Com_k(\mathfrak{M}_n)
    \cong
    \bigoplus_\nu
    \mathbb{1}_{W_\nu^{\Mcal_n}}
    \otimes
    \operatorname{End}
    \left(
        V_\nu^{SO(k)}
    \right).
\end{equation}
The replicated vacuum \(|\mathbf{0}\rangle^{\otimes k}\) is annihilated by all
bridge operators \(\widehat{\Lambda}_{ab}\). Therefore it belongs to the trivial
\(SO(k)\) sector, whose highest weight is $\nu = 0$.
The projector \(P_{\GSym^k(\Hcal)}\) is precisely the projector onto this trivial \(SO(k)\) sector.  For the trivial \(SO(k)\) sector we have:
\[
    V_0^{SO(k)}\cong \mathbb C.
\]
Hence,
\[
    \operatorname{End}\left(V_0^{SO(k)}\right)
    =
    \operatorname{End}(\mathbb C)
    =
    \mathbb C.
\]
Thus, the commutant block supported on \(P_{\GSym^k(\Hcal)}\) is one-dimensional:
\[
    P_{\GSym^k(\Hcal)}
    \Com_k(\mathfrak{M}_n)
    P_{\GSym^k(\Hcal)}
    =
    \mathbb C P_{\GSym^k(\Hcal)}.
\]
This proves the multiplicity-one statement. Finally, since \(P_{\GSym^k(\Hcal)}\) is the full isotypic projector for the
\(\nu=0\) block, every element of the commutant preserves
\(\operatorname{Ran} (P_{\GSym^k(\Hcal)}) \). Therefore, $[X,P_{\GSym^k(\Hcal)}]=0$ 
for every commutant element \(X\). Combining this with the
one-dimensionality of the block gives:
\[
    XP_{\GSym^k(\Hcal)}
    =
    P_{\GSym^k(\Hcal)}XP_{\GSym^k(\Hcal)}
    =
    \lambda_XP_{\GSym^k(\Hcal)}.
\]
\end{proof}

\begin{lemma}[Matchgate Twirls]
\label{lem:scalar-action-gsym}
Let \(A\in\mathcal B(\Hcal^{\otimes k})\), and define the twirl:
\begin{equation}
    T_A^{(k)}
    :=
    \int_{\Mcal_n}
    U^{\otimes k}A(U^\dagger)^{\otimes k}\,dU.
\end{equation}
Then,
\begin{align}
    T_A^{(k)}P_{\GSym^k(\Hcal)}
    &=
    P_{\GSym^k(\Hcal)}T_A^{(k)}P_{\GSym^k(\Hcal)} \label{eq:scalar-action-TA} \\
    & =
    \frac{\Tr(P_{\GSym^k(\Hcal)}A)}{D_k}P_{\GSym^k(\Hcal)},
\end{align}
where $D_k:=\Tr ( P_{\GSym^k(\Hcal)} )$. If \(\Tr(P_{\GSym^k(\Hcal)}A)\neq0\) holds, we may define:
\begin{equation}
    \Gamma_A^{(k)}
    :=
    \frac{D_k}{\Tr(P_{\GSym^k(\Hcal)}A)}
    T_A^{(k)},
\end{equation}
and we have that:
\begin{equation}
    \Gamma_A^{(k)}P_{\GSym^k(\Hcal)}
    =
    P_{\GSym^k(\Hcal)}.
\end{equation}
\end{lemma}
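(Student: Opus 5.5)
The plan is to show that $T_A^{(k)}$ lies in the commutant $\Com_k(\Mcal_n)$ and then read off the scalar from Lemma \ref{lem:multiplicity-one-gsym}. For any $V\in\Mcal_n$, invariance of the Haar measure under $U\mapsto VU$ gives
\begin{equation}
V^{\otimes k}T_A^{(k)}(V^\dagger)^{\otimes k}=\int_{\Mcal_n}(VU)^{\otimes k}A\bigl((VU)^\dagger\bigr)^{\otimes k}\,dU=T_A^{(k)}.
\end{equation}
Hence $T_A^{(k)}\in\Com_k(\Mcal_n)$, and Equation \eqref{eq:commutant-scalar-on-P0} applies with $X=T_A^{(k)}$. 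This immediately gives $T_A^{(k)}P_{\GSym^k(\Hcal)}=P_{\GSym^k(\Hcal)}T_A^{(k)}P_{\GSym^k(\Hcal)}=\lambda P_{\GSym^k(\Hcal)}$ for some scalar $\lambda$, which is the first equality in Equation \eqref{eq:scalar-action-TA}.

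To identify $\lambda$, I will take traces of $P_{\GSym^k(\Hcal)}T_A^{(k)}P_{\GSym^k(\Hcal)}=\lambda P_{\GSym^k(\Hcal)}$. The right-hand side has trace $\lambda D_k$. On the left, cyclicity lets me write the trace as $\int\Tr\bigl(P_{\GSym^k(\Hcal)}U^{\otimes k}A(U^\dagger)^{\otimes k}\bigr)dU$. The key fact is that $P_{\GSym^k(\Hcal)}$ commutes with $U^{\otimes k}$. This holds because $\GSym^k(\Hcal)$ is spanned by vectors $(W|\mathbf 0\rangle)^{\otimes k}$, and $U^{\otimes k}$ maps these to $(UW|\mathbf 0\rangle)^{\otimes k}$ with $UW\in\Mcal_n$. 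The subspace is therefore invariant under the unitary representation, so its orthogonal projector commutes with it. Using this, each integrand equals $\Tr(P_{\GSym^k(\Hcal)}A)$. Since $dU$ is the normalized Haar measure, $\lambda=\Tr(P_{\GSym^k(\Hcal)}A)/D_k$.

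The last claim is then immediate. Multiplying the identity above by $D_k/\Tr(P_{\GSym^k(\Hcal)}A)$, which is well-defined when that trace is nonzero, gives $\Gamma_A^{(k)}P_{\GSym^k(\Hcal)}=P_{\GSym^k(\Hcal)}$.

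The only potentially delicate step is the appeal to Lemma \ref{lem:multiplicity-one-gsym}. That lemma needs $T_A^{(k)}$ to be a genuine commutant element of the same diagonal representation for which multiplicity one was established. This is why I take the group to be the full $O(2n)$-convention matchgate group with normalized Haar measure, consistent with Lemma \ref{lem:multiplicity-one-gsym}. Everything else is routine cyclicity and invariance.
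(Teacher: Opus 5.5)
Your proposal is correct and follows essentially the same route as the paper: Haar invariance puts $T_A^{(k)}$ in $\Com_k(\Mcal_n)$, Lemma~\ref{lem:multiplicity-one-gsym} gives a scalar action on $P_{\GSym^k(\Hcal)}$, and the scalar is identified by taking traces using cyclicity and $[P_{\GSym^k(\Hcal)},U^{\otimes k}]=0$. Your explicit justification of that commutation, via invariance of the spanning set $\{(W|\mathbf 0\rangle)^{\otimes k}\}$ under the unitary representation, is a small detail that the paper leaves implicit.
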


\begin{proof}
By Haar invariance, \(T_A^{(k)}\) commutes with the diagonal matchgate action, so $T_A^{(k)} \in \Com_k(\mathfrak{M}_n)$. By Lemma~\ref{lem:multiplicity-one-gsym}, \(T_A^{(k)}\) acts as a scalar on the Gaussian-symmetric block:
\begin{equation}
    T_A^{(k)}P_{\GSym^k(\Hcal)}
    =
    \lambda_A P_{\GSym^k(\Hcal)}. \label{eq:TA-projector}
\end{equation}
Taking the trace gives:
\[
    \lambda_A D_k
    =
    \Tr(P_{\GSym^k(\Hcal)}T_A^{(k)}).
\]
Now use the definition of \(T_A^{(k)}\), cyclicity of the trace, and $[P_{\GSym^k(\Hcal)},U^{\otimes k}]=0$ to obtain: 
\begin{align}
    \Tr(P_{\GSym^k(\Hcal)}T_A^{(k)})
    &=
    \int_{\Mcal_n}
    \Tr\left[
        P_{\GSym^k(\Hcal)}
        U^{\otimes k}A(U^\dagger)^{\otimes k}
    \right]dU
    \\
    &=
    \int_{\Mcal_n}
    \Tr\left[
        (U^\dagger)^{\otimes k}
        P_{\GSym^k(\Hcal)}
        U^{\otimes k}
        A
    \right]dU
    \\
    &=
    \Tr(P_{\GSym^k(\Hcal)}A).
\end{align}
Hence,
\[
    \lambda_A
    =
    \frac{\Tr(P_{\GSym^k(\Hcal)}A)}{D_k}.
\]
This proves Equation \eqref{eq:scalar-action-TA}. The normalized statement follows immediately.
\end{proof}

\begin{lemma}[Partial Projections]
\label{lem:support-partial-projection}
Let \(k,m\geq 1\). Then,
\begin{equation}
\label{eq:K-inclusion-prop}
    \GSym^{k+m}(\Hcal)
    \subseteq
    \Hcal^{\otimes k} \otimes \GSym^m(\Hcal).
\end{equation}
Consequently, if a density operator
\(\rho^{k+m}\in\mathcal B(\Hcal^{\otimes(k+m)})\) is supported on
\(\GSym^{k+m}(\Hcal)\), namely:
\begin{equation}
\label{eq:rho-supported-Kkm}
    \rho^{k+m}
    =
    P_{\GSym^{k+m}(\Hcal)}\rho^{k+m}P_{\GSym^{k+m}(\Hcal)},
\end{equation}
then:
\begin{equation}
\label{eq:left-partial-projection}
    \left(
        \mathbb{1}^{\otimes k}\otimes P_{\GSym^m(\Hcal)}
    \right)
    \rho^{k+m}
    =
    \rho^{k+m},
\end{equation}
and
\begin{equation}
\label{eq:right-partial-projection}
    \rho^{k+m}
    \left(
        \mathbb{1}^{\otimes k}\otimes P_{\GSym^m(\Hcal)}
    \right)
    =
    \rho^{k+m}.
\end{equation}
Equivalently,
\begin{equation}
\label{eq:two-sided-partial-projection}
    \rho^{k+m}
    =
    \left(
        \mathbb{1}^{\otimes k}\otimes P_{\GSym^m(\Hcal)}
    \right)
    \rho^{k+m}
    \left(
        \mathbb{1}^{\otimes k}\otimes P_{\GSym^m(\Hcal)}
    \right).
\end{equation}
\end{lemma}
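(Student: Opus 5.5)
The inclusion \eqref{eq:K-inclusion-prop} can be proved directly on spanning vectors, and the operator identities then follow from support considerations.

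By Definition~\ref{def:gsym}, $\GSym^{k+m}(\Hcal)$ is spanned by the vectors $(U|\mathbf{0}\rangle)^{\otimes(k+m)}$ with $U\in\Mcal_n$. Each such vector factorises as
\begin{equation}
(U|\mathbf{0}\rangle)^{\otimes k}\otimes (U|\mathbf{0}\rangle)^{\otimes m}.
\end{equation}
The second factor lies in $\GSym^m(\Hcal)$ by definition, so every spanning vector lies in $\Hcal^{\otimes k}\otimes\GSym^m(\Hcal)$. That set is a linear subspace, so it contains the whole span, which proves \eqref{eq:K-inclusion-prop}.

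The same inclusion can be checked through the bridge-operator characterisation. For replica indices $k<a<b\leq k+m$, the operator $\Lambda_{ab}$ on $\Hcal^{\otimes(k+m)}$ equals $\mathbb{1}^{\otimes k}\otimes\Lambda_{ab}$, where the second factor is the corresponding bridge acting on the last $m$ replicas. A vector annihilated by all bridges is in particular annihilated by these. The joint kernel of $\{\mathbb{1}^{\otimes k}\otimes\Lambda_{ab}\}_{k<a<b}$ is $\Hcal^{\otimes k}\otimes\bigcap_{a<b}\ker\Lambda_{ab}$, which is $\Hcal^{\otimes k}\otimes\GSym^m(\Hcal)$. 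The only point that needs care is this identification of the joint kernel of $\mathbb{1}\otimes A_i$ with $\Hcal^{\otimes k}\otimes\bigcap_i\ker A_i$. It follows by expanding a vector in a basis of the first tensor factor.

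For the operator statements, write $P_{k+m}:=P_{\GSym^{k+m}(\Hcal)}$ and $\tilde P:=\mathbb{1}^{\otimes k}\otimes P_{\GSym^m(\Hcal)}$. Both are orthogonal projectors. By the inclusion, $\operatorname{Ran}P_{k+m}\subseteq\operatorname{Ran}\tilde P$, and therefore $\tilde P P_{k+m}=P_{k+m}$. Taking adjoints gives $P_{k+m}\tilde P=P_{k+m}$.

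Using \eqref{eq:rho-supported-Kkm}, we then get
\begin{equation}
\tilde P\rho^{k+m}=\tilde P P_{k+m}\rho^{k+m}P_{k+m}=P_{k+m}\rho^{k+m}P_{k+m}=\rho^{k+m},
\end{equation}
which is \eqref{eq:left-partial-projection}. The right-sided identity \eqref{eq:right-partial-projection} follows in the same way, or by taking adjoints since $\rho^{k+m}$ is Hermitian. Combining the two gives \eqref{eq:two-sided-partial-projection}.
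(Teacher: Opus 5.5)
Your proof is correct and follows the paper's argument: you factorise the spanning vectors $(U|\mathbf{0}\rangle)^{\otimes(k+m)}$ to get the inclusion, then use that $\mathbb{1}^{\otimes k}\otimes P_{\GSym^m(\Hcal)}$ acts as the identity on the support of $\rho^{k+m}$. Your bridge-operator cross-check and your explicit identity $\tilde P P_{k+m}=P_{k+m}$ are valid additions, but the argument does not need them.
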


\begin{proof}
By definition, the Gaussian-symmetric subspace on \(k+m\) replicas is given by:
\begin{equation}
\label{eq:Kkm-definition-proof}
    \GSym^{k+m}(\Hcal)
    =
    \operatorname{span}
    \left\{
        (U|\mathbf{0}\rangle)^{\otimes(k+m)}
        :
        U\in\Mcal_n
    \right\}.
\end{equation}
Every generator of this span factorizes with respect to the splitting
\(\Hcal^{\otimes(k+m)}=\Hcal^{\otimes k}\otimes\Hcal^{\otimes m}\) as:
\begin{equation}
\label{eq:generator-factorization-proof}
    (U|\mathbf{0}\rangle)^{\otimes(k+m)}
    =
    (U|\mathbf{0}\rangle)^{\otimes k}
    \otimes
    (U|\mathbf{0}\rangle)^{\otimes m}.
\end{equation}
For every \(U\in\Mcal_n\), the second tensor factor belongs to
\(\GSym^m(\Hcal)\), because
\begin{equation}
\label{eq:last-factor-in-Km}
    (U|\mathbf{0}\rangle)^{\otimes m}
    \in
    \operatorname{span}
    \left\{
        (V|\mathbf{0}\rangle)^{\otimes m}
        :
        V\in\Mcal_n
    \right\}
    =
    \GSym^m(\Hcal).
\end{equation}
Therefore,
\begin{equation}
\label{eq:generator-in-product-space}
    (U|\mathbf{0}\rangle)^{\otimes(k+m)}
    \in
    \Hcal^{\otimes k}\otimes\GSym^m(\Hcal) \quad \forall U\in\Mcal_n.
\end{equation}
Taking the linear span over all \(U\) gives:
\begin{equation}
    \GSym^{k+m}(\Hcal)
    \subseteq
    \Hcal^{\otimes k}\otimes\GSym^m(\Hcal),
\end{equation}
which proves \eqref{eq:K-inclusion-prop}. Now suppose \(\rho^{k+m}\) is supported on \(\GSym^{k+m}(\Hcal)\). By the
inclusion just proved, the support of \(\rho^{k+m}\) is contained in
\(\Hcal^{\otimes k}\otimes\GSym^m(\Hcal)\). The orthogonal projector on
\(\Hcal^{\otimes k}\otimes\GSym^m(\Hcal)\) is given by:
\begin{equation}
\label{eq:partial-projector-product}
    \mathbb{1}^{\otimes k}\otimes P_{\GSym^m(\Hcal)}.
\end{equation}
Hence, this projector acts as the identity on the support of
\(\rho^{k+m}\). Therefore:
\begin{equation}
    \left(
        \mathbb{1}^{\otimes k}\otimes P_{\GSym^m(\Hcal)}
    \right)
    \rho^{k+m}
    =
    \rho^{k+m}, \label{eq:left-partial-projection-proof}
\end{equation}
and similarly
\begin{equation}
    \rho^{k+m}
    \left(
        \mathbb{1}^{\otimes k}\otimes P_{\GSym^m(\Hcal)}
    \right)
    =
    \rho^{k+m}.
\end{equation}
Combining the two identities gives Equation
\eqref{eq:two-sided-partial-projection}.
\end{proof}

We are now ready to prove Theorem \ref{thm:fermionic-gaussian-definetti}.

\begin{proof}[Proof of Theorem \ref{thm:fermionic-gaussian-definetti}]
For a density operator $\rho^{k +m}$ supported on $\GSym^{k+m}(\Hcal)$, we define the following operators:
\begin{align}
\rho^k_U &:= D_m \times \mathrm{Tr}_m \label{def:rhok}(\mathbb{1}^{\otimes k} \otimes P^{m, 0}_U \cdot \rho^{k + m})  \geq 0 \\
\tilde{\rho}_U^k &:= P_U^{k, r} \rho^k_U P_U^{k, r} \geq 0, \label{def:rhokbar}
\end{align}
where $D_m:=\Tr ( P_{\GSym^m(\Hcal)} )$, and $P_U^{k, r}$ are as given in Definition \ref{def:iid}. The positivity of $\rho^k_U$ (and also of $\tilde{\rho}_U^k$) follows from the fact that $\rho^{k + m}$ is positive and $P^{m, 0}_U$ is a rank one projector. If $\rho^{k + m}$ is rank one, then so is $\rho^k_U$, and hence $\tilde{\rho}_U^k$ is of rank at most one.

We begin by integrating $\rho^k_U$ over the matchgate group:
\begin{align}
\int \rho^k_U dU &= D_m \times \int \mathrm{Tr}_m (\mathbb{1}^{\otimes k} \otimes P^{m, 0}_U \cdot \rho^{k + m})  dU \\
&= \mathrm{Tr}_m \left[ \left( \mathbb{1}^{\otimes k} \otimes D_m \int P^{m, 0}_U dU \right) \rho^{k + m} \right]. \label{eq:rho_integral}\\
\end{align}
By Definition \ref{def:iid} we have $P^{m, 0}_U = U^{\otimes m} P_{\mathbb{1}}^{m, 0} (U^\dag)^{\otimes m}$, where $P_{\mathbb{1}}^{m, 0} = \nu_0^{\otimes m} = |\mathbf{0} \rangle \langle \mathbf{0} |^{\otimes m}$ is the rank-one projector onto the $m$-replicated vacuum. The matchgate twirl of this projector is an element of $\Com_m(\mathfrak{M}_n)$, and was previously evaluated in Equation (160) in \cite{Sierant_2026}:
\begin{equation}
\label{eq:coherent-resolution}
    T_{\nu_0^{\otimes m}}^{(m)} = \int_{\Mcal_n} (U\nu_0 U^\dagger)^{\otimes m}\,dU = \frac{1}{D_m}P_{\GSym^m(\Hcal)}.
\end{equation}
This is the fermionic Gaussian analogue of the usual Haar coherent-state
resolution of the symmetric projector. Putting everything together, we may write:
\begin{align}
D_m \int P^{m, 0}_U dU &=  D_m \int U^{\otimes m} P_{\mathbb{1}}^{m, 0} (U^\dag)^{\otimes m} dU \\
&= D_m \int U^{\otimes m} |\mathbf{0} \rangle \langle \mathbf{0} |^{\otimes m} (U^\dag)^{\otimes m} dU \\
&= P_{\GSym^m(\Hcal)}.
\end{align}
Hence, by Equation \eqref{eq:left-partial-projection-proof} of Lemma \ref{lem:support-partial-projection} the integral simplifies into:
\begin{align}
\int \rho^k_U dU &= \mathrm{Tr}_m \left[ \left( \mathbb{1}^{\otimes k} \otimes P_{\GSym^m(\Hcal)} \right) \rho^{k + m} \right]\\
&= \mathrm{Tr}_m \left[  \rho^{k + m} \right].
\end{align}

Next, we wish to bound the trace norm between $\mathrm{Tr}_m ( \rho^{k + m})$ and $\int \tilde{\rho}^k_U dU$ (the latter previously defined in Equation \eqref{def:rhokbar}):
\begin{align}
\delta &:= \| \mathrm{Tr}_m(  \rho^{k + m}) - \int \tilde{\rho}^k_U dU \|_1 \\
&= \|\int(\rho_U^k - P_U^{k, r} \rho^k_U P_U^{k, r} ) dU \|_1,
\end{align}
where we applied Lemma \ref{lemma:projectors}. We can write this as:
\begin{align}
\delta &=   \|\int(\rho_U^k - P_U^{k, r} \rho^k_U P_U^{k, r} ) dU \|_1 \\
& \leq 3 \| \int (\mathbb{1}^{\otimes k} - P_U^{k, r}) \rho_U^k dU \|_1\\
& \leq 3 \| \int (P_U^{k, r})^{\bot} \rho_U^k dU \|_1, \label{eq:delta-orthogonal-projector}
\end{align}
where $(P_U^{k, r})^{\bot} = (\mathbb{1}^{\otimes k} - P_U^{k, r})$ is a projector orthogonal to $P_U^{k, r}$. Since $P_U^{k, r} =U^{\otimes k} P_{\mathbb{1}}^{k, r} (U^\dag)^{\otimes k}$ (Definition \ref{def:iid}), we can write:
\[ (P_U^{k, r})^{\bot} = \mathbb{1}^{\otimes k} - U^{\otimes k} P_{\mathbb{1}}^{k, r} (U^\dag)^{\otimes k},\] 

\noindent which lets us write $(P_U^{k, r})^{\bot} =  U^{\otimes k} (P_{\mathbb{1}}^{k, r})^{\bot} (U^\dag)^{\otimes k}$. We use this, along with Equation \eqref{def:rhok} to rewrite the integrand in Equation \eqref{eq:delta-orthogonal-projector}:
\begin{align}
&(P_U^{k, r})^{\bot} \rho_U^k \\
&= D_m \, (P_U^{k, r})^{\bot}\mathrm{Tr}_m (\mathbb{1}^{\otimes k} \otimes P^{m, 0}_U \, \rho^{k + m}) \\
& = D_m \, \mathrm{Tr}_m ((P_U^{k, r})^{\bot} \otimes P^{m, 0}_U \, \rho^{k + m}) \\
& = D_m \, \mathrm{Tr}_m (U^{\otimes k+m} [(P_{\mathbb{1}}^{k, r})^{\bot} \otimes P^{m, 0}_{\mathbb{1}} ] (U^\dag)^{\otimes k+m} \, \rho^{k+m}).
\end{align}
Hence, we have that:
\begin{align}
    &\int (P_U^{k, r})^{\bot}  \rho_U^k dU \\
    & = D_m  \mathrm{Tr}_m \left[ \int U^{\otimes k+m} [(P_{\mathbb{1}}^{k, r})^{\bot} \otimes P^{m, 0}_{\mathbb{1}} ] (U^\dag)^{\otimes k+m} dU \, \rho^{k+m} \right] \nonumber\\
&= D_m \mathrm{Tr}_m \left[ T^{(k+m)}_{(P_{\mathbb{1}}^{k, r})^{\bot} \otimes P^{m, 0}_{\mathbb{1}}} \cdot \rho^{k+m} \right] 
\end{align}
where the operator $T^{(k+m)}_{(P_{\mathbb{1}}^{k, r})^{\bot} \otimes P^{m, 0}_{\mathbb{1}}}$ is the matchgate twirl of $(P_{\mathbb{1}}^{k, r})^{\bot} \otimes P^{m, 0}_{\mathbb{1}}$. Next, we need to upper bound the trace norm:
\begin{equation} 
    \left \|\mathrm{Tr}_m \left( T^{(k+m)}_{(P_{\mathbb{1}}^{k, r})^{\bot} \otimes P^{m, 0}_{\mathbb{1}}} \cdot \rho^{k+m} \right) \right \|_1. 
\end{equation} 
Since $\rho^{k+m}$ is supported on the Gaussian-symmetric subspace, we are free to insert a projector $P_{\GSym^{k+m}(\Hcal)}$ onto the Gaussian-symmetric subspace into the brackets. By Lemma \ref{lem:multiplicity-one-gsym} (or Equation \eqref{eq:TA-projector}) we then have:
\begin{align}
&\left \|\mathrm{Tr}_m \left( T^{(k+m)}_{(P_{\mathbb{1}}^{k, r})^{\bot} \otimes P^{m, 0}_{\mathbb{1}}} \cdot \rho^{k+m} \right) \right \|_1 \\
&= \left \|\mathrm{Tr}_m \left( T^{(k+m)}_{(P_{\mathbb{1}}^{k, r})^{\bot} \otimes P^{m, 0}_{\mathbb{1}}} \cdot P_{\GSym^{k+m}(\Hcal)} \cdot \rho^{k+m} \right) \right \|_1 \\
&= \gamma \times \left \|\mathrm{Tr}_m \left( P_{\GSym^{k+m}(\Hcal)} \cdot \rho^{k+m} \right) \right \|_1 \\
&= \gamma \times \Tr(\rho^{k+m}) = \gamma.
\end{align}
Overall, the error $\delta$ satisfies:
\begin{align}
\label{eq:delta_raw}
\delta \leq 3 \times D_m \times \gamma
\end{align}
and $\gamma$ is a function which we will bound, given via Lemma \ref{lem:scalar-action-gsym}:
\begin{equation}
\label{eq:gamma-trace}
\gamma
:=
\frac{1}{D_{k+m}} \Tr\left[
P_{\GSym^{k+m}(\Hcal)}
\left(
(P_{\mathbb{1}}^{k,r})^\perp\otimes P_{\mathbb{1}}^{m,0}
\right)
\right].
\end{equation}

Next, we derive a closed form expression for $\gamma$, which in turn lets us provide an upper bound on $\delta$. To do so, we will use the Gaussian orbit structure to evaluate the trace in Equation \eqref{eq:gamma-trace}. First, we define an additional object, the vacuum-overlap parameter $q_U$ for $U\in\Mcal_n$:
\begin{equation}
\label{eq:qU-def}
    q_U
    :=
    \left|
        \langle \mathbf{0}|U|\mathbf{0}\rangle
    \right|^2,
\end{equation}
where for every integer \(\ell\geq 1\):
\begin{equation}
\label{eq:qU-power-def}
    q_U^\ell
    =
    \left|
        \langle \mathbf{0}|U|\mathbf{0}\rangle
    \right|^{2\ell}.
\end{equation}
The dimension of the Gaussian-symmetric subspace \(\GSym^\ell(\Hcal)\) is given by: 
\begin{equation}
D_\ell := \Tr (P_{\GSym^\ell(\Hcal)}) = \dim (\GSym^\ell(\Hcal)).
\end{equation}
With the full matchgate convention used here, an analytic formula for $D_\ell$ was given in Equation (F7) of \cite{Sierant_2026}:
\begin{equation}
\label{eq:Dell-product}
    D_\ell
    =
    2
    \prod_{1\leq i<j\leq n}
    \frac{\ell+2n-i-j}{2n-i-j}.
\end{equation}
The moments \(q^\ell_U\), where $\ell\geq 1$, are determined by the Gaussian-symmetric dimension:
\begin{align}
    \int_{\Mcal_n} q_U^\ell\,dU
    &=
    \int_{\Mcal_n}
    \Tr\left[
        (| \mathbf{0}\rangle\langle\mathbf{0}|)^{\otimes \ell}
        (U|\mathbf{0}\rangle\langle\mathbf{0}|U^\dagger)^{\otimes\ell}
    \right]dU
    \\
    &=
    \Tr\left[
        (| \mathbf{0}\rangle\langle\mathbf{0}|)^{\otimes \ell}
        \frac{P_{\GSym^\ell(\Hcal)}}{D_\ell}
    \right] =
    \frac{1}{D_\ell}, \label{eq:qU-moment}
\end{align}
which holds because $|\mathbf{0}\rangle^{\otimes \ell}\in \GSym^\ell(\mathcal{H})$, or equivalently $(| \mathbf{0}\rangle\langle\mathbf{0}|)^{\otimes \ell} \leq P_{\GSym^\ell(\Hcal)}$.

We are now ready to obtain a bound on $\gamma$. We continue from Equation \eqref{eq:gamma-trace}, rewriting it as a single trace:
\begin{equation}
\label{eq:gamma-normalized-projector}
\gamma
=
\Tr\left[
\frac{P_{\GSym^{k+m}(\Hcal)}}{D_{k+m}}
\left(
(P_{\mathbb{1}}^{k,r})^\perp\otimes P_{\mathbb{1}}^{m,0}
\right)
\right].
\end{equation}
Using the coherent-state resolution in Equation \eqref{eq:coherent-resolution},
\begin{equation}
\label{eq:coherent-resolution-km}
\frac{P_{\GSym^{k+m}(\Hcal)}}{D_{k+m}}
=
\int_{\Mcal_n}
(U|\mathbf{0}\rangle\langle\mathbf{0}|U^\dagger)^{\otimes(k+m)}
\,dU.
\end{equation}
Overall, Equation \eqref{eq:gamma-trace} becomes:
\begin{align}
\gamma
&=
\Tr\left[
\left(
\int_{\Mcal_n}
(U|\mathbf{0}\rangle\langle\mathbf{0}|U^\dagger)^{\otimes(k+m)}
\,dU
\right)
\left(
(P_{\mathbb{1}}^{k,r})^\perp\otimes P_{\mathbb{1}}^{m,0}
\right)
\right]
\\
&=
\int_{\Mcal_n}
\Tr\left[
(U|\mathbf{0}\rangle\langle\mathbf{0}|U^\dagger)^{\otimes(k+m)}
\left(
(P_{\mathbb{1}}^{k,r})^\perp\otimes P_{\mathbb{1}}^{m,0}
\right)
\right]dU .
\label{eq:gamma-integral-first}
\end{align}
Next, we use the factorisation:
\begin{equation}
(U|\mathbf{0}\rangle\langle\mathbf{0}|U^\dagger)^{\otimes(k+m)}
=
(U|\mathbf{0}\rangle\langle\mathbf{0}|U^\dagger)^{\otimes k}
\otimes
(U|\mathbf{0}\rangle\langle\mathbf{0}|U^\dagger)^{\otimes m}.
\end{equation}
Therefore, using the identity:
\[
\Tr[(X_1\otimes X_2)(Y_1\otimes Y_2)]
=
\Tr[X_1Y_1]\Tr[X_2Y_2],
\]
we obtain:
\begin{equation}
\gamma = \int \Tr [ (U|\mathbf{0}\rangle\langle\mathbf{0}|U^\dagger)^{\otimes k} (P_{\mathbb{1}}^{k,r})^\perp] \Tr [ (U|\mathbf{0}\rangle\langle\mathbf{0}|U^\dagger)^{\otimes m} P_{\mathbb{1}}^{m,0} ]dU.
\label{eq:gamma-factorized}
\end{equation}
Taking $P_{\mathbb{1}}^{m,0}=\nu_0^{\otimes m} = (|\mathbf{0}\rangle\langle\mathbf{0}|)^{\otimes m}$ with Equations \eqref{eq:qU-def} and \eqref{eq:qU-power-def}, we have:
\begin{equation}
\Tr\left[
(U|\mathbf{0}\rangle\langle\mathbf{0}|U^\dagger)^{\otimes m}
P_{\mathbb{1}}^{m,0}
\right]
=
q_U^m.
\end{equation}
Moreover, Definition \ref{def:iid} of $(P_{\mathbb{1}}^{k,r})^\perp$ lets us explicitly decompose the second factor as:
\begin{align}
\Tr\left[
(U|\mathbf{0}\rangle\langle\mathbf{0}|U^\dagger)^{\otimes k} (P_{\mathbb{1}}^{k,r})^\perp \right] &= \sum_{\substack{S\subseteq\{1,\ldots,k\} \\ |S|>r}} (1-q_U)^{|S|}q_U^{k-|S|} \\
&= \sum_{s=r+1}^{k} \binom{k}{s} (1-q_U)^s q_U^{k-s}.
\end{align}
Therefore, $\gamma$ can be expressed as:
\begin{equation}
\label{eq:gamma-binomial-integral}
\gamma = \int_{\Mcal_n} q_U^m \sum_{s=r+1}^{k} \binom{k}{s} (1-q_U)^s q_U^{k-s} \,dU.
\end{equation}

We now decompose $\gamma$ into an explicit sum over moments $q_U$. For \(0\leq q\leq1\), we have the following identity for the tail of a binomial distribution:
\begin{align}
\label{eq:binomial-tail}
\sum_{s=r+1}^{k} \binom{k}{s} (1-q)^s q^{k-s} &= \Pr[\operatorname{Bin}(k,1-q)\geq r+1] \\
&= (1 - q)^{r+1} \sum_{t=0}^{k-r-1} \binom{r+t}{r} q^t,
\end{align}
where the factor $(1 - q)^{r+1}$ can be expanded using the binomial theorem:
\begin{equation}
\label{eq:expand-last-factor}
(1-q)^{r+1} = \sum_{a=0}^{r+1} (-1)^a \binom{r+1}{a} q^a.
\end{equation}
Direct substitution into Equation \eqref{eq:gamma-binomial-integral} and evaluating the integral (Equation \eqref{eq:qU-moment}) gives:
\begin{align}
\gamma &= \sum_{t=0}^{k-r-1} \binom{r+t}{r} \int_{\Mcal_n} q_U^{m+t} (1 - q_U)^{r+1} dU \\
&= \sum_{t=0}^{k-r-1} \binom{r+t}{r} \sum_{a=0}^{r+1} (-1)^a \binom{r+1}{a} \int_{\Mcal_n} q_U^{a + m + t} dU \\
&= \sum_{t=0}^{k-r-1} \sum_{a=0}^{r+1} \binom{r+t}{r} \binom{r+1}{a} (-1)^a \frac{1}{D_{m+a+t}}. \label{eq:gamma-binomial-full}
\end{align}
Along with the definition for $\delta$, this gives Equation \eqref{eq:delta-full} in the main text. Next, we derive a simplified upper bound on $\delta$. If at least \(r+1\) failures occur, then there exists a subset of
\(r+1\) prescribed positions in which all entries are failures. By the
union bound,
\begin{equation}
\label{eq:binomial-union-bound}
\Pr[\operatorname{Bin}(k,1-q)\geq r+1]
\leq
\binom{k}{r+1}(1-q)^{r+1}.
\end{equation}
Applying this estimate, as well as the binomial expansion in Equation \eqref{eq:expand-last-factor} to \(q=q_U\) in Equation \eqref{eq:gamma-binomial-integral}, we obtain:
\begin{equation}
\label{eq:gamma-union-bound-integral}
\gamma \leq \binom{k}{r+1} \int_{\Mcal_n} q_U^m(1-q_U)^{r+1} \,dU.
\end{equation}
Then, applying the moment identity for the vacuum-overlap parameter, we get:
\begin{equation}
\label{eq:gamma-upper}
\gamma
\leq
\binom{k}{r+1}
\sum_{a=0}^{r+1}
(-1)^a
\binom{r+1}{a}
\frac{1}{D_{m+a}}.
\end{equation}
Combining this with Equation \eqref{eq:delta_raw} gives a bound for $\delta$:
\begin{equation}
\label{eq:delta-finite-difference}
\delta
\leq
3
\binom{k}{r+1}
D_m
\sum_{a=0}^{r+1}
(-1)^a
\binom{r+1}{a}
\frac{1}{D_{m+a}}.
\end{equation}

We now bound the sum in Equation \eqref{eq:delta-finite-difference}, giving a simplified form for $\delta$. For convenience, we set:
\begin{equation}
\label{eq:def_h}
h:=r+1,
\end{equation}
and
\begin{equation}
\label{eq:def_Ln}
L_n:=\frac{n(n-1)}{2}.
\end{equation}
We assume \(n\geq2\), so that \(L_n>0\). $L_n$ gives the number of factors in the product formula for $D_\ell$ (Equation \eqref{eq:Dell-product}). Write \(D_x\) as a polynomial in \(x\):
\begin{equation}
D_x = 2 \prod_{1\leq i<j\leq n} \frac{x+2n-i-j}{2n-i-j} = C_n \prod_{\alpha=1}^{L_n}(x+c_\alpha),
\end{equation}
where:
\begin{equation}
C_n := 2 \prod_{1\leq i<j\leq n} \frac{1}{2n-i-j},
\end{equation}
and where \(c_\alpha\) runs over the multiset:
\begin{equation}
\{c_\alpha\}_{\alpha=1}^{L_n}
    =
    \{2n-i-j:1\leq i<j\leq n\}.
\end{equation}
In particular, it satisfies $c_\alpha\geq1$. Next, define:
\begin{equation}
\label{eq:definition_fx}
f(x):=\frac{1}{D_x}
=
C_n^{-1}\prod_{\alpha=1}^{L_n}(x+c_\alpha)^{-1}.
\end{equation}
With the convention $\Delta f(m):=f(m+1)-f(m)$, we write the \(h\)-fold forward finite difference:
\begin{equation} \label{eq:finite-difference-definition}
    \Delta^h f(m)
    =
    \sum_{a=0}^{h}
    (-1)^{h-a}
    \binom{h}{a}
    f(m+a),
\end{equation}
which gives us:
\begin{equation}
    (-1)^h\Delta^h f(m)
    =
    \sum_{a=0}^{h}
    (-1)^a
    \binom{h}{a}
    f(m+a).
\end{equation}
From Equation \eqref{eq:definition_fx} we have \(f(m+a)=1/D_{m+a}\), hence:
\begin{equation}    
    (-1)^h\Delta^h f(m) = \sum_{a=0}^{h} (-1)^a  \binom{h}{a} \frac{1}{D_{m+a}}. \label{eq:finite-difference}
\end{equation}
The standard integral representation of finite differences gives:
\begin{equation}
\label{eq:standard_integral}
\Delta^h f(m)
=
\int_{[0,1]^h}
f^{(h)}(m+t_1+\cdots+t_h)\,dt_1\cdots dt_h.
\end{equation}
The above identity follows by iterating the fundamental theorem of calculus.
For \(h=1\),
\[
    \Delta f(m)=f(m+1)-f(m)
    =
    \int_0^1 f'(m+t_1)\,dt_1.
\]
Applying the same argument successively to \(\Delta f,\Delta^2 f,\ldots, \Delta^h f\) gives Equation \eqref{eq:standard_integral}. We now find the derivative $f^{(h)}(x)$, using the Leibniz rule on the product $\prod_\alpha(x+c_\alpha)^{-1}$:
\begin{equation}
f^{(h)}(x) = \frac{1}{C_n} \sum_{\substack{\beta_1,\ldots,\beta_{L_n}\geq0\\
\beta_1+\cdots+\beta_{L_n}=h}} \binom{h}{\beta_1, .. ,\beta_{L_n}} \prod_{\alpha=1}^{L_n} \frac{d^{\beta_\alpha}}{dx^{\beta_\alpha}} (x+c_\alpha)^{-1}.
\end{equation}
Each derivative in the product evaluates to:
\begin{equation}
    \frac{d^{\beta_\alpha}}{dx^{\beta_\alpha}}
    (x+c_\alpha)^{-1}
    =
    (-1)^{\beta_\alpha}\beta_\alpha!
    (x+c_\alpha)^{-1-\beta_\alpha}.
\end{equation}
Overall, the $\beta_\alpha!$ factors cancel against the denominator of the multinomial coefficient, and the signs combine to give $\prod_\alpha(-1)^{\beta_\alpha}=(-1)^h$. We obtain:
\begin{equation}
f^{(h)}(x) = (-1)^h C_n^{-1} h! \sum_{\substack{\beta_1,\ldots,\beta_{L_n}\geq0\\
\beta_1+\cdots+\beta_{L_n}=h}} \prod_{\alpha=1}^{L_n} (x+c_\alpha)^{-1-\beta_\alpha}.
\end{equation}
Substituting $f(x)$ from Equation \eqref{eq:definition_fx} back into this expression and rearranging yields:
\begin{equation}
(-1)^h f^{(h)}(x) = f(x)\, h! \sum_{\substack{\beta_1,\ldots,\beta_{L_n}\geq0\\ \beta_1+\cdots+\beta_{L_n}=h}} \prod_{\alpha=1}^{L_n} (x+c_\alpha)^{-\beta_\alpha}.
\end{equation}
Since $x+c_\alpha\geq x+1$, the right hand side is a completely monotone sum of positive terms. We bound each term as:
\begin{equation}
\prod_{\alpha=1}^{L_n} (x+c_\alpha)^{-\beta_\alpha} \leq \prod_{\alpha=1}^{L_n} (x+1)^{-\beta_\alpha} = (x+1)^{-h},
\end{equation}
and the number of terms in the weak composition of \(h\) into \(L_n\) parts, given by the binomial coefficient \(\binom{L_n+h-1}{h}\). Overall, we obtain:
\begin{align}
(-1)^h f^{(h)}(x)
&\leq
f(x)\,
h!
\binom{L_n+h-1}{h}
\frac{1}{(x+1)^h}
\\
&=
f(x)
\frac{(L_n)_h}{(x+1)^h},
\end{align}
where $(L_n)_h$ is the rising factorial:
\begin{equation}
(L_n)_h
:=
L_n(L_n+1)\cdots(L_n+h-1).
\end{equation}
Re-inserting this bound into the integral representation of finite differences (Equation \eqref{eq:standard_integral}) gives:
\begin{align}
&(-1)^h\Delta^h f(m) \label{eq:finite-difference-bound} \\
&= \int_{[0,1]^h} (-1)^h f^{(h)}(m+t_1+\cdots+t_h)\,dt_1\cdots dt_h \\
&\leq \int_{[0,1]^h} (L_n)_h \frac{ f(m+t_1+\cdots+t_h)} {(m+t_1+\cdots+t_h+1)^h} \,dt_1\cdots dt_h \\
&\leq \frac{(L_n)_h}{(m+1)^h}f(m),
\end{align}
where we have used the fact that \(f(x)\) is positive and decreasing in \(x \geq 0\), as well as that \( (m+t_1+\cdots+t_h+1) \geq (m+1) \).
Since $f(m)=1/D_m$, Equation \eqref{eq:finite-difference} gives:
\begin{equation}
\sum_{a=0}^{h} (-1)^a  \binom{h}{a} \frac{1}{D_{m+a}} = (-1)^h\Delta^h f(m) 
\leq \frac{(L_n)_h}{(m+1)^h} \frac{1}{D_m}.
\end{equation}
Therefore, we have the following bound:
\begin{equation}
\label{eq:bound_v0}
D_m
\sum_{a=0}^{h}
(-1)^a
\binom{h}{a}
\frac{1}{D_{m+a}}
\leq
\frac{(L_n)_h}{(m+1)^h}.
\end{equation}
Using Equation \eqref{eq:bound_v0}, and combining it with Equations \eqref{eq:delta-finite-difference} and~\eqref{eq:def_h}, we obtain:
\begin{equation}
\label{eq:delta-rising-factorial}
\delta \leq 3 \binom{k}{r+1} \frac{(L_n)_{r+1}}{(m+1)^{r+1}},
\end{equation}
which is the left-hand side of Equation \eqref{eq:delta-simple}. To obtain the right-hand side, the above can be expanded into a product form, giving:
\begin{equation} \label{eq:delta-product-form}
\delta \leq \frac{3}{(r+1)!} \prod_{j=0}^r \frac{(k-j)(L_n+j)}{(m+1)}.
\end{equation}
We apply the AM-GM inequality to each product separately, giving:
\begin{align}
\prod_{j=0}^r (k-j) &\leq \left( k - \frac{r}{2} \right)^{r+1}, & \prod_{j=0}^r (L_n+j) &\leq \left( L_n + \frac{r}{2} \right)^{r+1}.
\end{align}
Substituting the AM-GM bounds into Equation \eqref{eq:delta-product-form}, we obtain:
\begin{equation}
\label{eq:delta-lambda-bound}
\delta \leq \frac{3}{(r+1)!} \lambda_r^{r+1},
\end{equation}
with the parameter \(\lambda_r\) defined as:
\begin{equation}
    \lambda_r := \frac{\left( k - \frac{r}{2} \right)\left(L_n + \frac{r}{2} \right)}{\left(m+1\right)}.
\end{equation}
This reproduces equation \eqref{eq:delta-simple} in Theorem \ref{thm:fermionic-gaussian-definetti}. We can also obtain a simpler (albeit weaker) form of $\lambda$ independent of $r$. Since we have that:
\begin{align}
\left( k - \frac{r}{2} \right)\left(L_n + \frac{r}{2} \right) &\leq \left( \frac{(k - \frac{r}{2}) + (L_n + \frac{r}{2})}{2} \right)^2 \\
&=  \frac{(k + L_n)^2}{4},
\end{align}
we can write:
\begin{equation}
\lambda_r \leq \frac{(k + L_n)^2}{4(m+1)}.
\end{equation}
\end{proof}

For a simplified 'full' form of $\delta$ in Equation \eqref{eq:delta-full}, we can also return to the exact expression for $\gamma$ in Equation \eqref{eq:gamma-binomial-full}. We set $m + t$ as the argument of the $h$-fold finite difference in Equation \eqref{eq:finite-difference-definition} and re-insert $h = r+1$. The bound from Equation \eqref{eq:finite-difference-bound} then implies:
\begin{align}
    (-1)^h\Delta^{r+1} f(m + t) &= \sum_{a=0}^{r+1} (-1)^a  \binom{r+1}{a} \frac{1}{D_{m+a+ t}} \\
&\leq \frac{(L_n)_{r+1}}{(m+t+1)^{r+1}} \frac{1}{D_{m+t}} 
\end{align}
Direct substitution into Equation \eqref{eq:gamma-binomial-full} gives:
\begin{equation}
    \gamma  \leq \frac{ (L_n)_{r+1}}{(m+1)^{r+1}} \sum_{t=0}^{k-r-1} \binom{r+t}{r} \left( \frac{m + 1}{m + t + 1} \right)^{r+1} \frac{1}{D_{m+t}}
\end{equation}
Inserting this into the definition of $\delta$ in Equation \eqref{eq:delta_raw}:
\begin{equation}
    \delta  \leq \frac{3 (L_n)_{r+1}}{(m+1)^{r+1}} \sum_{t=0}^{k-r-1} \binom{r+t}{r} \left( \frac{m + 1}{m + t + 1} \right)^{r+1} \frac{D_m}{D_{m+t}}.
\end{equation}
Bounding $\left( \frac{m + 1}{m + t + 1} \right)^{r+1} \frac{D_m}{D_{m+t}} \leq 1$ and applying the hockey-stick identity also reproduces Equation \eqref{eq:delta-rising-factorial}. 

Most of the slack in the bounds of Theorem \ref{thm:fermionic-gaussian-definetti} comes from the union bound (Equation \eqref{eq:binomial-union-bound}). For example, if $r=0$, Equation \eqref{eq:delta-finite-difference} (obtained immediately following its application) with $r = 0$ gives:
\begin{align}
    \delta &\leq 3 \binom{k}{1} D_m \sum_{a=0}^{1} (-1)^a \binom{1}{a} \frac{1}{D_{m+a}}  \\
    &= 3 k \left( 1 - \frac{D_{m}}{D_{m+1}} \right) \leq \frac{3 k L_n}{m+1},
\end{align}
in agreement with the further simplified Equation \eqref{eq:delta-lambda-bound}. On the other hand, expanding $\delta \leq 3 D_m \gamma$ via Equation \eqref{eq:gamma-binomial-integral} with $r=0$ gives:
\begin{align}
    \delta & \leq 3 D_m \int_{\Mcal_n} q_U^m \sum_{s=1}^{k} \binom{k}{s} (1-q_U)^s q_U^{k-s} \,dU \\
    &= 3 D_m \int_{\Mcal_n} (q_U^m - q_U^{m+k}) \,dU \\
    &= 3 \left( 1 - \frac{D_m}{D_{m+k}} \right) \leq \frac{3 k L_n}{m+k+1},
\end{align}
in agreement up to a constant factor with the bound in \cite{Sierant_2026}, which is tighter for $r = 0$ than Equation \eqref{eq:delta-simple}. 

\

An alternative approach which avoids the union bound leads to Corollary \ref{cor:fermionic-gaussian-renner-bound}. As this requires a more in-depth proof, we have chosen to present it separately here. As before, we write the Gaussian-symmetric dimension formula $D_x$:
\begin{equation}
    D_x = 2 \prod_{1\leq i<j\leq n}  \frac{x+2n-i-j}{2n-i-j} =   2\prod_{\alpha=1}^{L_n}   \frac{x+c_\alpha}{c_\alpha},  \label{eq:renner-Dx}
\end{equation}
where $c_\alpha \geq 1$ again runs over the multiset: 
\begin{equation}
    \{c_\alpha\}_{\alpha=1}^{L_n}
    =
    \{2n-i-j:1\leq i<j\leq n\}.
    \label{eq:renner-calpha}
\end{equation}
We will admit noninteger values $x>-1$, using the positive polynomial continuation of $D_x$ in Equation \eqref{eq:renner-Dx}. Next, recall that $\delta\leq 3D_m\gamma$, where $\gamma$ was given an exact expression in Equation \eqref{eq:gamma-binomial-integral}:
\begin{equation}
    \gamma  =  \int_{\Mcal_n}  q_U^m  \sum_{s=h}^{k}  \binom{k}{s}  (1-q_U)^s q_U^{k-s}  \,dU,
    \label{eq:renner-gamma-binomial}
\end{equation}
where $q_U:=|\langle 0|U|0\rangle|^2$, $h:=r+1$. Also recall Equation \eqref{eq:binomial-tail}, expressing the sum in the integrand as a Binomial tail:
\begin{equation}
    \sum_{s=h}^{k}
    \binom{k}{s}(1-q)^s q^{k-s}
    =
    \Pr[\operatorname{Bin}(k,1-q)\geq h].
    \label{eq:renner-binomial-tail}
\end{equation}
With this in mind, we may absorb the factor $q_U^m$ in Equation \eqref{eq:renner-gamma-binomial} into a probability measure. For $m\geq 1$, define the $m$-tilted Haar measure:
\begin{equation}
d\nu_m(U):=D_m q_U^m\,dU.
\end{equation}
Indeed, by the moment identity in Equation \eqref{eq:qU-moment} we have $\int_{\Mcal_n} d\nu_m(U) = 1$. Taking:
\begin{equation}
q(U) := q_U = |\langle \mathbf{0} |U|\mathbf{0}\rangle|^2,
\end{equation}
The definition for $\delta$ combined with Equations \eqref{eq:renner-gamma-binomial} and \eqref{eq:renner-binomial-tail} gives:
\begin{equation}
    \delta \leq 3\, \mathbb{E}_{\nu_m}
    \left(  \Pr\!\left[   \operatorname{Bin}(k,1-Q)\geq h    \,\middle|\,Q  \right] \right).
    \label{eq:renner-delta-expectation}
\end{equation}
The next lemma identifies the distribution of \(Q\) under the tilted measure.  This identification is useful because it gives negative moments of \(Q\), whereas the original moment identity \eqref{eq:qU-moment} only states positive integer moments. We first recall the notation for the beta function and the beta
probability distribution. For $p,q>0$, the Euler beta function is defined by:
\begin{equation}
    \mathrm{B}(p,q)
    :=
    \int_0^1 x^{p-1}(1-x)^{q-1}\,dx
    =
    \frac{\Gamma(p)\Gamma(q)}{\Gamma(p+q)}.
    \label{eq:euler-beta-function}
\end{equation}
The notation: 
\begin{equation}
Z\sim\operatorname{Beta}(p,q)
\end{equation} 
means that $Z$ is a random variable supported on $(0,1)$, with
probability density:
\begin{equation}
    f_{p,q}(x)
    =
    \frac{x^{p-1}(1-x)^{q-1}}
         {\mathrm{B}(p,q)}
    \mathbb{1}_{(0,1)}(x).
    \label{eq:beta-distribution-density}
\end{equation}
Here, $\mathrm{B}(p,q)$ is the Euler beta function appearing as the
normalization constant, whereas $\operatorname{Beta}(p,q)$ denotes
the corresponding probability distribution. In the special case $q=1$, one has:
\begin{equation}
    \mathrm{B}(p,1)=\frac{1}{p},
\end{equation}
and therefore:
\begin{equation}
    Z\sim\operatorname{Beta}(p,1)
    \quad\Longleftrightarrow\quad
    f_{p,1}(x)
    =
    p x^{p-1}\mathbb{1}_{(0,1)}(x).
    \label{eq:beta-p-one-density}
\end{equation}

\begin{lemma}[Beta-product Representation]
\label{lem:renner-beta-product}
Under the probability measure $\nu_m$, the overlap random variable
$Q$ has the same probability distribution as a product:
\begin{equation}
    Q
    \overset{\mathrm{d}}{=}
    \prod_{j=1}^{L_n} Z_j,
    \label{eq:renner-beta-product}
\end{equation}
where $Z_1,\ldots,Z_{L_n}$ are independent random variables satisfying
\begin{equation}
    Z_j
    \sim
    \operatorname{Beta}(m+c_j,1),
    \qquad
    j=1,\ldots,L_n.
    \label{eq:renner-beta-parameters}
\end{equation}
Here, $\overset{\mathrm{d}}{=}$ denotes equality in probability
distribution. Consequently, for every real number $a$ such that:
\begin{equation}
    0\leq a<m+1,
    \label{eq:renner-a-domain}
\end{equation}
the negative moment of $Q$ is finite and is given by:
\begin{equation}
    \mathbb{E}_{\nu_m}\!\left[Q^{-a}\right]
    =
    \prod_{j=1}^{L_n}
    \frac{m+c_j}{m+c_j-a}
    =
    \frac{D_m}{D_{m-a}}.
    \label{eq:renner-negative-moment}
\end{equation}
\end{lemma}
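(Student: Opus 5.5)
The plan is to identify the law of $Q$ under $\nu_m$ through its positive integer moments, which are already known. By the moment identity \eqref{eq:qU-moment}, for every integer $\ell\geq 0$,
\begin{equation}
\mathbb{E}_{\nu_m}[Q^\ell] = D_m\int_{\Mcal_n} q_U^{m+\ell}\,dU = \frac{D_m}{D_{m+\ell}} = \prod_{j=1}^{L_n}\frac{m+c_j}{m+c_j+\ell},
\end{equation}
where the last equality uses the product form \eqref{eq:renner-Dx}; the constant prefactor $2\prod_\alpha c_\alpha^{-1}$ cancels in the ratio. On the other side, if $Z\sim\operatorname{Beta}(p,1)$, then by \eqref{eq:beta-p-one-density} we have $\mathbb{E}[Z^s]=\int_0^1 p\,x^{p-1+s}\,dx = p/(p+s)$ for every real $s>-p$. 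By independence, the product $\prod_j Z_j$ with $p_j=m+c_j$ has $\ell$-th moment $\prod_j (m+c_j)/(m+c_j+\ell)$, which matches exactly.

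Next I would upgrade this moment match to equality in distribution. Both $Q$ and $\prod_j Z_j$ take values in $[0,1]$, since $q_U=|\langle\mathbf{0}|U|\mathbf{0}\rangle|^2\le 1$, and $\nu_m$ is a probability measure by \eqref{eq:qU-moment} with $\ell=m$. Compactly supported measures are determined by their moments (Hausdorff moment problem, or Weierstrass density of polynomials in $C[0,1]$). Hence $Q\overset{\mathrm{d}}{=}\prod_j Z_j$. The atom at $0$ needs a brief comment: $\nu_m$ gives zero mass to $\{q_U=0\}$ because its density is $q_U^m$, and the Beta law has no atoms, so the two laws agree on $(0,1]$.

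For the negative moments, I would compute with the product representation: $\mathbb{E}[Q^{-a}]=\prod_j \mathbb{E}[Z_j^{-a}]=\prod_j (m+c_j)/(m+c_j-a)$. This is finite whenever $a<m+c_j$ for all $j$, which is guaranteed by $a<m+1\le m+c_j$ since $c_j\ge1$. Non-negativity of the integrand lets Tonelli justify factorising the expectation even though $x^{-a}$ is unbounded. Finally, the polynomial continuation of $D_x$ in \eqref{eq:renner-Dx} gives $D_m/D_{m-a}=\prod_j (m+c_j)/(m+c_j-a)$, because $m-a>-1$ keeps every factor $m-a+c_j$ positive.

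The main obstacle is the determinacy step together with the support of $Q$. The moment match itself is immediate, and the negative-moment formula follows once the distribution is identified.
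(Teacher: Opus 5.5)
Your proposal is correct and follows the paper's proof essentially step for step. Both match the positive integer moments $D_m/D_{m+\ell}=\prod_j (m+c_j)/(m+c_j+\ell)$ with those of the independent $\operatorname{Beta}(m+c_j,1)$ product, invoke Hausdorff determinacy on $[0,1]$, and compute $\mathbb{E}[Q^{-a}]$ factor by factor using $c_j\geq 1$ and the polynomial continuation of $D_x$. Your extra remarks (Tonelli for the unbounded integrand, and $\nu_m\{q_U=0\}=0$ so that $Q^{-a}$ is well defined) are harmless refinements; note that the atom at $0$ needs no separate treatment for the distributional identity, since determinacy already fixes the law on all of $[0,1]$.
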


\begin{proof}
For every integer $s\geq0$, the definition of the tilted measure $\nu_m$ and the overlap-moment identity give:
\begin{align}
    \mathbb{E}_{\nu_m}[Q^s]
    &=
    D_m\int_{M_n}q_U^{m+s}\,dU
    \notag\\
    &=
    \frac{D_m}{D_{m+s}}
    \notag\\
    &=
    \prod_{j=1}^{L_n}
    \frac{m+c_j}{m+c_j+s}.
    \label{eq:Q-positive-moments}
\end{align}
On the other hand, since:
\[
    Z_j\sim\operatorname{Beta}(m+c_j,1),
\]
its probability density is:
\begin{equation}
    f_j(x)
    =
    (m+c_j)x^{m+c_j-1}\mathbb{1}_{(0,1)}(x).
\end{equation}
Therefore,
\begin{align}
    \mathbb{E}[Z_j^s]
    =
    (m+c_j)
    \int_0^1 x^{m+c_j+s-1}\,dx
    =
    \frac{m+c_j}{m+c_j+s}.
    \label{eq:Zj-positive-moment}
\end{align}
Using independence, we obtain
\begin{align}
    \mathbb{E}\left[
        \left(\prod_{j=1}^{L_n}Z_j\right)^s
    \right]
    &=
    \prod_{j=1}^{L_n}\mathbb{E}[Z_j^s] \\
   & =
    \prod_{j=1}^{L_n}
    \frac{m+c_j}{m+c_j+s}
    =
    \mathbb{E}_{\nu_m}[Q^s].
    \label{eq:matching-moments}
\end{align}
Both $Q$ and $\prod_{j=1}^{L_n}Z_j$ are supported on $[0,1]$.
Since the Hausdorff moment problem on $[0,1]$ is determinate, equality
of all nonnegative integer moments implies:
\[
    Q
    \overset{\mathrm{d}}{=}
    \prod_{j=1}^{L_n}Z_j.
\]
It remains to evaluate the negative moments. For any real
$a<m+c_j$,
\begin{align}
    \mathbb{E}[Z_j^{-a}]
    =
    (m+c_j)
    \int_0^1 x^{m+c_j-a-1}\,dx
    =
    \frac{m+c_j}{m+c_j-a}.
    \label{eq:Zj-negative-moment}
\end{align}
Since $\min_j c_j=1$, all factors are finite whenever
$a<m+1$. Independence therefore gives:
\begin{align}
    \mathbb{E}_{\nu_m}[Q^{-a}]
    &=
    \prod_{j=1}^{L_n}
    \mathbb{E}[Z_j^{-a}]
    \notag\\
    &=
    \prod_{j=1}^{L_n}
    \frac{m+c_j}{m+c_j-a}
    \notag\\
    &=
    \frac{D_m}{D_{m-a}},
\end{align}
where the last equality follows from the polynomial continuation for $D_x$ in Equation \eqref{eq:renner-Dx}.
\end{proof}

Next, we will produce a Chernoff bound for the conditional binomial tail. Recall that, under the tilted probability measure $\nu_m$, the vacuum-overlap parameter $Q = q_U = |\langle \mathbf{0} |U|\mathbf{0}\rangle|^2$ is a random variable taking values in $[0,1]$.  To interpret the binomial tail probabilistically, one may introduce an auxiliary random variable $X$ as follows.  First sample $U$ according to $\nu_m$, and hence determine $Q=q_U$.  Then conditionally on $Q=q$, sample:
\begin{equation}
    X\mid (Q=q )
    \sim
    \operatorname{Bin}(k,1-q).
    \label{eq:conditional-binomial-law}
\end{equation}
Thus, $X$ counts the number of failures among the $k$ retained replicas when the overlap parameter is fixed to be $q$. With this notation, the exact binomial-tail representation obtained above reads:
\begin{equation}
    \delta
    \leq
    3\,
    \mathbb{E}_{\nu_m}
    \left(
        \Pr\!\left[
            X\geq h\,\middle|\,Q
        \right]
    \right),
    \qquad
    h:=r+1.
    \label{eq:delta-conditional-representation}
\end{equation}
We now provide an estimate for $\Pr\!\left[   \operatorname{Bin}(k,1-Q)\geq h    \,\middle|\,Q  \right]$, first pointwise for a fixed value of $Q=q$, and then averaged over the distribution of $Q$. 

\begin{lemma}[Conditional Exponential Estimate]
\label{lem:renner-conditional-chernoff}
Let $1\leq h\leq k$, let $q\in(0,1]$, and let $a\geq0$. Then
\begin{equation}
    \Pr\!\left[
        \operatorname{Bin}(k,1-q)\geq h
    \right]
    \leq
    \left(1+\frac{a}{k}\right)^{-h}q^{-a}.
    \label{eq:renner-conditional-bound}
\end{equation}
\end{lemma}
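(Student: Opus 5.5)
The plan is a Chernoff (exponential Markov) bound in which the tilting parameter is chosen so that the moment generating function of the binomial produces exactly the factor $q^{-a}$. Let $X\sim\operatorname{Bin}(k,1-q)$ be the number of failures, and fix a parameter $\theta>0$ to be chosen later. Markov's inequality applied to $e^{\theta X}$ gives
\begin{equation}
\Pr[X\geq h]\leq e^{-\theta h}\,\mathbb{E}[e^{\theta X}]=e^{-\theta h}\bigl(q+(1-q)e^{\theta}\bigr)^{k}.
\end{equation}
The task is then to choose $\theta$ so that both factors have the required shape.

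The natural choice is $e^{\theta}=1+a/k$; if $a=0$ the claim reduces to the trivial bound $\Pr\leq 1$. With this choice the prefactor $e^{-\theta h}$ is exactly $(1+a/k)^{-h}$. The moment generating function becomes
\begin{equation}
\bigl(q+(1-q)(1+a/k)\bigr)^k=\Bigl(1+\frac{a}{k}(1-q)\Bigr)^k .
\end{equation}
It therefore remains to show $\bigl(1+\tfrac{a}{k}(1-q)\bigr)^k\leq q^{-a}$ for $q\in(0,1]$.

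This is the main (though elementary) step. Taking logarithms, it is equivalent to $k\log\bigl(1+\tfrac{a}{k}(1-q)\bigr)\leq -a\log q$. Using $\log(1+x)\leq x$, the left side is at most $a(1-q)$. Using $-\log q\geq 1-q$ for $q\in(0,1]$, the right side is at least $a(1-q)$. Since $a\geq0$, the inequality follows.

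Combining the three displays gives $\Pr[\operatorname{Bin}(k,1-q)\geq h]\leq(1+a/k)^{-h}q^{-a}$. The only point needing care is checking the edge cases $q=1$ (both sides are consistent, with the left equal to $0$ since $h\geq1$) and $a=0$. I expect no real obstacle beyond picking $e^\theta=1+a/k$. That choice is dictated by wanting the later average over $\nu_m$ to involve only the negative moment $\mathbb{E}_{\nu_m}[Q^{-a}]=D_m/D_{m-a}$ from Lemma~\ref{lem:renner-beta-product}.
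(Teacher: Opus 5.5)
Your proposal is correct and follows essentially the same route as the paper: a Chernoff--Markov bound with the reparametrisation $e^{\theta}=1+a/k$, followed by the relaxations $1+x\leq e^{x}$ and $1-q\leq-\log q$, which give $\bigl(1+\tfrac{a}{k}(1-q)\bigr)^{k}\leq e^{a(1-q)}\leq q^{-a}$. The only cosmetic difference is that you take $\theta>0$ and treat $a=0$ separately as the trivial bound, whereas the paper allows $\theta\geq0$ throughout.
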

\begin{proof}
Fix $q\in(0,1]$.  Let
$Y_1^{(q)},\ldots,Y_k^{(q)}$ be independent Bernoulli random
variables satisfying:
\begin{equation}
    \Pr\!\left[Y_i^{(q)}=1\right]=1-q,
    \qquad
    \Pr\!\left[Y_i^{(q)}=0\right]=q.
    \label{eq:bernoulli-failure-variable}
\end{equation}
The value $Y_i^{(q)}=1$ represents a failure on the $i$-th retained
replica.  Define the total number of failures $X_q$:
\begin{equation}
    X_q:=\sum_{i=1}^{k}Y_i^{(q)}.
    \label{eq:number-of-failures}
\end{equation}
By construction, $X_q\sim\operatorname{Bin}(k,1-q)$. We first introduce a standard Chernoff parameter $\theta\geq0$.
Since the function $x\mapsto e^{\theta x}$ is increasing, the events
$\{X_q\geq h\}$ and
$\{e^{\theta X_q}\geq e^{\theta h}\}$ coincide. Therefore,
\begin{equation}
    \Pr[ X_q\geq h ]
    =
    \Pr\!\left[
        e^{\theta X_q}\geq e^{\theta h}
    \right].
    \label{eq:exponential-event}
\end{equation}
The random variable $e^{\theta X_q}$ is nonnegative, so Markov's
inequality gives:
\begin{align}
    \Pr[ X_q\geq h ]
    &\leq
    e^{-\theta h}
    \mathbb{E}\!\left[e^{\theta X_q}\right].
    \label{eq:markov-chernoff}
\end{align}
We now evaluate the moment-generating function of $X_q$.  Using Equation
\eqref{eq:number-of-failures} and the independence of the variables
$Y_i^{(q)}$, we obtain
\begin{align}
    \mathbb{E}\!\left[e^{\theta X_q}\right]
    =
    \mathbb{E}\!\left[
        e^{\theta\sum_{i=1}^{k}Y_i^{(q)}}
    \right]
    =
    \mathbb{E}\!\left[
        \prod_{i=1}^{k}e^{\theta Y_i^{(q)}}
    \right]
    =
    \prod_{i=1}^{k}
    \mathbb{E}\!\left[e^{\theta Y_i^{(q)}}\right].
    \label{eq:mgf-factorization}
\end{align}
For each Bernoulli variable:
\begin{align}
    \mathbb{E}\!\left[e^{\theta Y_i^{(q)}}\right]
    =
    q e^{\theta\cdot0}
    +(1-q)e^{\theta\cdot1}
    =
    q+(1-q)e^\theta.
    \label{eq:bernoulli-mgf}
\end{align}
Consequently,
\begin{equation}
    \mathbb{E}\!\left[e^{\theta X_q}\right]
    =
    \bigl(q+(1-q)e^\theta\bigr)^k.
    \label{eq:binomial-mgf}
\end{equation}
Substituting this identity into Equation \eqref{eq:markov-chernoff} gives
the standard Chernoff estimate:
\begin{equation}
    \Pr[ X_q\geq h ]
    \leq
    e^{-\theta h}
    \bigl(q+(1-q)e^\theta\bigr)^k.
    \label{eq:standard-chernoff}
\end{equation}
We now reparametrize $\theta$ by introducing an auxiliary real
parameter $a\geq0$ through $a:=k(e^\theta-1)$. Equivalently:
\begin{equation}
    e^\theta=1+\frac{a}{k},
    \qquad
    \theta=\log\left(1+\frac{a}{k}\right).
    \label{eq:renner-theta-a}
\end{equation}
The parameter $a$ is only a Chernoff optimization parameter; it is
not a physical parameter and it is not a replica index.  Since $a\geq0$, the corresponding value of $\theta$ is nonnegative. Using $e^\theta=1+a/k$, the moment-generating factor becomes:
\begin{align}
    q+(1-q)e^\theta
    &=
    q+(1-q)\left(1+\frac{a}{k}\right)
    \notag\\
    &=
    q+(1-q)+\frac{a}{k}(1-q)
    \notag\\
    &=
    1+\frac{a}{k}(1-q).
    \label{eq:renner-mgf-rewrite}
\end{align}
Hence,
\begin{equation}
    \bigl(q+(1-q)e^\theta\bigr)^k
    =
    \left[
        1+\frac{a}{k}(1-q)
    \right]^k.
    \label{eq:mgf-after-reparametrization}
\end{equation}
We next use the elementary inequality:
\begin{equation}
    1+x\leq e^x,
    \qquad x\geq0.
    \label{eq:one-plus-x}
\end{equation}
Taking $x$ in Equation \eqref{eq:one-plus-x} to be:
\begin{equation}
    x=\frac{a}{k}(1-q)\geq0
\end{equation}
gives:
\begin{align}
    \left[
        1+\frac{a}{k}(1-q)
    \right]^k
    &\leq
    \left[
        \exp\left(
            \frac{a}{k}(1-q)
        \right)
    \right]^k
    \notag\\
    &=
    e^{a(1-q)}.
    \label{eq:mgf-first-relaxation}
\end{align}
For $q\in(0,1]$, one also has:
\begin{equation}
    1-q\leq-\log q.
    \label{eq:one-minus-q-log}
\end{equation}
Indeed,
\begin{equation}
    -\log q
    =
    \int_q^1\frac{dt}{t}
    \geq
    \int_q^1dt
    =
    1-q.
    \label{eq:one-minus-q-log-proof}
\end{equation}
Since $a\geq0$, multiplication by $a$ and exponentiation yields:
\begin{equation}
    e^{a(1-q)}
    \leq
    e^{-a\log q}
    =
    q^{-a}.
    \label{eq:renner-q-minus-a}
\end{equation}
Combining Equations
\eqref{eq:mgf-after-reparametrization},
\eqref{eq:mgf-first-relaxation}, and
\eqref{eq:renner-q-minus-a}, we obtain:
\begin{equation}
    \bigl(q+(1-q)e^\theta\bigr)^k
    \leq
    q^{-a}.
    \label{eq:mgf-final-bound}
\end{equation}
Finally, the first factor in Equation \eqref{eq:standard-chernoff} satisfies:
\begin{align}
    e^{-\theta h}
    &=
    \exp\left[
        -h\log\left(1+\frac{a}{k}\right)
    \right]
    \notag\\
    &=
    \left(1+\frac{a}{k}\right)^{-h}.
    \label{eq:renner-theta-prefactor}
\end{align}
Substitution of Equations \eqref{eq:mgf-final-bound} and
\eqref{eq:renner-theta-prefactor} into
\eqref{eq:standard-chernoff} gives:
\begin{equation}
    \Pr[ X_q\geq h ]
    \leq
    \left(1+\frac{a}{k}\right)^{-h}q^{-a},
\end{equation}
which proves Equation \eqref{eq:renner-conditional-bound}.
\end{proof}

\begin{remark} The restriction to $q>0$ is harmless in the application to the tilted
measure.  Indeed,
\begin{equation}
    d\nu_m(U)=D_mq_U^m\,dU,
\end{equation}
and, because $m\geq1$,
\begin{align}
    \nu_m\{U:q_U=0\}
    =
    D_m\int_{\{q_U=0\}}q_U^m\,dU
    =
    0.
\end{align}
Thus, $Q>0$ holds $\nu_m$-almost surely. 
\end{remark}
We now average the pointwise estimate from
Lemma \ref{lem:renner-conditional-chernoff} over the tilted distribution of the overlap variable $Q$.
\begin{theorem}[One-parameter Exponential Bound]
\label{thm:renner-master}
Let $m,k\geq1$ and $1\leq h\leq k$, where $r = h-1$. Then, for every real number $a$ satisfying:
\begin{equation}
    0\leq a<m+1,
    \label{eq:a-admissible-master}
\end{equation}
one has
\begin{equation}
    \delta
    \leq
    3
    \left(1+\frac{a}{k}\right)^{-h}
    \frac{D_m}{D_{m-a}}.
    \label{eq:renner-one-parameter}
\end{equation}
Here, for noninteger arguments, $D_x$ denotes the polynomial continuation:
\begin{equation}
    D_x
    =
    2\prod_{j=1}^{L_n}\frac{x+c_j}{c_j},
    \qquad
    x>-1.
    \label{eq:Dx-polynomial-continuation-master}
\end{equation}
Consequently $\delta$ in Corollary \ref{cor:fermionic-gaussian-renner-bound} satisfies:
\begin{equation}
    \delta
    \leq
    3\inf_{0\leq a<m+1}
    \left(1+\frac{a}{k}\right)^{-(r+1)}
    \frac{D_m}{D_{m-a}}
    .
    \label{eq:renner-master}
\end{equation}
Equivalently,
\begin{align}
    \delta
    \leq
    3\inf_{0\leq a<m+1}
    \exp & \bigg[
        -(r+1)\log \left( 1+\frac{a}{k}\right)  \label{eq:renner-master-log}  \\
        &+ 
         \sum_{j=1}^{L_n}
        \log\frac{m+c_j}{m+c_j-a}
    \bigg].
\end{align}
\end{theorem}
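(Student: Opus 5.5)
Everything needed is already in place, so the proof is a direct chaining of three earlier results. The starting point is the conditional representation \eqref{eq:delta-conditional-representation}:
\[
\delta \le 3\,\mathbb{E}_{\nu_m}\bigl(\Pr[\operatorname{Bin}(k,1-Q)\ge h\mid Q]\bigr).
\]
By the Remark following Lemma \ref{lem:renner-conditional-chernoff}, $Q>0$ holds $\nu_m$-almost surely, so the lemma applies pointwise for almost every $U$. Fixing $a\in[0,m+1)$, each value $Q=q$ gives
\[
\Pr[\operatorname{Bin}(k,1-q)\ge h]\le (1+a/k)^{-h}q^{-a}.
\]
The key feature is that $a$ is chosen \emph{before} averaging and is independent of $q$. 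This lets the prefactor $(1+a/k)^{-h}$ be pulled outside the expectation, leaving
\[
\delta\le 3(1+a/k)^{-h}\,\mathbb{E}_{\nu_m}[Q^{-a}].
\]

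The remaining step is to evaluate this negative moment. Lemma \ref{lem:renner-beta-product} gives
\[
\mathbb{E}_{\nu_m}[Q^{-a}]=\prod_j \frac{m+c_j}{m+c_j-a}=\frac{D_m}{D_{m-a}}
\]
for every real $0\le a<m+1$, and this is exactly the admissible range in the statement. The lemma's use of $D_{m-a}$ for non-integer arguments relies on the polynomial continuation \eqref{eq:Dx-polynomial-continuation-master}. That continuation is well defined and positive because $m-a>-1$ and $c_j\ge 1$ make every factor $m-a+c_j$ strictly positive; the constant factors $2$ and $c_j$ then cancel in the ratio. Substituting gives \eqref{eq:renner-one-parameter}.

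Since the bound holds for every admissible $a$ and $\delta$ does not depend on $a$, taking the infimum over $a\in[0,m+1)$ yields \eqref{eq:renner-master} with $h=r+1$. The logarithmic form \eqref{eq:renner-master-log} follows by writing $(1+a/k)^{-(r+1)}=\exp[-(r+1)\log(1+a/k)]$ and taking the logarithm of the product formula for $D_m/D_{m-a}$ term by term.

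The only point needing care is interchanging the pointwise bound with the $\nu_m$-expectation. This is harmless here: both sides are nonnegative measurable functions of $U$, the inequality holds $\nu_m$-a.e., and the right-hand side has finite expectation by Lemma \ref{lem:renner-beta-product}. Monotonicity of the integral therefore suffices.
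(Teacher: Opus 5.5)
Your proposal is correct and follows the paper's proof essentially step for step. Like the paper, you apply Lemma~\ref{lem:renner-conditional-chernoff} pointwise $\nu_m$-almost surely (justified by $Q>0$ a.s.), pull the deterministic factor $(1+a/k)^{-h}$ out of the expectation, evaluate $\mathbb{E}_{\nu_m}[Q^{-a}]=D_m/D_{m-a}$ via Lemma~\ref{lem:renner-beta-product}, and then take the infimum and the logarithm of the product formula. Your remarks on the positivity of the polynomial continuation for $m-a>-1$ and on monotonicity of the integral are minor clarifications that the paper leaves implicit.
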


\begin{proof}
Recall from Equation \eqref{eq:delta-conditional-representation} that:
\begin{equation}
    \delta
    \leq
    3\,
    \mathbb{E}_{\nu_m}
    \left(
        \Pr\!\left[
            X\geq h\,\middle|\,Q
        \right]
    \right),
    \label{eq:delta-conditional-recalled}
\end{equation}
where, conditionally on $Q=q$,
\begin{equation}
    X\mid Q=q
    \sim
    \operatorname{Bin}(k,1-q).
    \label{eq:X-given-Q}
\end{equation}
Fix an admissible value $a\in[0,m+1)$.  For every $q\in(0,1]$,
Lemma~\ref{lem:renner-conditional-chernoff} gives:
\begin{equation}
    \Pr\!\left[
        X\geq h\,\middle|\,Q=q
    \right]
    \leq
    \left(1+\frac{a}{k}\right)^{-h}q^{-a}.
    \label{eq:conditional-bound-at-q}
\end{equation}
Because $Q>0$ holds $\nu_m$-almost surely, we may substitute the
random value $q=Q$ into this pointwise inequality.  We obtain:
\begin{equation}
    \Pr\!\left[
        X\geq h\,\middle|\,Q
    \right]
    \leq
    \left(1+\frac{a}{k}\right)^{-h}Q^{-a},
    \quad
    \nu_m\text{-almost surely}.
    \label{eq:conditional-bound-random-Q}
\end{equation}
Taking expectations and using Equation
\eqref{eq:delta-conditional-recalled}, we find:
\begin{align}
    \delta
    &\leq
    3\,
    \mathbb{E}_{\nu_m}
    \left[
        \left(1+\frac{a}{k}\right)^{-h}Q^{-a}
    \right]
    \notag\\
    &=
    3
    \left(1+\frac{a}{k}\right)^{-h}
    \mathbb{E}_{\nu_m}\!\left[Q^{-a}\right].
    \label{eq:average-conditional-bound}
\end{align}
The factor
\[
    \left(1+\frac{a}{k}\right)^{-h}
\]
is deterministic, and has therefore been taken outside the expectation. The restriction $a<m+1$ is precisely the condition under which the
negative moment of $Q$ is finite.  By
Lemma~\ref{lem:renner-beta-product},
\begin{equation}
    \mathbb{E}_{\nu_m}\!\left[Q^{-a}\right]
    =
    \frac{D_m}{D_{m-a}}.
    \label{eq:negative-moment-in-master-proof}
\end{equation}
Substituting this identity into
\eqref{eq:average-conditional-bound} proves
\eqref{eq:renner-one-parameter}.

Since \eqref{eq:renner-one-parameter} is valid for every
$a\in[0,m+1)$, we may take the infimum over all admissible values of
$a$.  This gives \eqref{eq:renner-master}. Finally, using the polynomial continuation (Equation \eqref{eq:renner-Dx} or \eqref{eq:Dx-polynomial-continuation-master}) we have:
\begin{align}
    \frac{D_m}{D_{m-a}}
    &=
    \frac{
        2\prod_{j=1}^{L_n}(m+c_j)/c_j
    }{
        2\prod_{j=1}^{L_n}(m-a+c_j)/c_j
    }
    \notag\\
    &=
    \prod_{j=1}^{L_n}
    \frac{m+c_j}{m+c_j-a}.
    \label{eq:dimension-ratio-product}
\end{align}
Therefore,
\begin{align}
    &
    \left(1+\frac{a}{k}\right)^{-(r+1)}
    \frac{D_m}{D_{m-a}}
    \notag\\
    &=
    \exp\left[
        -(r+1)\log\left(1+\frac{a}{k}\right)
        +
        \sum_{j=1}^{L_n}
        \log\frac{m+c_j}{m+c_j-a}
    \right],
\end{align}
which proves \eqref{eq:renner-master-log}.
\end{proof}
\begin{remark}
The parameter $a$ is auxiliary and should not be confused with the
physical replica numbers $m$ and $k$.  Increasing $a$ improves the
exponential factor:
\begin{equation}
    \left(1+\frac{a}{k}\right)^{-h},
\end{equation}
but simultaneously increases the negative-moment penalty:
\begin{equation}
    \frac{D_m}{D_{m-a}}.
\end{equation}
The infimum in \eqref{eq:renner-master} balances these two competing
effects.
\end{remark}
The optimised expression in Equation \eqref{eq:renner-master} is the
strongest bound produced by the preceding arguments.  For comparison
with the usual exponential de Finetti theorem, it is useful to make
the explicit choice:
\begin{equation}
    a=m.
    \label{eq:renner-choice-a-m}
\end{equation}
This choice is admissible because:
\begin{equation}
    0\leq m<m+1.
\end{equation}
We do not assert that $a=m$ is the optimal value of the Chernoff parameter.  Its advantage is that it produces a simple closed-form bound. Using the polynomial continuation in Equation
\eqref{eq:Dx-polynomial-continuation-master}, we obtain:
\begin{equation}
    D_0
    =
    2\prod_{j=1}^{L_n}\frac{c_j}{c_j}
    =
    2.
    \label{eq:D-zero-polynomial}
\end{equation}
Consequently,
\begin{equation}
    \frac{D_m}{D_{m-a}}
    \bigg|_{a=m}
    =
    \frac{D_m}{D_0}
    =
    \frac{D_m}{2}.
    \label{eq:renner-ratio-at-m}
\end{equation}
The value $D_0=2$ in \eqref{eq:D-zero-polynomial} is simply the value at $x=0$ of the polynomial continuation of $D_x$. We do not interpret this as the dimension of a physical zero-replica space.

Using Theorem \ref{thm:renner-master}, we are ready to prove Corollary \ref{cor:fermionic-gaussian-renner-bound} in the main text.
\begin{proof}[Proof of Corollary \ref{cor:fermionic-gaussian-renner-bound}]
Set $a=m$ in Equation \eqref{eq:renner-one-parameter}.  Then,
\begin{align}
    \delta
    &\leq
    3
    \left(1+\frac{m}{k}\right)^{-(r+1)}
    \frac{D_m}{D_0}
    \notag\\
    &=
    \frac{3D_m}{2}
    \left(1+\frac{m}{k}\right)^{-(r+1)}.
    \label{eq:corollary-substitution}
\end{align}
Since we have that:
\begin{equation}
    1+\frac{m}{k}
    =
    \frac{m+k}{k},
\end{equation}
we have:
\begin{equation}
    \left(1+\frac{m}{k}\right)^{-(r+1)}
    =
    \left(\frac{k}{m+k}\right)^{r+1}.
\end{equation}
This proves Equation \eqref{eq:renner-exact-power} of Corollary \ref{cor:fermionic-gaussian-renner-bound}.

We next relax the exact exponential rate in order to obtain the same
functional form as the usual Renner bound. For every $x\geq0$ we have that:
\begin{equation}
    \log(1+x)\geq\frac{x}{1+x}.
    \label{eq:renner-log-inequality}
\end{equation}
We prove Equation \eqref{eq:renner-log-inequality} for completeness. Define:
\begin{equation}
    g(x)
    :=
    \log(1+x)-\frac{x}{1+x}.
\end{equation}
Then, $g(0) = 0$ and:
\begin{align}
    g'(x)
    =
    \frac{1}{1+x}
    -
    \frac{1}{(1+x)^2}
    =
    \frac{x}{(1+x)^2}
    \geq0.
\end{align}
Thus $g$ is nondecreasing on $[0,\infty)$, and therefore
$g(x)\geq g(0)=0$.  This proves
\eqref{eq:renner-log-inequality}. Taking $x=m/k$ then gives:
\begin{align}
    \log\left(1+\frac{m}{k}\right)
    &\geq
    \frac{m/k}{1+m/k}
    \notag\\
    &=
    \frac{m}{m+k}.
    \label{eq:renner-rate-relaxation}
\end{align}
Multiplication by the negative number $-(r+1)$ reverses the
inequality:
\begin{equation}
    -(r+1)\log\left(1+\frac{m}{k}\right)
    \leq
    -\frac{m}{m+k}(r+1).
    \label{eq:negative-rate-relaxation}
\end{equation}
Consequently, Equation \eqref{eq:renner-exact-power} implies:
\begin{equation}
    \delta
    \leq
    3\exp\left[
        -\frac{m}{m+k}(r+1)
        +
        \log\frac{D_m}{2}
    \right].
    \label{eq:renner-exact-dimension-rate}
\end{equation}
Finally, we relax the exact dimension penalty.  Since $c_j\geq1$, we have:
\begin{equation}
    \frac{m}{c_j}\leq m,
\end{equation}
and hence:
\begin{equation}
    1+\frac{m}{c_j}\leq m+1.
\end{equation}
Using the product formula for $D_m$, we obtain
\begin{align}
    \frac{D_m}{2}
    &=
    \prod_{j=1}^{L_n}
    \frac{m+c_j}{c_j}
    \notag\\
    &=
    \prod_{j=1}^{L_n}
    \left(1+\frac{m}{c_j}\right)
    \notag\\
    &\leq
    \prod_{j=1}^{L_n}(m+1)
    \notag\\
    &=
    (m+1)^{L_n}.
    \label{eq:renner-dimension-relaxation}
\end{align}
Taking logarithms gives
\begin{equation}
    \log\frac{D_m}{2}
    \leq
    L_n\log(m+1).
    \label{eq:log-dimension-relaxation}
\end{equation}
Combining
\eqref{eq:renner-exact-dimension-rate} and
\eqref{eq:log-dimension-relaxation}, we arrive at the simplest direct
analogue of the bound in \cite{Renner_2007}:
\begin{equation}
    \delta
    \leq
    3\exp\left[
        -\frac{m}{m+k}(r+1)
        +
        L_n\log(m+1)
    \right],
    \label{eq:renner-simple}
\end{equation}
which reproduces Equation \eqref{eq:delta-renner-style} in the main text, and completes the proof.
\end{proof}
\section{Proof of Theorem \ref{thm:gaussian-invariant}}

In this section, we prove Theorem \ref{thm:gaussian-invariant}, which identifies partial traces of Gaussian-symmetric states with Gaussian-invariant states. We prove the two directions separately: first, that partial traces of Gaussian-symmetric states are Gaussian-invariant, and second, that every Gaussian-invariant state admits a Gaussian-symmetric purification.

We will make use of the following two lemmas:

\begin{lemma} \label{lem:partial-trace-cyclicity}
Let $\mathcal{A} \otimes \mathcal{B}$ be a bipartite Hilbert space, with $R$ supported on $\mathcal{B}$ and $X$ supported on $\mathcal{A} \otimes \mathcal{B}$. Then, the following identity holds:
\begin{equation}
\Tr_{\mathcal{B}}((\mathbb{1}_\mathcal{A} \otimes R_\mathcal{B}) X) = \Tr_{\mathcal{B}}(X (\mathbb{1}_\mathcal{A} \otimes R_\mathcal{B})).
\end{equation} 
\end{lemma}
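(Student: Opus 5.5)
The plan is to reduce the identity to a statement about elementary tensors and then extend by linearity. Since both sides of the claimed equation are linear in $X$, and every operator on $\mathcal{A}\otimes\mathcal{B}$ is a finite sum of product operators $A\otimes B$ (in finite dimensions the product operators span $\mathcal{B}(\mathcal{A}\otimes\mathcal{B})$), it suffices to verify the identity for $X = A\otimes B$. This choice avoids any appeal to a particular basis beyond the definition of the partial trace.

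For such an $X$ we compute directly. On the left-hand side,
\begin{equation}
\Tr_{\mathcal{B}}\bigl((\mathbb{1}_\mathcal{A}\otimes R)(A\otimes B)\bigr) = \Tr_{\mathcal{B}}(A\otimes RB) = A\,\Tr(RB),
\end{equation}
and on the right-hand side,
\begin{equation}
\Tr_{\mathcal{B}}\bigl((A\otimes B)(\mathbb{1}_\mathcal{A}\otimes R)\bigr) = A\,\Tr(BR).
\end{equation}
Ordinary cyclicity of the trace on $\mathcal{B}$ gives $\Tr(RB)=\Tr(BR)$, so the two sides agree on product operators and therefore on all $X$.

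As an alternative, basis-level check, one can write $\Tr_{\mathcal{B}}(Y)=\sum_i(\mathbb{1}\otimes\langle i|)Y(\mathbb{1}\otimes|i\rangle)$. Inserting a resolution of the identity $\sum_j |j\rangle\langle j|$ on $\mathcal{B}$ on either side of $R$ then reduces both expressions to $\sum_{i,j}R_{ij}\,(\mathbb{1}\otimes\langle j|)X(\mathbb{1}\otimes|i\rangle)$.

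There is no genuine obstacle. The only point needing care is that $R$ acts trivially on $\mathcal{A}$. This is exactly what lets the operator be moved past the $\mathcal{A}$ factor, and the identity fails if $R$ has nontrivial support on $\mathcal{A}$.
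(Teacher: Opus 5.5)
Your proof is correct, and your main argument takes a slightly different route from the paper's. You use linearity in $X$ to reduce to product operators $X = A\otimes B$. There the identity collapses to ordinary scalar cyclicity $\Tr(RB)=\Tr(BR)$ on $\mathcal{B}$, multiplied by $A$.

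The paper never decomposes $X$. It expands $\Tr_{\mathcal{B}}$ in a basis $\{|b\rangle\}$ of $\mathcal{B}$, writes $\langle b|R=\sum_{b'}R_{bb'}\langle b'|$, and resums $\sum_b R_{bb'}|b\rangle = R|b'\rangle$ to move $R$ to the right. This is exactly your ``alternative, basis-level check.''

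Your product-operator route makes clear that the lemma is just cyclicity of the trace on $\mathcal{B}$, tensored with the identity on $\mathcal{A}$. It does rely on product operators spanning all operators on $\mathcal{A}\otimes\mathcal{B}$. That is immediate in the finite-dimensional (qubit) setting of the paper, but would need a density and continuity argument in infinite dimensions. The paper's basis computation handles an arbitrary $X$ directly without that spanning step, at the cost of some index bookkeeping.
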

\begin{proof}
Write out the partial trace explicitly, in terms of a basis $\{|b\rangle\}$ for $\mathcal{B}$:
\begin{align}
\Tr_{\mathcal{B}}((\mathbb{1}_\mathcal{A} \otimes R_\mathcal{B}) X) &= \sum_b (\mathbb{1}_\mathcal{A} \otimes \langle b|) (\mathbb{1}_\mathcal{A} \otimes R_\mathcal{B}) X (\mathbb{1}_\mathcal{A} \otimes |b\rangle) \\
&= \sum_b (\mathbb{1}_\mathcal{A} \otimes \langle b| R_\mathcal{B} ) X(\mathbb{1}_\mathcal{A} \otimes |b\rangle) \\
&= \sum_{b, b'} [R_\mathcal{B}]_{b b'} (\mathbb{1}_\mathcal{A} \otimes \langle b'|) X(\mathbb{1}_\mathcal{A} \otimes |b\rangle) \\
&= \sum_{b'} (\mathbb{1}_\mathcal{A} \otimes \langle b'|) X (\mathbb{1}_\mathcal{A} \otimes R_\mathcal{B}|b'\rangle) \\
&= \Tr_{\mathcal{B}}(X (\mathbb{1}_\mathcal{A} \otimes R_\mathcal{B})).
\end{align}
\end{proof}

Given that a state $\rho$ satisfies $[\rho, \Lambda_{ab}] = 0$, commutation with pair-parity operators $Q_{ab}$ and permutations $\Pi \in S_k$ automatically holds by the following lemma:

\begin{lemma} \label{lem:parity-permutation-commutation}
If a density operator $\rho$, supported on $\Hcal^{\otimes k}$, satisfies $[\rho, \Lambda_{ab}] = 0$ for all $1 \leq a < b \leq k$, then it also satisfies $[\rho, Q_{ab}] = 0$ and $[\rho, \Pi] = 0$ for all $1 \leq a < b \leq k$ and $\Pi \in S_k$.
\end{lemma}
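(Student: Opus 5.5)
The plan is to show that each $Q_{ab}$ and each transposition $(a\,b)\in S_k$ is a \emph{function of the single operator} $\Lambda_{ab}$, in the sense of the spectral calculus. Anything that commutes with $\Lambda_{ab}$ then commutes with them. Since transpositions generate $S_k$, commutation with every $\Pi\in S_k$ follows.

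\textbf{Step 1 (commuting involutions).} Fix $a<b$ and set $A_\mu:=\gamma^{(a)}_\mu\gamma^{(b)}_\mu$ for $\mu=1,\dots,2n$.
\begin{itemize}
\item In the ungraded convention each $A_\mu$ is Hermitian and satisfies $A_\mu^2=\mathbb{1}$.
\item For $\mu\neq\nu$, moving $A_\nu$ past $A_\mu$ produces two anticommutation signs, one inside replica $a$ and one inside replica $b$. Hence $[A_\mu,A_\nu]=0$.
\end{itemize}
So $\Lambda_{ab}=\sum_\mu A_\mu$ is a sum of $2n$ commuting $\pm1$-valued observables. On a joint eigenspace with $s$ eigenvalues equal to $-1$, $\Lambda_{ab}$ has eigenvalue $2n-2s$. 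Consequently any operator $F(A_1,\dots,A_{2n})$ with $F$ a \emph{symmetric} function depends only on $s$ on each joint eigenspace. Since distinct $s$ give distinct eigenvalues of $\Lambda_{ab}$, such an $F$ equals $g(\Lambda_{ab})$ for some function $g$ on $\operatorname{spec}(\Lambda_{ab})$. By Lagrange interpolation, $g(\Lambda_{ab})$ is a polynomial in $\Lambda_{ab}$ and therefore commutes with $\rho$.

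\textbf{Step 2 ($Q_{ab}$).} Reordering gives $\prod_\mu A_\mu=\pm\big(\prod_\mu\gamma^{(a)}_\mu\big)\big(\prod_\mu\gamma^{(b)}_\mu\big)=\pm\, i^{2n}\,\Gamma^{(a)}\Gamma^{(b)}=\pm Q_{ab}$. The product of all $A_\mu$ is symmetric, and on the $s$-eigenspace it equals $(-1)^{s}=(-1)^{(2n-\Lambda_{ab})/2}$. Thus $Q_{ab}=\pm(-1)^{(2n-\Lambda_{ab})/2}$ is a polynomial in $\Lambda_{ab}$, and $[\rho,Q_{ab}]=0$.

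\textbf{Step 3 (transpositions).} Let $\mathrm{SWAP}_{ab}$ denote the swap of replicas $a$ and $b$.
\begin{itemize}
\item Expand it in an operator basis: $\mathrm{SWAP}_{ab}=2^{-n}\sum_{P}P^{(a)}(P^{\dagger})^{(b)}$, with $P$ ranging over an orthogonal operator basis of $\Hcal$ normalised by $\Tr(P^\dagger P)=2^n$.
\item Choose the Majorana monomials $\gamma_S=\prod_{\mu\in S}\gamma_\mu$ (ordered), for $S\subseteq\{1,\dots,2n\}$, as this basis.
\item Then $\gamma^{(a)}_S(\gamma^\dagger_S)^{(b)}$ equals a sign times $\prod_{\mu\in S}A_\mu$, obtained by reordering within each replica.
\item The sign depends only on $|S|$: it is a combination of $(-1)^{|S|(|S|-1)/2}$ from $\gamma_S^\dagger$ and of the reordering sign, which also depends only on $|S|$.
\end{itemize}
Hence $\mathrm{SWAP}_{ab}=2^{-n}\sum_S\varepsilon_{|S|}\prod_{\mu\in S}A_\mu=2^{-n}\sum_{j}\varepsilon_j\,e_j(A_1,\dots,A_{2n})$, where $e_j$ is the $j$-th elementary symmetric polynomial. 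By Step 1 this is a polynomial in $\Lambda_{ab}$, so $[\rho,\mathrm{SWAP}_{ab}]=0$, and commutation with all of $S_k$ follows.

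\textbf{Main obstacle.} The delicate point is the sign bookkeeping in Step 3: verifying that $\varepsilon_S$ truly depends only on $|S|$ and not on the particular subset. I would check this by an explicit reordering count. I would sanity-check the result against $n=1$, where $\mathrm{SWAP}=\tfrac12(\mathbb{1}+XX+YY+ZZ)$ and $ZZ=-A_1A_2$. The exact values of $\varepsilon_j$ are irrelevant for the conclusion; only their dependence on $|S|$ alone matters.
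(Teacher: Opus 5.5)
Your proposal is correct and follows essentially the same route as the paper: both write $Q_{ab}$ as the product of the commuting involutions $\gamma^{(a)}_\mu\gamma^{(b)}_\mu$, and both expand the transposition in the Majorana-monomial basis with a sign $(-1)^{|S|(|S|-1)/2}$ that depends only on $|S|$ (in the ungraded convention there is in fact no extra reordering sign). The only difference is the last step: the paper writes these symmetric sums in closed form, $Q_{ab}=e^{i\pi\Lambda_{ab}/2}$ and $\Pi_{ab}=\cos(\pi\Lambda_{ab}/4)+\sin(\pi\Lambda_{ab}/4)$, whereas your joint-eigenspace argument shows that any symmetric polynomial in the $A_\mu$ is a polynomial in $\Lambda_{ab}$, which avoids the sign identity but gives no explicit formula.
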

\begin{proof}
    To show both claims, we express $Q_{ab}$ and $\Pi$ in terms of the bridge operators $\Lambda_{ab}$. First, we show that $Q_{ab} = e^{i \pi \Lambda_{ab} / 2}$. Since all of its terms $\gamma^{(a)}_\mu \gamma^{(b)}_\mu$ commute with each other, we can write:
    \begin{align}
        e^{\frac{i \pi}{2} \Lambda_{ab}} &= e^{\frac{i\pi}{2}\sum_{\mu=1}^{2n} \gamma^{(a)}_\mu \gamma^{(b)}_\mu} \\
        &= \prod_{\mu=1}^{2n} e^{\frac{i\pi}{2} \gamma^{(a)}_\mu \gamma^{(b)}_\mu} \\
        &= \prod_{\mu=1}^{2n} (i \gamma^{(a) }_\mu \gamma^{(b)}_\mu) \\
        &= \bigl( (-i)^n \prod_{\mu=1}^{2n} \gamma^{(a) }_\mu \bigr) \bigl( (-i)^n \prod_{\mu=1}^{2n} \gamma^{(b) }_\mu \bigr) = Q_{ab}.
    \end{align}
    Next, we express permutations $\Pi \in S_k$ in terms of bridge operators. Since transpositions generate the symmetric group $S_k$, it suffices to show that any transposition $\Pi_{ab}$ can be expressed in terms of bridge operators. For any complete basis of $d^2$ unitaries acting on $\Hcal$ orthonormal under the Hilbert--Schmidt inner product $d^{-1} \Tr(A^\dagger B)$, we may write the transposition $\Pi_{ab}$ as:
    \begin{equation}
        \Pi_{ab} = \frac{1}{d} \sum_A A^{(a)} (A^\dagger)^{(b)}.
    \end{equation}
    For ordered subsets $S = \{\mu_1, \mu_2, \dots, \mu_{r}\} \subseteq [2n]$, where $1 \leq \mu_1 < \dots < \mu_{r} \leq 2n$ and $r = |S|$, we may construct such a basis $\{ \gamma^{(a)}_S \}$ for $\Hcal$, where:
    \begin{align}
        \gamma^{(a)}_S &= \gamma^{(a)}_{\mu_1} \gamma^{(a)}_{\mu_2} \cdots \gamma^{(a)}_{\mu_{r}}, & (\gamma^{(a)}_S)^\dagger &= (-1)^{\frac{1}{2}r(r-1)} \gamma^{(a)}_S.
    \end{align}
    It follows that $\Pi_{ab}$ can be expressed as a linear combination of products of Majorana operators:
    \begin{align} 
        \Pi_{ab} &= \frac{1}{2^n} \sum_S (-1)^{\frac{1}{2}|S|(|S|-1)} \gamma^{(a)}_S \gamma^{(b)}_S \\
        &= \frac{1}{2^n} \sum_{r=0}^{2n} (-1)^{\frac{1}{2}r(r-1)} \sum_{|S|=r} \gamma^{(a)}_S \gamma^{(b)}_S. \label{eq:transposition-majorana-sum}
    \end{align}
    Using the elementary identity:
    \begin{equation}
    (-1)^{\frac{1}{2}r(r-1)} = \frac{1-i}{2}i^r + \frac{1+i}{2}(-i)^r,
    \end{equation}
    Equation \eqref{eq:transposition-majorana-sum} can be rewritten as:
    \begin{align}
        \Pi_{ab} &= \frac{1-i}{2^{n+1}} \sum_{r=0}^{2n} i^r \sum_{|S|=r} \gamma^{(a)}_S \gamma^{(b)}_S  \\
        & + \frac{1+i}{2^{n+1}} \sum_{r=0}^{2n} (-i)^r \sum_{|S|=r} \gamma^{(a)}_S \gamma^{(b)}_S.
    \end{align}
    Now, write out $e^{i \pi \Lambda_{ab} / 4}$ explicitly, again using the fact that all terms $\gamma^{(a)}_\mu \gamma^{(b)}_\mu$ in the sum commute:
    \begin{align}
        e^{\frac{i \pi}{4} \Lambda_{ab}} &= \prod_{\mu=1}^{2n} e^{\frac{i \pi}{4} \gamma^{(a)}_\mu \gamma^{(b)}_\mu} \\
        &= \prod_{\mu=1}^{2n} \frac{1}{\sqrt{2}} (\mathbb{1} + i \gamma^{(a)}_\mu \gamma^{(b)}_\mu) \\
        &= \frac{1}{2^n} \sum_{r=0}^{2n} i^r \sum_{|S|=r} \gamma^{(a)} _S \gamma^{(b)}_S.
    \end{align}
    It follows that:
    \begin{align}
        \Pi_{ab} &= \frac{1-i}{2} e^{\frac{i \pi}{4} \Lambda_{ab}} + \frac{1+i}{2} e^{-\frac{i \pi}{4} \Lambda_{ab}} \\
        & = \cos\left( \frac{\pi}{4} \Lambda_{ab} \right) + \sin\left( \frac{\pi}{4} \Lambda_{ab} \right).
    \end{align}
    Thus, $\Pi_{ab}$ is a function of the bridge operator $\Lambda_{ab}$. Consequently, if $[\rho, \Lambda_{ab}] = 0$, then $[\rho, \Pi_{ab}] = 0$. Since transpositions generate $S_k$, commutation with every bridge operator implies that $[\rho, \Pi] = 0$ for every $\Pi \in S_k$.  
\end{proof}

Now we proceed to prove Theorem \ref{thm:gaussian-invariant}.

\begin{proof}[Proof of Theorem \ref{thm:gaussian-invariant}] 
First, we show that $(2) \Rightarrow (1)$. We assume that $\bar{\rho}$ is a density operator supported on $\GSym^k(\Hcal \otimes \Kcal)$, so that $\bar{\Lambda}_{ab} \bar{\rho} = 0$ for all $1 \leq a < b \leq k$. Since $\bar{\rho}$ and $\bar{\Lambda}_{ab}$ are both Hermitian, it further follows that $\bar{\rho} \bar{\Lambda}_{ab} = 0$. 

Without loss of generality, we will use the Jordan--Wigner representation of the Majorana operators $\bar{\gamma}^{(a)}_\mu$ on the $a$-th replica of $\Hcal \otimes \Kcal$. This space contains $2n$ qubits, so that there are $4n$ Majorana operators in total, plus the parity operator: 
\begin{equation}\bar{\Gamma}^{(a)} = (-i)^{2n} \prod_{\mu = 1}^{4n} \bar{\gamma}^{(a)}_\mu = \prod_{j=1}^{2n} Z^{(a)}_j.
\end{equation}
We begin by writing the parity operator $\bar{\Gamma}^{(a)}$ in terms of separate parity operators $\Gamma^{(a)}_\Hcal$ and $\Gamma^{(a)}_\Kcal$ acting on the $\Hcal$ and $\Kcal$ subsystems of the $a$-th replica, respectively:
\begin{equation}
\bar{\Gamma}^{(a)} = \Gamma^{(a)}_\Hcal \otimes \Gamma^{(a)}_\Kcal 
= \left( \prod_{j=1}^{n} Z^{(a)}_j \right) \otimes \left( \prod_{j=n+1}^{2n} Z^{(a)}_j \right).
\end{equation}
This lets us write the Majorana operators $\bar{\gamma}^{(a)}_\mu$ in terms of $2n$ Majorana operators $\gamma^{(a)}_\mu$ as follows:
\begin{equation}
\bar{\gamma}^{(a)}_\mu = \begin{cases}
\gamma^{(a)}_\mu \otimes \mathbb{1}^{(a)}_\Kcal, & \text{for } 1 \leq \mu \leq 2n, \\
\Gamma^{(a)}_\Hcal \otimes \gamma^{(a)}_{\mu - 2n}, & \text{for } 2n + 1 \leq \mu \leq 4n.
\end{cases}
\end{equation}
We decompose the Hilbert space as $(\Hcal \otimes \Kcal)^{\otimes k} \cong \Hcal^{\otimes k} \otimes \Kcal^{\otimes k}$, which is a simple permutation of the tensor factors. Then, we define the pair-parity operator $\bar{Q}_{ab}$, along with its $\Hcal$ and $\Kcal$ components $Q_{ab}^\Hcal$ and $Q_{ab}^\Kcal$:
\begin{equation}
\bar{Q}_{ab} = \bar{\Gamma}^{(a)} \bar{\Gamma}^{(b)} = \left( \Gamma^{(a)}_\Hcal \Gamma^{(b)}_\Hcal \right) \otimes \left( \Gamma^{(a)}_\Kcal \Gamma^{(b)}_\Kcal \right) = Q_{ab}^\Hcal \otimes Q_{ab}^\Kcal.
\end{equation}
This lets us write the bridge operators $\bar{\Lambda}_{ab}$ in terms of the bridge operators $\Lambda_{ab}$ acting on the $a$-th and $b$-th replicas of the $\Hcal^{\otimes k}$ and $\Kcal^{\otimes k}$ subsystems, respectively:
\begin{equation}
\bar{\Lambda}_{ab} = \sum_{\mu = 1}^{4n} \bar{\gamma}^{(a)}_\mu \bar{\gamma}^{(b)}_\mu = \Lambda^\Hcal_{ab} \otimes \mathbb{1}_{ab}^\Kcal + Q_{ab}^\Hcal \otimes \Lambda^\Kcal_{ab}.
\end{equation}
The parity operators satisfy several key properties:
\begin{itemize}
    \item As they act on different subsystems, all operators from $\{\Gamma^{(a)}_\Hcal, \Gamma^{(b)}_\Hcal, \Gamma^{(a)}_\Kcal, \Gamma^{(b)}_\Kcal\}$ commute with each other.
    \item The parity and pair-parity operators are all Hermitian and square to the identity.
    \item The parity operator $\Gamma^{(a)}_\Hcal$ and $\Gamma^{(b)}_\Hcal$ anticommute with the bridge operator $\Lambda_{ab}^\Hcal$, and similarly for $\Kcal$. This implies that the pair-parity operators satisfy:
    \begin{equation}
    [Q_{ab}^\Hcal, \Lambda_{ab}^\Hcal] = 0, \quad [Q_{ab}^\Kcal, \Lambda_{ab}^\Kcal] = 0.
    \end{equation} 
    \item Each vector $\bar{U} |\mathbf{0} \rangle$ has a definite parity $\pm 1$. Since $\GSym^k(\Hcal \otimes \Kcal)$ is spanned by states $(\bar{U} |\mathbf{0} \rangle)^{\otimes k}$ for $\bar{U} \in \mathfrak{M}_{2n}$, and $\bar{Q}_{ab}$ and $\bar{\rho}$ are both Hermitian, it follows that:
    \begin{equation} \label{eq:pair-parity-invariance}
    \bar{Q}_{ab} \bar{\rho} = \bar{\rho}, \quad \bar{\rho} \bar{Q}_{ab} = \bar{\rho}.
    \end{equation}
\end{itemize}
Writing out \eqref{eq:pair-parity-invariance} in full, we have that $(Q_{ab}^\Hcal \otimes Q_{ab}^\Kcal) \bar{\rho} = \bar{\rho}$. Multiplication of both sides by $(\mathbb{1}_{ab}^\Hcal \otimes Q_{ab}^\Kcal)$ yields the following shuffle identity:
\begin{equation}
    (Q_{ab}^\Hcal \otimes \mathbb{1}_{ab}^\Kcal) \bar{\rho} = (\mathbb{1}_{ab}^\Hcal \otimes Q_{ab}^\Kcal) \bar{\rho},
\end{equation}
and similarly for the left multiplication by $\bar{\rho}$. This lets us show that $[\rho, \Lambda_{ab}] = 0$ for all $1 \leq a < b \leq k$. Using the decomposition of the bridge operators $\bar{\Lambda}_{ab}$, we have:
\begin{align}
[\rho, \Lambda_{ab}] &= [\Tr_{\Kcal^{\otimes k}}(\bar{\rho}), \Lambda_{ab}] \\
&= \Tr_{\Kcal^{\otimes k}}([\bar{\rho}, \Lambda_{ab}^\Hcal \otimes \mathbb{1}_{ab}^\Kcal]) \\
& = \Tr_{\Kcal^{\otimes k}}( [\bar{\rho}, \bar{\Lambda}_{ab}] - [\bar{\rho}, Q_{ab}^\Hcal \otimes \Lambda_{ab}^\Kcal] ) \\
&= \Tr_{\Kcal^{\otimes k}}( (Q_{ab}^\Hcal \otimes \Lambda_{ab}^\Kcal)\bar{\rho} - \bar{\rho} (Q_{ab}^\Hcal \otimes \Lambda_{ab}^\Kcal) ),
\end{align}
where we have used the fact that $[\bar{\rho}, \bar{\Lambda}_{ab}] = 0$. Finally, using the shuffle identity, $[Q_{ab}^\Kcal, \Lambda_{ab}^\Kcal] = 0$, Lemma \ref{lem:partial-trace-cyclicity} and linearity of the trace we obtain:
\begin{align}
[\rho, \Lambda_{ab}] &= \Tr_{\Kcal^{\otimes k}}( (\mathbb{1}_{ab}^\Hcal \otimes \Lambda_{ab}^\Kcal Q_{ab}^\Kcal)\bar{\rho} - \bar{\rho} (\mathbb{1}_{ab}^\Hcal \otimes Q_{ab}^\Kcal \Lambda_{ab}^\Kcal) ) \\
&= \Tr_{\Kcal^{\otimes k}}( (\mathbb{1}_{ab}^\Hcal \otimes  Q_{ab}^\Kcal \Lambda_{ab}^\Kcal)\bar{\rho} - \bar{\rho} (\mathbb{1}_{ab}^\Hcal \otimes Q_{ab}^\Kcal \Lambda_{ab}^\Kcal) ) \\
&= \Tr_{\Kcal^{\otimes k}}( (\mathbb{1}_{ab}^\Hcal \otimes  Q_{ab}^\Kcal \Lambda_{ab}^\Kcal)\bar{\rho} - (\mathbb{1}_{ab}^\Hcal \otimes Q_{ab}^\Kcal \Lambda_{ab}^\Kcal) \bar{\rho} ) \\
&= 0.
\end{align}
By Lemma \ref{lem:parity-permutation-commutation}, it follows that $[\rho, Q_{ab}] = 0$ and $[\rho, \Pi] = 0$ for all $1 \leq a < b \leq k$ and $\Pi \in S_k$, which completes the proof that $(2) \Rightarrow (1)$.

We now prove that $(1) \Rightarrow (3)$. We only assume that $[\rho, \Lambda_{ab}] = 0$. Let $| \Omega \rangle \in \Hcal \otimes \Kcal$, where $\Kcal \cong \Hcal$, be a maximally entangled state in the computational basis:
\begin{equation}
    |\Omega \rangle = \frac{1}{\sqrt{2^n}} \sum_{\mathbf{x} \in \{0, 1\}^n} |\mathbf{x}\rangle_\Hcal |\mathbf{x}\rangle_\Kcal.
\end{equation}
Furthermore, define the $n$-qubit reversal permutation $\tilde{\Pi}$, which acts on $\Hcal$ as $\tilde{\Pi} |x_1 x_2 \cdots x_n \rangle = |x_n x_{n-1} \cdots x_1 \rangle$ (and similarly on $\Kcal$). Clearly, $\tilde{\Pi}^T = \tilde{\Pi}^{-1} = \tilde{\Pi}$, which lets us define the partially-reversed maximally entangled state:
\begin{equation}
|\tilde{\Omega} \rangle = (\tilde{\Pi} \otimes \mathbb{1}) | \Omega \rangle = (\mathbb{1} \otimes \tilde{\Pi}) | \Omega \rangle.
\end{equation} 
For any matrix $M$ acting on one half of this state, we have the following identity:
\begin{equation}
(\mathbb{1} \otimes M) | \tilde{\Omega} \rangle = ((M \tilde{\Pi})^T \otimes \mathbb{1}) | \Omega \rangle = (\tilde{\Pi} M^T \tilde{\Pi} \otimes \mathbb{1}) | \tilde{\Omega} \rangle.
\end{equation}
By standard results, we know that the density operator $\rho$ supported on $\Hcal^{\otimes k}$ admits a purification $|\bar{\rho} \rangle \in (\Hcal \otimes \Kcal)^{\otimes k} \cong \Hcal^{\otimes k} \otimes \Kcal^{\otimes k}$, given by:
\begin{equation}
|\bar{\rho} \rangle \propto (\sqrt{\rho}_{\Hcal^{\otimes k}} \otimes \mathbb{1}_\Kcal^{\otimes k}) | \tilde{\Omega} \rangle^{\otimes k}. 
\end{equation}
All that remains to be shown is that $|\bar{\rho} \rangle$ is Gaussian-symmetric, i.e. $\bar{\Lambda}_{ab} |\bar{\rho} \rangle = 0$ for all $a < b$. Since $[\rho, \Lambda_{ab}] = 0$, it follows that $[\sqrt{\rho}, \Lambda_{ab}] = 0$, and similarly $[\sqrt{\rho}, Q_{ab}] = 0$ by Lemma \ref{lem:parity-permutation-commutation}. Using the decomposition of the bridge operators $\bar{\Lambda}_{ab}$, we have:
\begin{align}
\bar{\Lambda}_{ab} |\bar{\rho} \rangle &\propto (\Lambda^\Hcal_{ab} \otimes \mathbb{1}_{ab}^\Kcal + Q_{ab}^\Hcal \otimes \Lambda^\Kcal_{ab}) (\sqrt{\rho}_{\Hcal^{\otimes k}} \otimes \mathbb{1}_\Kcal^{\otimes k}) | \tilde{\Omega} \rangle^{\otimes k} \\
&\propto (\sqrt{\rho}_{\Hcal^{\otimes k}} \otimes \mathbb{1}_\Kcal^{\otimes k}) (\Lambda^\Hcal_{ab} \otimes \mathbb{1}_{ab}^\Kcal + Q_{ab}^\Hcal \otimes \Lambda^\Kcal_{ab}) | \tilde{\Omega} \rangle^{\otimes k}.
\end{align}
Therefore, we need to show that $\bar{\Lambda}_{ab} | \tilde{\Omega} \rangle^{\otimes k} = 0$. Since $\bar{\Lambda}_{ab}$ acts trivially on all replicas except for the $a$-th and $b$-th, we may restrict our attention to the two-replica space $(\Hcal \otimes \Kcal)^{\otimes 2}$. We consider the second term:
\begin{equation}
(Q_{\Hcal_a \Hcal_b} \otimes \Lambda_{\Kcal_a \Kcal_b}) | \tilde{\Omega} \rangle_{\Hcal_a \Kcal_a}| \tilde{\Omega} \rangle_{\Hcal_b \Kcal_b}.
\end{equation}
Alternatively, we may write it as:
\begin{align}
    &\left(\sum_{\mu = 2n+1}^{4n} \bar{\gamma}^{(a)}_\mu \bar{\gamma}^{(b)}_\mu \right) | \tilde{\Omega} \rangle_{\Hcal_a \Kcal_a} | \tilde{\Omega} \rangle_{\Hcal_b \Kcal_b} \\
    &=  \sum_{\mu = 1}^{2n} \left( (\Gamma^{(a)} \otimes \gamma^{(a)}_{\mu}) \otimes (\Gamma^{(b)} \otimes \gamma_\mu^{(b)}) \right) | \tilde{\Omega} \rangle_{\Hcal_a \Kcal_a} | \tilde{\Omega} \rangle_{\Hcal_b \Kcal_b} \\
    &= \sum_{\mu = 1}^{2n} \left( \Gamma^{(a)} \otimes \gamma^{(a)}_{\mu} | \tilde{\Omega} \rangle_{\Hcal_a \Kcal_a} \right) \otimes \left( \Gamma^{(b)} \otimes \gamma_\mu^{(b)} | \tilde{\Omega} \rangle_{\Hcal_b \Kcal_b} \right).
\end{align}
From the properties of $| \tilde{\Omega} \rangle$, for both replicas we may bring the Majorana operators $\gamma_\mu$ to the other side of the tensor product:
\begin{equation}
    (\Gamma \otimes \gamma_\mu) | \tilde{\Omega} \rangle = (\Gamma \tilde{\Pi} \gamma_\mu^T \tilde{\Pi} \otimes \mathbb{1}) | \tilde{\Omega} \rangle.
\end{equation}
Working explicitly in the Jordan--Wigner representation, we have that $\gamma_\mu^T = (-1)^{\mu + 1} \gamma_\mu$. Now consider the action of the reversal permutation $\tilde{\Pi}$ on the Majorana operators $\gamma_\mu$. In this representation, the order of the qubits is reversed into the following form:
\begin{align}
    &\tilde{\Pi} \gamma_1 \tilde{\Pi} = \mathbb{1}^{\otimes n-1} \otimes X = i (Z^{\otimes n}) (\gamma_{2n})\\
    &\tilde{\Pi}\gamma_2 \tilde{\Pi} = \mathbb{1}^{\otimes n-1} \otimes Y = -i (Z^{\otimes n}) (\gamma_{2n-1})\\
    & \quad \vdots \\
    &\tilde{\Pi} \gamma_{2n-1} \tilde{\Pi} = X \otimes Z^{\otimes n-1} = i (Z^{\otimes n}) (\gamma_{2})\\
    &\tilde{\Pi} \gamma_{2n} \tilde{\Pi} = Y \otimes Z^{\otimes n-1} = -i (Z^{\otimes n}) (\gamma_{1}).
\end{align}
Using the fact that $\Gamma = Z^{\otimes n}$, this lets us write: $\tilde{\Pi} \gamma_\mu \tilde{\Pi} = i(-1)^{\mu + 1} \Gamma \gamma_{2n + 1 - \mu}$. Putting everything together, we have that:
\begin{align}
    (\Gamma \otimes \gamma_\mu) | \tilde{\Omega} \rangle &= (\Gamma \tilde{\Pi} \gamma_\mu^T \tilde{\Pi} \otimes \mathbb{1}) | \tilde{\Omega} \rangle \\
    &= (-1)^{\mu + 1} (\Gamma \tilde{\Pi} \gamma_\mu \tilde{\Pi} \otimes \mathbb{1}) | \tilde{\Omega} \rangle \\
    &= i (-1)^{2\mu + 2} (\Gamma^2 \gamma_{2n + 1 - \mu} \otimes \mathbb{1}) | \tilde{\Omega} \rangle \\
    &= i (\gamma_{2n + 1 - \mu} \otimes \mathbb{1}) | \tilde{\Omega} \rangle.
\end{align}
Therefore, the second term of $\bar{\Lambda}_{ab} | \tilde{\Omega} \rangle^{\otimes k}$ becomes:
\begin{align}
    &\sum_{\mu = 1}^{2n} \left( \Gamma^{(a)} \otimes \gamma^{(a)}_{\mu} | \tilde{\Omega} \rangle_{\Hcal_a \Kcal_a} \right) \otimes \left( \Gamma^{(b)} \otimes \gamma_\mu^{(b)} | \tilde{\Omega} \rangle_{\Hcal_b \Kcal_b} \right) \\
    &= \sum_{\nu = 1}^{2n} \left( i \gamma_\nu^{(a)} \otimes \mathbb{1}^{(a)} | \tilde{\Omega} \rangle_{\Hcal_a \Kcal_a} \right) \otimes \left( i \gamma_\nu^{(b)} \otimes \mathbb{1}^{(b)}  | \tilde{\Omega} \rangle_{\Hcal_b \Kcal_b} \right) \\
    &= - ( \Lambda_{\Hcal_a \Hcal_b} \otimes \mathbb{1}_{\Kcal_a \Kcal_b}) | \tilde{\Omega}  \rangle_{\Hcal_a \Kcal_a}| \tilde{\Omega}  \rangle_{\Hcal_b \Kcal_b},
\end{align}
where we have re-indexed the sum as $\nu = 2n + 1 - \mu$. Overall, the first and second terms of $\bar{\Lambda}_{ab} | \tilde{\Omega} \rangle^{\otimes k}$ cancel:
\begin{align}
&\bar{\Lambda}_{ab} | \tilde{\Omega}  \rangle_{\Hcal_a \Kcal_a}| \tilde{\Omega}  \rangle_{\Hcal_b \Kcal_b} \\
&= (\Lambda_{\Hcal_a \Hcal_b} [\mathbb{1}_{\Hcal_a \Hcal_b} - \mathbb{1}_{\Hcal_a \Hcal_b}] \otimes \mathbb{1}_{\Kcal_a \Kcal_b}) | \tilde{\Omega} \rangle_{\Hcal_a \Kcal_a}| \tilde{\Omega} \rangle_{\Hcal_b \Kcal_b} \\
&= 0.
\end{align}
Therefore, it follows that $\bar{\Lambda}_{ab} | \tilde{\Omega} \rangle^{\otimes k} = 0$, which implies that $\bar{\Lambda}_{ab} |\bar{\rho} \rangle = 0$ for all $1 \leq a < b \leq k$. Hence, $|\bar{\rho} \rangle$ is a purification of $\rho$ in $\GSym^k(\Hcal \otimes \Kcal)$, completing the proof that $(1) \Rightarrow (3)$. 

Finally, $(3) \Rightarrow (2)$ follows by setting $\rho = \Tr_{\Kcal^{\otimes k}}(|\bar{\rho} \rangle \langle \bar{\rho}|)$ for any $|\bar{\rho} \rangle \in \GSym^k(\Hcal \otimes \Kcal)$.
\end{proof}

\bibliography{bibliography.bib}

\end{document}